\PassOptionsToPackage{twoside=false}{geometry}
\documentclass[nonacm,natbib=false, screen, manuscript,options]{acmart}
\let\tilde\widetilde
\usepackage{amsthm}
\usepackage{bm}
\usepackage{amsmath}
\usepackage{mathtools}  
\usepackage{cleveref}
\usepackage{listings}
\usepackage{soul}

\usepackage{nicefrac}
\usepackage{mathtools}  
\usepackage{amsfonts}  
\usepackage{dsfont}
\usepackage{xspace}
\usepackage{subfiles}
\usepackage{mdframed}
\usepackage[skins,most]{tcolorbox}
\usepackage{framed}

\usepackage{tikz}
\usepackage{pgfplots}
\pgfplotsset{width=9cm,compat=1.8}
\usetikzlibrary{matrix,positioning, arrows}
\usetikzlibrary{decorations.pathreplacing}
\usetikzlibrary{decorations.pathmorphing}
\tikzset{snake it/.style={decorate, decoration=snake, thick}}
\usetikzlibrary{arrows,backgrounds,calc,trees}
\pgfdeclarelayer{background}
\pgfsetlayers{background,main}

\usepackage{booktabs} 
\usepackage{tabulary}
\usepackage{threeparttable}
\usepackage{etoolbox}
\usepackage{enumitem}

\usepackage{mdframed}
\mdfdefinestyle{theoremstyle}{
    roundcorner=5pt,
    linecolor=black,
    linewidth=1pt,
    backgroundcolor=gray!10,
    innertopmargin=0pt,
    innerbottommargin=6pt,
    leftmargin=1pt,
    rightmargin=1pt
}

\usepackage{float}
\floatname{algoflt}{Protocol}
\floatstyle{ruled}
\newfloat{algoflt}{tbp}{loa}
\makeatletter
\lst@AddToHook{OnEmptyLine}{\addtocounter{lstnumber}{-1}\vspace{-0.5\baselineskip}}

\lstnewenvironment{algorithm}[2][H]{%
\lstset{%
    mathescape=true,%
    tabsize=2,%
    numbers=none,%
    numberstyle=\tiny\dcmnumbersty,%
    numberblanklines=false,%
    showlines=false,%
    basicstyle=\ttfamily\footnotesize,%
    backgroundcolor={},%
    columns=fullflexible,%
    emph={%
        [1]if, else, then, and, or, for, do, while, return, exit, not, output, initialize, var %
        },
    emphstyle={%
        [1]\bfseries%
        },%
    escapeinside={/*}{*/},
    escapebegin={\ifmmode\else\hfill\color{gray}\small$\triangleright$ \fi},
    escapeend={\ifmmode\else\fi},
    }%
\setlength{\topsep}{0pt}%
\setlength{\parindent}{0pt}%
\@skiphyperreftrue%
\algoflt[#1]\caption{#2}%
}{\endalgoflt\@skiphyperreffalse}
\makeatother

\newcommand{\1}{{\mathds{1}}}

\newcommand{\vx}{\Vec{x}}
\newcommand{\vX}[1]{{\X{#1}}}

\DeclarePairedDelimiter\abs{\lvert}{\rvert}

\DeclarePairedDelimiterX{\inp}[2]{\langle}{\rangle}{#1, #2}

\newcommand{\xmax}[1]{\ensuremath{{x}_{\mathsf{max}}}}
\newcommand{\xmin}[1]{\ensuremath{{x}_{\mathsf{min}}}}

\newcommand{\E}[1]{\ensuremath{{\textbf{E}\left[#1\right]}}}

\newcommand{\pr}[1]{\ensuremath{\textbf{Pr}\left[#1\right]}}

\newcommand{\x}[1]{\ensuremath{\vec{{x}}({#1})}}

\newcommand{\X}[1]{\ensuremath{\vec{{X}}\left({#1}\right)}}
\newcommand{\disc}[1]{\ensuremath{\mathbf{disc}}\left(#1\right)}

\newcommand{\vol}[1]{\ensuremath{{\rm  vol}\!\left(#1\right)}} 

\newcommand{\dmin}{\ensuremath{{d}_{\mathsf{min}}(G)}}

\renewcommand{\max}[1]{\ensuremath{\underset{#1}{\text{max}}}\xspace}

\DeclareMathOperator{\sign}{sign}

\usepackage{xcolor}
\usepackage{color}

\newcommand{\ignore}[1]{ }
\newcommand{\ovl}[1]{\overline{#1}}

\newcommand{\protocol}[1]{\textsc{#1}\xspace}

\usepackage{thm-restate}
\usepackage{lipsum}
\usepackage{wrapfig}

\declaretheorem[name=Claim]{claim}
\declaretheorem[name=Observation]{observation}

\newcommand{\ProtAV}{\textsc{Asynchro\-nous-DIV}\xspace} 

\title{Discrete Incremental Voting}
\subtitle{New Bounds for General Graphs and Expanders}
 
\ccsdesc[500]{Mathematics of computing~Probability and statistics}
\keywords{Voting, Asynchronous Processes, Load Balancing}
\allowdisplaybreaks
\copyrightyear{2026}
\acmYear{2026}
\setcopyright{cc}
\setcctype{by}
\acmConference[SPAA '26]{38th ACM Symposium on Parallelism in Algorithms and Architectures}{July 06--10, 2026}{London, United Kingdom}
\acmBooktitle{38th ACM Symposium on Parallelism in Algorithms and Architectures (SPAA '26), July 06--10, 2026, London, United Kingdom}
\acmDOI{10.1145/3816782.3819227}
\acmISBN{979-8-4007-2761-0/2026/07}

\author{Petra Berenbrink}
\orcid{0000-0002-6930-3259}
\affiliation{%
  \institution{University of Hamburg}
  \city{ Hamburg}
  \country{Germany}
}
\email{petra.berenbrink@uni-hamburg.de}

\author{Colin Cooper}
\orcid{0000-0002-5264-4401}
\affiliation{%
  \institution{King's College London}
  \city{London}
  \country{United Kingdom}
}
\email{colin.cooper@kcl.ac.uk}

\author{Thorsten G\"otte}
\orcid{0000-0001-9798-6993}
\affiliation{%
  \institution{University of Hamburg}
  \city{ Hamburg}
  \country{Germany}
}
\email{thorsten.goette@uni-hamburg.de }

\author{Lukas Hintze}
\orcid{0009-0006-8348-4638}
\affiliation{%
  \institution{University of Hamburg}
  \city{ Hamburg}
  \country{Germany}
}
\email{lukas.rasmus.hintze@uni-hamburg.de}

\author{Tomasz Radzik}
\orcid{0000-0002-7776-5461}
\affiliation{%
  \institution{King's College London}
  \city{London}
  \country{United Kingdom}
}
\email{tomasz.radzik@kcl.ac.uk}

\begin{document}

\begin{abstract}

We analyze the discrete incremental voting process (DIV)
introduced by Cooper, Radzik, and Shiraga [OPODIS '23].
In this process, we consider a set $V$ of $n$ nodes connected in an
undirected graph $G = (V, E)$
where each node has an integer opinion. 
In one step a randomly selected node interacts with its randomly selected neighbor
and changes its opinion by $1$ in the direction of the neighbour's opinion.
The process converges to a unique opinion that, in expectation, is the degree-weighted average of the initial opinions.

We show that if the graph has conductance $\Phi(G)$, 
the ratio of the average to smallest degree is $\gamma(G)$,
and the maximal difference between initial opinions is~$K$, 
then the expected convergence time is 
${O}\left({n\left(K\log (Kn)+\gamma(G) n  \right)}/{\Phi(G)^2}\right)$. 
This bound is  essentially optimal for a large
    class of graphs of bounded expansion.
We also show that for regular graphs, if the second largest eigenvalue is $o(1/\log^2 n)$ and
 $K$ is $o\left({n}/{\log^2 n}\right)$, then w.h.p.\ DIV converges to the initial average opinion (rounded up or down).
\end{abstract}

\maketitle

\section{Introduction}

A voting process operates on an undirected connected graph $G = (V,E)$ with $n$ nodes representing agents, 
each holding its opinion, and $m$ edges.
The agents interact to reach an agreement on a single opinion.
Such processes arise in a wide range of settings, including distributed and parallel computing, as well as
the social sciences.
In parallel and distributed systems, voting serves as a fundamental primitive for various tasks, e.g.,
for the leader-election task where the system must agree on a unique coordinator.
At the same time, voting processes provide abstract models of real-world opinion dynamics,
in which agents repeatedly exchange information and seek to converge to a common
decision.
In contrast to problems such as Byzantine consensus, voting processes assume honest
behavior and aim for the final outcome reflecting a non-trivial function of the initial
opinions rather than being adversarially chosen.

The standard pull voter model \cite{DBLP:journals/iandc/HassinP01} is perhaps the most well-known voting process in parallel and
distributed computing.
In each discrete time step, 
a randomly chosen node $u\in V$ selects a random  neighbor $v \in N(u)$ and adopts $v$'s opinion.
The process converges 
to a state in which all nodes hold the same opinion and
the convergence time can be characterized via the coalescence time of independent random walks. 
Moreover, the probability that a given opinion is elected is proportional to its initial
support.

On the other hand, in the social sciences, models of opinion dynamics typically incorporate some notion of
compromise when agents interact, rather than simple adoption of opinions held by others.
This is often expressed by updating opinions toward a (weighted) average
of other opinions, possibly including own 
opinion in this calculation.
Prominent examples are the DeGroot model~\cite{DeGroot1974}, in which agents repeatedly
average their neighbors' opinions, and bounded-confidence models such as the
Deffuant--Weisbuch~\cite{Deffuant2000} and Hegselmann--Krause~\cite{HegselmannKrause2002} models,
in which compromise occurs only between sufficiently similar opinions.

In this paper we consider the \emph{discrete incremental voting} (DIV) process
introduced 
in~\cite{CRS23},
which bridges the gap between these two viewpoints.
Nodes hold opinions from $[K]\coloneqq \{0,1,\ldots,K\}$ 
and aim to agree on a single opinion from this set.
In each step, a node $u$ and one of its neighbors $v \in N(u)$ are chosen
uniformly at random.
However, instead of adopting $v$'s opinion outright, node $u$ moves just one step in that
direction,
updating its opinion by $+1$ or $-1$, depending
whether $v$'s opinion is larger or smaller. 
This process is
formalized below in Protocol~\ref{prot:edgeinteraction} as \ProtAV. 
As for other voting models, the central questions are
to determine the \emph{convergence time} 
(called also the consensus, or voting, time)
and characterize the probability distribution of the winning opinion.

\begin{algorithm}{The \ProtAV process\label{prot:edgeinteraction}}
For $t = 0, 1,2, \ldots$
    $\rhd\;$ step $t$, changing configuration $\vx(t)=(x_v(t))_{v\in V}$ to $\vx(t+1)$
    $\rhd\;\; x_u(t)$ is opinion of $u$ at the beginning of step $t$
    Pick a node $u \in V$ uniformly at random
    Node $u$ chooses a neighbor $v \in N_u$ uniformly at random
    if $x_u(t) > x_v(t)$: $\;\;\;\;\;\;\;\;\: x_u(t+1) \gets x_u(t) - 1$ 
    else if $x_u(t) < x_v(t)$: $x_u(t+1) \gets x_u(t) + 1$
    else:  $\;\;\;\;\;\;\;\;\;\;\; \;\;\;\;\;\;\;\;\;\;\;\;\;\;\;\;\;\: x_u(t+1) \gets x_u(t)$
\end{algorithm} 

Despite the simplicity of this process, its convergence time
is much less understood than that of the standard pull voter model. 
The existing bounds apply only to restricted graph classes or to a limited initial discrepancy 
of opinions $K$. 
%
We present the first general analysis of the convergence time of 
discrete incremental voting on arbitrary graphs. 
For any graph $G$ with conductance $\Phi(G)$
and the ratio of the average to smallest degree $\gamma(G)$,
we prove that the expected convergence time is 
${O}\left(\frac{Kn \log (Kn)}{\Phi(G)^2} + \frac{\gamma(G) n^2}{\Phi(G)^2}\right)$
and show a nearly matching lower bound. 
For regular expander graphs, 
we obtain sharper bounds and prove that the final consensus concentrates 
with high probability on the initial average opinion (rounded up or down). On a technical level, 
our analysis introduces a new multi-scale potential framework 
that may be of independent interest as it is also suitable for certain load-balancing processes.


\ignore{
\subsection{Structure of this Paper} 
The remainder of this paper is structured as follows: In \autoref{sec:relatedwork} we present the related work, followed by our main results in \autoref{sec:results}. 
After that, we give a high-level overview of our proofs and the main techniques we use.
Finally we present the detailed technical proofs.
}

\subsection{Related Work}
 \label{sec:relatedwork} 



\noindent 
\paragraph{\textbf{Pull Voting.}} For the standard voting process (pull voter model) in a connected graph~$G$, 
where nodes adopt opinions of randomly chosen neighbors, it is known that 
a given opinion $s$ wins with probability $d(s)/2m$, where $d(s)$ is the
sum of the degrees of the nodes initially holding opinion $s$~\cite{DBLP:journals/iandc/HassinP01}.
In~\cite{DBLP:conf/icalp/BerenbrinkGKM16} the authors consider the synchronous version of this model
and show that with constant
probability the consensus is reached within $O(\gamma(G) n /\Phi(G))$ rounds.
%
This result can be adopted to the
asynchronous setting, giving a bound of 
$O(\gamma(G) n^2 / \Phi(G))$ steps.
In~\cite{DBLP:conf/icalp/BerenbrinkGKM16},
the authors also consider a biased variant of pull voting 
and 
dynamic graphs. 

The authors of
\cite{DBLP:journals/siamdm/CooperEOR13} 
derive a bound of $O((1/(1-\lambda))\cdot (\log^4 n+\rho))$ 
on the expected convergence time of the synchronous pull voting.
Here $\lambda$ is the second largest (in absolute value) eigenvalue of the transition matrix
of a random walk on $G$ and
$\rho$ is the ratio of the square of the sum of node
degrees over the sum of the squared degrees, which ranges from $O(1)$ (star graph)
to $n$ (regular graphs). 
In \cite{DBLP:conf/icalp/CooperR16} the authors
introduce 
\emph{Linear Voting Model}, which 
generalizes several models of voting, including 
asynchronous and synchronous pull voting and push voting, 
and derive bounds on the probability that a given opinion wins and on
the expected voting time.


\noindent 
\paragraph{\textbf{Related Opinion Dynamics.}}
In \cite{10.1145/1989493.1989516}, the authors
consider  the synchronous \protocol{MedianRule} protocol on complete graphs with opinions drawn from 
an \emph{ordered} set, say the set $[K-1]$. In each round each node selects \emph{two} random neighboring nodes 
and updates its opinion to the median of its own opinion 
and the two neighbors' opinions. 
\cite{10.1145/1989493.1989516} derives bounds on the convergence time of this process, considering 
also adversarial scenarios.
Note that in the case of two initial opinions ($K=2$) this protocol reduces to picking the majority of the three opinions.
However, this process does not always converge to a single opinion on general graphs.


Other widely studied consensus protocols 
include \emph{\mbox{$j$-Majority} dynamics}. 
Every node samples randomly $j$ neighbours and adopts
the majority opinion among the sample. The variants for $j$=2 and $j$=3 have
been analyzed under the names of \protocol{Two-Choices} dynamics
\cite{DBLP:conf/icalp/CooperER14, DBLP:conf/wdag/CooperERRS15,DBLP:conf/wdag/CooperRRS17, DBLP:conf/podc/ShimizuS25} 
and the \protocol{3-Majority} dynamics
\cite{DBLP:journals/dc/BecchettiCNPST17, DBLP:conf/podc/GhaffariL18,
DBLP:conf/podc/BerenbrinkCEKMN17, DBLP:conf/soda/CooperMRSS25, DBLP:conf/podc/ShimizuS25}. 
In \emph{averaging processes}, the
nodes adopt the average of the opinions of $j$ random neighbors. 
Such processes are considered, for example,  
in~\cite{DBLP:conf/podc/BerenbrinkCGMMR23}, where 
tight bounds on the variance on the final opinion for regular graphs are derived. 
If the initial opinions do not depend on the number of nodes the variance is
negligible, and hence the nodes are able to estimate the average of the
initial node opinions. Interestingly, this variance does not depend on the graph
structure.
For further references, 
see the survey of
consensus dynamics~\cite{DBLP:journals/sigact/BecchettiCN20}.
\begin{table}[ht]
\setlength{\tabcolsep}{8pt}
\footnotesize
\centering
\begin{threeparttable}
\begin{tabular}{@{}llll@{}}
\toprule
\bfseries Graph Class & \bfseries Reducing Discrepancy to $1$ & \bfseries Reference   \\
\midrule
Clique ($K_n$) & $O(Kn\log n)$ & \cite{CRS23} \\
$\mathcal{G}_{n,p}$, $p \geq (\log^{1+\epsilon}n)/n$  & $O(Kn\log n ),\; K = O(\log n)$ & \cite{CRS23}  \\
General Graph & $O(K\gamma n^2/\Phi)$  & \cite{CRS23} + \cite{CooperR16}
\\
Path & $O(K n^3)$& \cite{CRS23} + \cite{CooperR16}
\\
$\lambda$-Expander, $\gamma = O(1)$  & $\Tilde{O}(K \lambda n^2 + \sqrt{\lambda}n^2 + n^{5/3})$& \cite{CRS24} \\
\midrule
General Graph & $\Tilde{O}((K+\gamma n) n \log (Kn)/\Phi^2)$ & Thm~\ref{thm:main}\\
General Graph & $\Omega(K n/\Phi)$ & Thm~\ref{thm:lowerbound}\\
Regular $\lambda$-Expander  & $\Tilde{O}(K \lambda^2  n^2 + \sqrt{\lambda}n^2 + n^{5/3} )$& Thm~\ref{thm:main2} + \cite{CRS24} \\
\bottomrule
\end{tabular} 
\end{threeparttable}
\medskip
\caption{An overview of the related work compared to ours. For better comparison, we compare the times to reduce to a discrepancy of $1$.  }
\label{tab:relatedwork}
\end{table}

\subsection{Additional Notation and Preliminaries}
\label{subsec:preliminaries}

We 
assume $V=\{1, \ldots, n\}$.
\ProtAV is a Markov process evolving according to
Protocol~\ref{prot:edgeinteraction}.
We denote the random state at time
$t\ge 0$ (the beginning of step $t$) by $\X{t}=(X_1(t),\ldots,X_n(t))$ and
its realization by $\vx(t)=(x_1(t),\ldots,x_n(t))$, where $X_i(t)$ and $x_i(t)$ refer to the opinion held by node $i$ at this time.
Let $\mathcal{X}(K)\coloneqq \{\vx=(x_1,\ldots,x_n): x_i\in[K]\}$
be the set of all possible configurations.
For a configuration $\Vec{x} = (x_1, \ldots, x_n)$, let 
$\xmax{} \coloneqq \text{max}_{i \in V}\: x_i$ and
$\xmin{} \coloneqq \min_{i \in V}\: x_i$. 
The \emph{discrepancy} is defined as
$\disc{\Vec{x}}\coloneqq \xmax{} - \xmin{}\le K$
and the degree-weighted average 
as
$\ovl W(\vx) \coloneqq \frac{1}{2m}\sum_{i=1}^n d_i x_i$,
where $d_i$ is the degree of node~$i$. 
The convergence time from configuration $\Vec{x}$ is a random variable defined as
\begin{align*}
 {T_G(\Vec{x})}
&\coloneqq
\min\left\{t\ge0: \Vec{X}(0) = \vec{x}, \disc{\X{t}}=0\right\}.
 \end{align*}


If  $K=1$, the opinion set is $\{0,1\}$
and \ProtAV coincides with the standard (2-value) pull voting.
We denote the expected (worst-case) convergence time of pull voting by
$\mathcal{T}^{2V}_G
\coloneqq \text{max}\{\E{T_G(\Vec{x})}:\,{\Vec{x}\in\mathcal{X}(1)}\}$. 

We 
define the \emph{degree imbalance} as $\gamma = \gamma(G)\coloneqq \frac{2m}{n\, {d}_{\mathsf{min}}}$,
the ratio of the average degree $\frac{2m}{n}$ to the minimum degree 
${d}_{\mathsf{min}}$.
For a set $S\subseteq V$, its volume is $\vol{S}\coloneqq \sum_{i\in S} d_i$, 
and for two disjoint sets $S,S'\subseteq V$,
we define  $E(S,S')$ as the set of edges between $S$ and $S'$.
The \emph{conductance} of $G$ is defined as
\begin{align}
\label{eq:conductance}
\Phi \: = \: \Phi(G)
\coloneqq
\min\left\{
\frac{|E(S,V\setminus S)|}{\vol{S}}:\;
S\subset V,\: \vol{S}\le m\right\}.
\end{align}

Let $A$ be the adjacency matrix of $G$ and $D$ the diagonal matrix of node degrees.
The transition matrix and 
stationary distribution of a random walk
on~$G$ are $P = D^{-1}A$ and
$\pi_i=d_i/(2m)$, $i\in V$.
Let $1=\lambda_1\ge \lambda_2\ge \cdots \ge \lambda_n\ge -1$ denote the eigenvalues
of~$P$, $\lambda=\lambda(G) = \text{max}\{\lambda_2, \abs{\lambda_n}\}$, 
and define the \emph{spectral gap} as $1-\lambda$.

\section{Our Results}
\label{sec:results}
We begin by establishing a general upper bound for the convergence time on any graph $G$, 
expressed in terms of the conductance $\Phi(G)$ and the expected 
convergence time of $2$-Value Pull Voting $\mathcal{T}^{2V}_{G}$.
\begin{theorem}
\label{thm:main}
Consider the \ProtAV process on a graph $G$ with conductance $\Phi(G)$. Assume 
$\vec{x}$
is an arbitrary configuration with discrepancy $K$. Then the following two 
bounds on the convergence time ${T_G(\vec{x})}$
hold, with bound~\eqref{boundwhp} holding
both in expectation and with high probability (w.h.p),\footnote{%
In this paper, 'with high probability' 
(w.h.p.) means with probability at least $1-O(1/n^c)$ for 
some constant $c>1$, where $n$ is the number of nodes in $G$.}
\begin{align}
{T_G(\vec{x})}
&\;\in\;
O\!\left(
\frac{ K n \log(Kn)}{\Phi(G)^2}
+
\frac{\log(n)}{\Phi(G)} \cdot \mathcal{T}^{2V}_{G}
\right),\label{boundwhp}\\
\E{T_G(\vec{x})}
&\;\in\;
O\!\left(
\frac{ K n \log(Kn)}{\Phi(G)^2}
+ \frac{\gamma(G) n^2}{\Phi(G)^2}\right).
\label{boundonexpectation}
\end{align}
\end{theorem}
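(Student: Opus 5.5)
We split the run into two phases. In the first, the discrepancy is driven from $K$ down to at most $1$. In the second, which starts once only two adjacent opinions $\{a,a+1\}$ remain, the process is exactly $2$-value pull voting. The second phase therefore contributes $\mathcal{T}^{2V}_G$ in expectation. By Markov's inequality and restarts, it contributes $O(\log n\cdot \mathcal{T}^{2V}_G)$ w.h.p. To get \eqref{boundonexpectation} we plug in the known bound $\mathcal{T}^{2V}_G=O(\gamma n^2/\Phi)$ from \cite{DBLP:conf/icalp/BerenbrinkGKM16} (asynchronous version). In the expectation bound the factor $\log n/\Phi$ does not appear, so there we need only $\mathcal{T}^{2V}_G\le \gamma n^2/\Phi^2$. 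The slack $1/\Phi$ absorbs overheads from the first phase, such as the time to remove the last few outlying nodes.

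For the first phase we use a threshold (multi-scale) decomposition. For each level $k\in\{1,\dots,K\}$, let $S_k(t)=\{i: X_i(t)\ge k\}$. A step changes $\mathbf{1}[i\in S_k]$ only when $i$ sits at $k-1$ or $k$ and its chosen neighbour lies on the other side of the cut. Crucially, the dynamics restricted to a single threshold look like pull voting: a node at $k-1$ with a neighbour in $S_k$ enters $S_k$ at rate $|N(i)\cap S_k|/(n d_i)$, and symmetrically for leaving.

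Our plan is to define a potential $\Psi(t)=\sum_k \psi(\vol{S_k(t)})$ with a concave $\psi$, for instance $\psi(v)=\sqrt{\min(v,2m-v)}$. We want to show that each step decreases $\E{\Psi}$ by a factor of roughly $1-\Phi^2/n$ on the levels that are still \enquote{active}. This mirrors the usual Lovász--Simonovits / evolving-set argument. The conductance bound says each nontrivial cut $S_k$ has at least $\Phi\,\vol{S_k}$ boundary edges. The opinions are monotone across levels: the sets are nested, $S_{k+1}\subseteq S_k$. Hence a move across a boundary edge of $S_k$ corresponds to variance in the volume of $S_k$, and concavity turns that variance into a multiplicative drift of order $\Phi^2/n$. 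Starting from $\Psi(0)\le K\sqrt{2m}$, every cut becomes trivial or of constant volume after $O((n/\Phi^2)\log(Kn))$ steps per level. Levels interact, so this does not simply sum; we will instead sum the drift over levels and aim for a total bound of $O(Kn\log(Kn)/\Phi^2)$ with w.h.p.\ concentration via a supermartingale tail bound.

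The main obstacle is that a single step affects only the two thresholds $k=x_u$ and $k=x_u\pm1$, and movement is not driven by an edge lying on a cut unless the endpoints differ by exactly the crossing direction. A boundary edge of $S_k$ may join nodes at opinions $0$ and $K$, and then it changes only the cut at $k=1$ or $k=K$, not cut $k$. We would first try to fix this by charging: an edge between opinions $a<b$ is a boundary edge of every cut $S_k$ for $a<k\le b$, while its use moves only one cut. This is the source of the factor $K$ in the first term, which we must carefully pay for. A weighted potential over levels, or an argument that the sum over cuts of drift is at least $\Phi^2/n$ times the sum over active cuts divided by $K$, should be enough to deliver $Kn\log(Kn)/\Phi^2$.
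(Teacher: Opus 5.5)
Your Phase~1 claim is false, and the decomposition fails with it. You assert that the discrepancy can be driven all the way down to $1$ in $O(Kn\log(Kn)/\Phi^2)$ steps. Take $G=K_n$ (so $\Phi$ and $\gamma$ are constants) and $K=2$, with $n/4$ nodes at $0$, $n/2$ at $1$ and $n/4$ at $2$. Let $a$ and $c$ count the nodes at $0$ and at $2$. One step gives $\E{\Delta a}=\E{\Delta c}=-ac/(n(n-1))$, so extreme opinions are lost systematically only through $0$--$2$ interactions. Along the mean-field trajectory $a\approx c\approx n^2/t$, while the martingale $a-c$ fluctuates by only $O(\sqrt{n\log n})$. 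Hence both extreme opinions survive for $n^{3/2-o(1)}$ steps, polynomially longer than your $O(n\log n)$. Your plan would also give $O(Kn\log(Kn)/\Phi^2+\log n\cdot\mathcal{T}^{2V}_G)$ w.h.p., which is stronger than the theorem. The factor $\log n/\Phi$ in the theorem is there because the paper's fast phase only runs down to discrepancy $\alpha=32\log n/\Phi$. After that, each remaining unit of discrepancy is paid for by one 2-value pull-voting run, via the reduction of Cooper, Radzik and Shiraga.

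The mechanism you propose cannot deliver even that fast phase. As you note yourself, a single threshold cut evolves like pull voting, so $\vol{S_k}$ has no sign-definite drift. A concave potential such as $\sqrt{\eta}$ then gains from the variance only an additive drift of order $\dmin\Phi/(n\sqrt{\eta})$ (the Cooper--Rivera calculation). That means $O(\gamma n^2/\Phi)$ steps per level, which is tight for pull voting. The relative decrease is $\dmin\Phi/(n\eta)$, far below $\Phi^2/n$ unless the set is tiny. Lov\'asz--Simonovits-type contraction applies to averaging quantities, not to martingale volumes. Summed over levels, this only reproduces the known $O(K\gamma n^2/\Phi)$ bound. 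The ``charging'' inequality in your last paragraph is precisely the missing content, and it is left unproved.

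The missing idea is a potential that captures the convexity gain on edges whose endpoints differ by at least $2$, i.e.\ the load-balancing behaviour rather than the voting behaviour. For each $k\in[K/2,K]$ the paper uses the squared excess $\Psi^{(+)}_k=\sum_i d_i\bigl((x_i-k)^+\bigr)^2$ and its mirror image. Pairing the two orientations of each edge shows that $\Psi^{(+)}_k$ is a supermartingale. Mihail's identity, together with the $k$-good condition $|E(S_{+1},\ovl S)|\ge\frac14\Phi\,\vol{S_{+1}}$, gives multiplicative drift $1-\Phi^2/(5(K-k)n)$ in good steps. A bad index $k$ forces $\vol{S_k}>(1+\frac34\Phi)\vol{S_{k+1}}$, so at most $4\log n/\Phi$ indices are bad in any step. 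For $K\ge 32\log n/\Phi$, pigeonhole then yields some $k^\star\in[K/2,3K/4]$ that is good in a quarter of the steps of the original or the mirrored process. This pigeonhole step is exactly what breaks down at small discrepancy.

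Your expectation bound also lacks its key ingredient. Even with the fast phase ending at $\alpha$, paying $\mathcal{T}^{2V}_G=O(\gamma n^2/\Phi)$ per remaining level costs $O(\gamma n^2\log n/\Phi^2)$. The paper removes this $\log n$ in two steps. First, it redoes the bad-index count assuming the smaller extreme opinion has volume fraction above $\epsilon_k=e^{-\Phi k/16}$, so the fast mechanism first shrinks each extreme opinion to fraction $\epsilon_k$. Second, it applies the minority-sensitive bound $\mathcal{T}^{2V}_G(\epsilon)=O(\gamma n^2\sqrt{\epsilon}/\Phi)$, where $\sum_k\sqrt{\epsilon_k}=O(1/\Phi)$. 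Your remark that ``the slack $1/\Phi$ absorbs overheads'' does not replace this argument.
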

To the best of our knowledge, these are the first upper bounds that hold for any initial discrepancy $K$ and are superior to the $2$-Value Pull Voting majorization.
Moreover, using these bounds 
and known bounds on $\mathcal{T}^{2V}_{G}$,
we can derive near-tight bounds for many practical graph classes.
The best-known bound 
on $\mathcal{T}^{2V}_{G}$ is 
$O\!\left(\tfrac{\gamma(G)\, n^2}{\Phi(G)}\right)$, which was
independently obtained 
in~\cite{DBLP:conf/icalp/BerenbrinkGKM16}
and~\cite{CooperR16}.

For near-regular graphs with constant conductance, i.e., $\Phi(G)$, $\gamma(G) \in O(1)$, 
bound~\eqref{boundwhp}
simplifies to
$O(n^2 \log n)$ for $K \in O(n)$,
and bound~\eqref{boundonexpectation} 
simplifies to $O(n^2)$ for $K \in O(n/\log n)$.
Note that these graphs capture many
distributed systems, parallel architectures, and social networks.
In this regime, 
the convergence time of \ProtAV, both expected and w.h.p., matches asymptotically the convergence time of the standard {2-value} pull voting.
As DIV reduces to standard {2-value} pull voting once the discrepancy is~$1$, this is the best we can hope for. 

Complementary, for $K > n$, we show that the dependence on the initial discrepancy in our upper bounds
is unavoidable.
In contrast to classical {2-value} pull voting, whose convergence time does not depend on the
number of initial opinions~\cite{DBLP:conf/icalp/BerenbrinkGKM16}, the incremental nature
of DIV restricts nodes to changing their opinion by at most one per interaction.
Consequently, the convergence time \emph{must} scale with~$K$.
More precisely, we show the following lower bound.
\begin{theorem}[Lower Bound]
\label{thm:lowerbound}
Consider the \ProtAV process on a regular graph $G$ with conductance $\Phi(G)$. Then there is a configuration $\Vec{x}$ with discrepancy $K$ such that  
$\E{T_G(\Vec{x})} \in \Omega\!\left({K n}/{\Phi(G)} + \mathcal{T}^{2V}_G\right).$
\end{theorem}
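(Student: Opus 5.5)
Since the bound is a sum of two terms, it suffices to exhibit configurations realising each term separately; taking whichever term dominates gives $\Omega$ of their sum. For the term $\mathcal{T}^{2V}_G$ there is a direct reduction. Choose a configuration $\vec y\in\mathcal{X}(1)$ that attains the maximum in the definition of $\mathcal{T}^{2V}_G$. On $\{0,1\}$-configurations \ProtAV coincides step by step with 2-value pull voting, so $\E{T_G(\vec y)}=\mathcal{T}^{2V}_G$. To obtain discrepancy exactly $K$, I would instead consider $K\cdot\vec y$, i.e. opinions in $\{0,K\}$. This needs a monotonicity or coupling argument showing that convergence from $K\vec y$ takes at least as long as from $\vec y$. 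Alternatively, the term can be absorbed: whenever $\mathcal{T}^{2V}_G$ exceeds the first term, the statement for discrepancy $1\le K$ already exhibits the lower bound.

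For the term $Kn/\Phi$ I use a potential argument. Let $S$ be a set attaining the conductance, with $\vol{S}\le m$. Since $G$ is $d$-regular, $|S|\le n/2$ and $|E(S,V\setminus S)|=\Phi d|S|$. Start from $x_i=K$ for $i\in S$ and $x_i=0$ otherwise. Consider the potential $\Psi(t)=\sum_{i\in S}X_i(t)$. At the start $\Psi=K|S|$. At consensus $\Psi=|S|\cdot c$, where $c$ is the final opinion. The degree-weighted average $\ovl W$ is a martingale for DIV on regular graphs, as in \cite{CRS23}, and it starts at $K|S|/n\le K/2$. Hence with constant probability $c\le 3K/4$, and on that event $\Psi$ must drop by at least $K|S|/4$.

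Next I bound the rate at which $\Psi$ can change. A step changes $\Psi$ only if the chosen node $u$ lies in $S$ and its chosen neighbour $v$ lies outside $S$; edges inside $S$ also change individual values, so this needs care. To handle this I would replace $\Psi$ by the sum over $S$ of $X_i$, which changes by $\pm1$ only through moves of nodes in $S$. The delicate point is that internal moves in $S$ also change $\Psi$. I would therefore instead track $\xmax$-type quantities, or use the following cleaner monotone argument. By a coupling that is monotone in the initial configuration, the opinions in $S$ stay at least as large as those of a process in which all internal disagreement is irrelevant. In that process $\Psi$ decreases only via cut edges, at rate at most $|E(S,V\setminus S)|/(nd)=\Phi|S|/n$ per step in expectation. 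Optional stopping then shows that reaching a decrease of $K|S|/4$ requires expected time $\Omega\bigl((K|S|/4)\cdot n/(\Phi|S|)\bigr)=\Omega(Kn/\Phi)$.

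\textbf{Main obstacle.} The hard step is the rate bound: justifying that $\Psi$ can only drift down through cut edges. A node $u\in S$ lowers its value only when it samples a neighbour with a smaller value, and inside $S$ that can happen after values spread out. I would first try the symmetric potential $\sum_{i\in S}X_i$ directly. Each interaction along an internal edge $\{u,v\}\subseteq S$ is chosen from $u$'s side and from $v$'s side with equal probability $1/(nd)$, giving $\pm1$ in opposite directions. So internal edges contribute zero expected drift to $\Psi$, and only cut edges contribute, each with drift magnitude at most $1/(nd)$. Then $\E{\Psi(t)}\ge K|S|-t\,\Phi|S|/n$ follows immediately. Combined with the constant-probability drop requirement from the martingale step, this gives the bound. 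This symmetry observation likely makes the coupling unnecessary.
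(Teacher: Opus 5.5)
Your argument for the $Kn/\Phi(G)$ term is correct in its final form, the one based on cancellation along internal edges of $S$. It takes a different route from the paper. The monotone-coupling detour before it is unnecessary and should be dropped.

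Finish the argument at the consensus time $T$ rather than at a fixed time. The process $M(t)=\Psi(t)+t\,\Phi|S|/n$ is a submartingale with bounded increments, and $\E{T}<\infty$. Optional stopping therefore gives $\E{\Psi(T)}+\tfrac{\Phi|S|}{n}\E{T}\ge K|S|$. Since $\Psi(T)=|S|c$ and $\E{c}=\ovl{W}(\vec x)=K|S|/n$, you get $\E{T}\ge \frac{Kn}{\Phi}\bigl(1-\frac{|S|}{n}\bigr)\ge \frac{Kn}{2\Phi}$. The Markov step is then superfluous. If you keep it, stop at $\sigma\wedge T$, where $\sigma$ is the first time $\Psi\le 3K|S|/4$. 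You cannot condition on the non-adapted event $\{c\le 3K/4\}$ and still use the one-step drift bound.

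The paper instead tracks the gap $\Delta_S=\mu_S-\mu_{\ovl{S}}$ between the mean opinions on the two sides of the cut. It bounds the drift of $\Delta_S$ by the same cut-edge count. It then uses Azuma--Hoeffding to keep $\Delta_S>K/2$ for $\min\{K|E||S|/(4|\mathbb{O}(S)|),\,K^2|S|^2/(8\log(2/p))\}$ steps with probability $1-p$. Since consensus forces $\Delta_S=0$, the process is still running at that point.

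Each route buys something different:
\begin{itemize}
\item The paper's route gives a tail statement, not just an expectation bound. However, the fluctuation term means it only yields order $Kn/\Phi$ when $K\Phi|S|^2$ is at least of order $n\log(2/p)$, that is, for a sufficiently large $S$.
\item Your route needs no concentration, because the exact martingale property of the global average pins $\E{\Psi(T)}$. It works for every conductance-minimising $S$ with $\vol{S}\le m$, and it gives exactly the expectation bound the theorem asserts.
\end{itemize}

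For the $\mathcal{T}^{2V}_G$ term, the comparison you would need, $\E{T_G(K\vec y)}\ge\E{T_G(\vec y)}$, fails in general, even up to constants. Take $K_n$ with $K=2$ and the balanced $\vec y$; the following is a heuristic sketch.
\begin{itemize}
\item The average is a martingale started at the integer $1$, and the process quickly concentrates on opinion $1$.
\item The counts $N_0,N_2$ of the extreme opinions then decay with drift $-N_0N_2/n^2$ per step, while the martingale $N_2-N_0$ fluctuates only by about $\sqrt{n\log n}$.
\item One extreme therefore dies after roughly $n^{3/2}$ steps. The final two-value phase then starts from a minority of size about $\sqrt{n\log n}$.
\end{itemize}
The expected convergence time is thus $n^{3/2}$ up to logarithmic factors, far below $\mathcal{T}^{2V}_{K_n}=\Theta(n^2)$. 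So this term is only covered by your ``absorption'' remark, which reads ``discrepancy $K$'' as ``discrepancy at most $K$''. The paper gives no separate argument for this term either. This is a caveat on how the statement must be read, not a shortfall relative to the paper's proof.
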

Thus, for graphs with constant conductance, 
our results are nearly-tight for  $K\ge n$. To the best of our knowledge, the only other lower bound is $\Omega(n^3)$ for line graphs shown in \cite{CRS23}.


We next show 
a tighter bound on the convergence
time for graphs with  strong expansion properties, quantified by  $\lambda(G)$.

\begin{theorem}
\label{thm:main2}
Consider the \ProtAV process on a regular graph $G$ with second-largest eigenvalue $\lambda(G) \in o(1)$.
Assume $\vec{x}(0)$ is an arbitrary configuration with discrepancy $K$, then 
\begin{align*}
\E{T_G(\x{0})}
\;\in\;
O\!\left(
(K+\log n) n \log Kn
+
\lambda(G)^2\, n^2 \log n
+
\mathcal{T}^{2V}_{G}
\right).
\end{align*}
\end{theorem}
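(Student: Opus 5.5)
We split the run into three phases: a contraction phase that brings the discrepancy down to $O(\log n)$, a phase that brings it down to $1$, and a final phase of two-value pull voting. The last phase is charged to $\mathcal{T}^{2V}_G$: once $\disc{\X{t}}\le 1$, \ProtAV coincides with pull voting on the two remaining values, so by the strong Markov property it adds at most $\mathcal{T}^{2V}_G$ in expectation. The work is in the first two phases.

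\textbf{Potential.} Since $G$ is regular, the average $\ovl W(\X{t})$ is a martingale, and we track the quadratic potential $\Psi(t)=\sum_i (X_i(t)-\ovl W(\X{t}))^2$. First we compute its one-step drift. When $u$ picks $v$ and moves by $\pm1$ toward $x_v$, the change of $\Psi$ is $-2|x_u-\ovl W|\cdot \sign(\cdot)+1$ up to an $O(1/n)$ correction from the movement of the mean. Averaging over $u$ and $v$ gives
\[
\E{\Psi(t+1)-\Psi(t)\mid \vx} \;\le\; -\tfrac{1}{nd}\sum_{(u,v)\in E}|x_u-x_v| \;+\; \tfrac{1}{n}\,\bigl|\{(u,v)\in E: x_u\ne x_v\}\bigr|\cdot\tfrac1d .
\]
The first term is a discrete total variation, which we lower bound with a spectral argument in place of conductance. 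The route is to apply the expander mixing lemma to the level sets $S_k=\{i: x_i\ge k\}$. This gives $\sum_{uv}|x_u-x_v|=\sum_k |E(S_k,\ovl S_k)|\ge \sum_k d(1-\lambda)|S_k||\ovl S_k|/n$, and the right-hand side is essentially the $\ell_1$ deviation from the median.

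\textbf{Contraction phase.} The drift in $\Psi$ is only linear in the deviation, so $\Psi$ alone gives additive progress. We therefore switch to a multi-scale argument: work directly with level-set sizes $|S_k|$ and show that the extreme levels shrink. Take a level $k$ above the mean with $|S_k|=s\le n/2$. Each activation of a node in $S_{k}\setminus S_{k+1}$ that samples outside $S_k$ decreases $s$, while increases come only from nodes just below $k$ sampling inside $S_k$. The mixing lemma shows the decrease rate exceeds the increase rate by a factor $1-O(\lambda)$, up to an additive error term $\lambda\sqrt{s\,|S_{k-1}\setminus S_k|}$. This gives a drift argument for $\xmax{}$: the top level empties in $O(n\log n)$ steps, so reducing $K$ by one costs $O(n\log(Kn))$. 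Summed over $K$ levels this yields the $(K+\log n)n\log(Kn)$ term, holding until the levels near the mean have size $\Theta(n)$. A union bound over levels and a concentration bound (Freedman) handle the w.h.p. bookkeeping, and the $\log n$ inside $K+\log n$ comes from starting the amortisation.

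\textbf{Main obstacle.} Near the mean, once the discrepancy is $O(\log n)$, the mixing-lemma error terms dominate and the per-level drift no longer favours contraction. Here I would bound the time by a second potential, the number of nodes not at $\lceil\ovl W\rceil$ or $\lfloor \ovl W\rfloor$. Its drift is negative except for an error of order $\lambda^2 n$ per $n$ steps, obtained by applying the mixing lemma twice (variance of $P\vx$ is at most $\lambda^2$ times the variance of $\vx$). A martingale/optional stopping argument then bounds the time to discrepancy $1$ by $O(\lambda^2 n^2\log n)$. Making the second-order $\lambda^2$ estimate rigorous, rather than the naive $\lambda$, is the step I expect to be hardest. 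I would try it via $\|P(\vx-\ovl W\mathbf 1)\|_2\le\lambda\|\vx-\ovl W\mathbf 1\|_2$ applied to the expected post-update configuration.
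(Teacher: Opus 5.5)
There is a genuine gap, most clearly in your second phase. You claim that from discrepancy $O(\log n)$, a potential counting nodes outside $\{\lfloor\ovl W\rfloor,\lceil\ovl W\rceil\}$ brings the discrepancy to $1$ within $O(\lambda^2n^2\log n)$ steps. That is false. Take $K_n$: here $\lambda=1/(n-1)$, so your bound, even with the first-phase term added, is $O(n\log^2 n)$. Start with a third of the nodes at each of $0,1,2$, and let $A,C$ count the nodes at $0$ and $2$. Then $\E{A(t+1)-A(t)\mid\vx(t)}=\E{C(t+1)-C(t)\mid\vx(t)}=-AC/(n(n-1))$. The extreme masses therefore have quadratic drift and decay only like $n^2/t$ while $A\approx C$. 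The symmetry is broken only by the $O(\sqrt{n\log n})$ fluctuations of the martingale $C-A=n(\ovl W-1)$, so discrepancy $1$ appears only after roughly $n^{3/2}$ steps. With discrepancy at most $2$ the dynamics are pull-voting-like, and a drift that is merely negative up to an error gives no hitting-time bound. This is why the paper stops at discrepancy $3$ and charges each of the last three unit reductions $\mathcal{T}^{2V}_G$ through the Cooper--Radzik--Shiraga reduction. Your final phase must start at a constant discrepancy $\ge 3$, not at $1$.

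The $\lambda^2$ term also does not come from applying the mixing lemma twice to $P\vx$. The missing idea is a case split on $\eta=\min\{\vol{S_0},\vol{S_K}\}$, the volume of the smaller extreme opinion (\autoref{lem:one-step}).
\begin{itemize}
\item If $\eta\le 32m\lambda^2$, the Cooper--Rivera $\sqrt{\eta}$ argument (\autoref{lem:dropbyone}) removes that opinion in $O(\eta n/(d\Phi))=O(\lambda^2n^2)$ steps with probability $1/2$.
\item If $\eta>32m\lambda^2$, then since $K\ge4$ you may assume (mirroring if needed) that the median is at most $K-2$. Setting $\ovl S=\{i: x_i\le K-2\}$ gives $\vol{\ovl S}\ge m$, so the mixing-lemma error $\lambda\sqrt{\vol{S_K}\vol{\ovl S}}$ is at most $\vol{S_K}/4$. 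Hence index $K-1$ is good, and \autoref{lemma:goodindexdrops} empties the top level within $O(n\log n)$ steps.
\end{itemize}
Each unit reduction thus costs $O(\lambda^2n^2+n\log n)$, and it is needed only $O(\log n)$ times.

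Your contraction phase has the same hole. The net drift of $|S_{\xmax{}}|$ counts only edges from the top level to nodes at $\le\xmax{}-2$. It vanishes when the top level's outside neighbours all sit at $\xmax{}-1$, and for interior levels the increase rate can even exceed the decrease rate. So ``the top level empties in $O(n\log n)$ steps'' needs this median/mirroring dichotomy, plus an amortization over which extreme is favourable at each time. The paper avoids per-level bookkeeping for large $K$ altogether: since $\Phi\ge(1-\lambda)/2$ is constant, \autoref{thm:technical} (multi-threshold potentials $\Psi^{(+)}_k$ with a pigeonhole over good indices) contracts the discrepancy to $O(\log n)$ in $O(Kn\log Kn)$ steps.
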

This may appear to be only a modest improvement over the conductance-based
bound~\eqref{boundwhp}, as the additive $\mathcal{T}^{2V}_{G}$ term precludes convergence faster than
$\Theta(n^2)$ in general, i.e, we shave off a log-factor from the previous bound 
when $K \in O(\tfrac{n}{\log n})$ and $\lambda(G) \in O(\tfrac{1}{\sqrt{\log n}})$.
However, the spectral bound enables a much more refined
analysis of the final outcome. { The first two terms above give the expected time to reduce the opinion set to three contiguous values. Thus we can  show that when $K$ is not too large,} the process concentrates  tightly
around the weighted average of the initial opinions.

\begin{theorem}
\label{thm:finalvalue}
Consider the \ProtAV process on a regular graph $G$ 
with $\lambda(G) \in o(1/\log^2 n)$. Assume $\vec{x}(0)$ is an arbitrary configuration with discrepancy  $K \in o(n/\log^2 n)$, then with high probability
all nodes agree on either $\lceil \ovl W(\vec{x}(0)) \rceil$ or $\lfloor \ovl W(\vec{x}(0)) \rfloor$.
\end{theorem}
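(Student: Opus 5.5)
We track the sum $S(t)\coloneqq\sum_{i\in V}X_i(t)$; on a regular graph the degree-weighted average is $\ovl W(\X{t})=S(t)/n$. The argument has three phases, and only the first phase moves $S$ in a way that matters.

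\textbf{Phase 1: reduction to three contiguous values.} We invoke the part of the proof of Theorem~\ref{thm:main2} described after its statement. The first two terms of that bound,
\[
O\bigl((K+\log n)n\log(Kn)+\lambda^2 n^2\log n\bigr),
\]
bound the time $\tau$ until all opinions lie in $\{a-1,a,a+1\}$ for some integer $a$. We need this in w.h.p.\ form rather than only in expectation. For that we use Markov's inequality on a constant-factor longer window and restart, which costs an extra $\log n$ factor. Under the hypotheses $K=o(n/\log^2 n)$ and $\lambda=o(1/\log^2 n)$, this gives $\tau=o(n^2/\log n)$ w.h.p. (Check: $Kn\log(Kn)=o(n^2/\log n)$ and $\lambda^2n^2\log^2 n=o(n^2/\log^2 n)$.)

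\textbf{Drift of $S$ before $\tau$.} $S$ is a martingale. In a step where $u$ picks $v$,
\[
\E{\Delta S}=\frac1n\sum_u\frac1d\sum_{v\sim u}\sign(x_v-x_u)=0,
\]
because the terms for $(u,v)$ and $(v,u)$ cancel on a regular graph. Each step changes $S$ by at most $1$, so Azuma's inequality gives $|S(\tau)-S(0)|=O(\sqrt{\tau\log n})$ w.h.p. With $\tau=o(n^2/\log n)$ this is $o(n)$. This rough bound does not suffice: we need $S(\tau)/n$ to stay strictly within the same unit interval as $\ovl W(\x{0})$, which requires drift $o(1)$ in the average, i.e.\ $|S(\tau)-S(0)|=o(n)$, and the rough bound only barely gives that. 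We therefore refine it with Freedman's inequality, using the predictable variance. A step changes $S$ only if $u$ picks a neighbour with a different opinion, so the variance per step is at most the fraction $\delta(t)$ of disagreeing directed edges. The bound we want is $\sum_t\delta(t)=o(n^2/\log n)$, which we expect to extract from the potential analysis of Theorem~\ref{thm:main2}: the potential decreases by about $\delta(t)/n$ per step.

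\textbf{Phase 2: three values down to two.} The average $a+\theta$ now satisfies $\theta\in(-1,1)$, with the fractional part unchanged up to $o(1)$. Suppose $\theta>0$. We show that opinion $a-1$ disappears before $a+1$ does, w.h.p., and vice versa for $\theta<0$. Two facts drive this.
\begin{itemize}
\item Mixing on expanders ($\lambda=o(1)$) makes the counts $n_{a-1},n_a,n_{a+1}$ evolve close to a mean-field process, in which $n_a$ grows and the extremes shrink while $n_{a+1}-n_{a-1}=S-na$ stays conserved.
\item The extreme with fewer nodes dies first. By conservation the survivors then sit at $\{a,a+1\}$ when $\theta>0$, or $\{a-1,a\}$ when $\theta<0$.
\end{itemize}
The delicate case is $\theta$ very close to $0$ or $\pm1$, where $n_{a+1}-n_{a-1}$ is small. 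In that case, whichever extreme dies, the final pair still contains $\lfloor\ovl W(\x{0})\rfloor$ or $\lceil\ovl W(\x{0})\rceil$, provided the drift of $S$ is $o(n)$. So only the drift bound is needed there.

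\textbf{Phase 3: two values.} With opinions in $\{b,b+1\}$, DIV is pull voting, so the winner is automatically $b$ or $b+1$. Since $b\le S/n\le b+1$ and $S/n$ differs from $\ovl W(\x{0})$ by $o(1)$ w.h.p., $\{b,b+1\}\subseteq\{\lfloor\ovl W\rfloor-1,\dots,\lceil\ovl W\rceil+1\}$, and by Phase 2, $b\le\ovl W(\x{0})\le b+1$ up to $o(1)$. This forces the consensus value to be $\lfloor\ovl W(\x{0})\rfloor$ or $\lceil\ovl W(\x{0})\rceil$.

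\textbf{Main obstacle.} The hardest step is controlling the drift of $S$ during Phase 1 and Phase 2 precisely enough to keep $S/n$ within the correct unit interval. This needs the quadratic-variation (Freedman) bound through $\sum_t\delta(t)$, not plain Azuma. Phase 2 also needs a careful expander-mixing argument that the extreme opinion with fewer nodes dies first, which should come from the spectral potential of Theorem~\ref{thm:main2}.
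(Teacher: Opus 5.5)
Your overall plan (treat $S$ as a martingale, reach few values quickly, finish with pull voting) matches the paper's. The argument breaks at two points.

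\textbf{Phase~2 is never timed.} The drift of $S$ has to be controlled up to the first time $\tau_1$ at which the discrepancy is $1$. Neither Azuma nor Freedman can be applied to Phase~2 without a bound on its length or on its number of opinion changes, and you give neither. Your mean-field claim that the smaller extreme dies first w.h.p.\ is unproven. It is also false when $|n_{a+1}-n_{a-1}|$ is of the order of the fluctuations of $S$. The paper's potential machinery cannot supply it either. \autoref{lem:one-step} requires $K\ge 4$. In \autoref{Claim1}, a disagreement across a difference-$1$ edge gives no expected potential drop, which also defeats your plan to charge $\sum_t\delta(t)$ to the potential.

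The paper fills exactly this hole by importing the bound of \cite{CRS24} at discrepancy $3$, namely $O(\sqrt{\lambda(G)}\,n^2+n^{5/3}\log n)=o(n^2/\log n)$ under the hypothesis on $\lambda(G)$. Combine this with Corollaries~\ref{thm:technical} and~\ref{lem:k-step}. These are already w.h.p., so your Markov-plus-restart step is unnecessary, and its extra $\log n$ factor would break $\tau=o(n^2/\log n)$ for $K$ near $n/\log^2 n$. The result is $\tau_1=o(n^2/\log n)$. Plain Azuma then gives $|S(\tau_1)-S(0)|=o(n)$; your Phase~1 Azuma bound was already sufficient, and the Freedman detour is not needed. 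After that, which extreme dies first is irrelevant. Conservation alone, via $b\le S(\tau_1)/n\le b+1$, pins down $b$ whenever the fractional part of $\ovl W(\x{0})$ is not within the drift of $0$ or $1$.

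\textbf{The final ``forcing'' step is false near integers.} Suppose $\ovl W(\x{0})=a+\eta$ with $0<\eta$ smaller than the drift. Then $b=a-1$ is consistent with everything you have established. In that case $a-1$, which is neither $\lfloor\ovl W(\x{0})\rfloor$ nor $\lceil\ovl W(\x{0})\rceil$, can win. A final pair that merely contains the floor or ceiling is not enough. The missing argument is probabilistic. From $\tau_1$ on, the process is pull voting on a regular graph, so $b+1$ wins with conditional probability exactly $S(\tau_1)/n-b$. Hence, on the event that the drift of $S$ is below $n$, the wrong endpoint wins with probability at most $|S(\tau_1)-S(0)|/n$. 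This is the role of the \cite{CRS23} estimate~\eqref{final-disc-from-average} in the paper's proof. In the near-integer regime the guarantee is only as strong as this drift bound and cannot be obtained deterministically.
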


The above theorem  improves the results of  \cite{CRS24} for a range of parameters.
The authors of \cite{CRS24} show for  nearly-regular expander graph with second-largest eigenvalue $\lambda(G) \in o(1)$ that the final value of \ProtAV converges w.h.p. to $\lceil \ovl W(\vec{x}(0)) \rceil$ or $\lfloor \ovl W(\vec{x}(0))\rfloor$, as long as 
$K\in o(\min (\lambda^{-1}(G),n/\log n))$.  
Given $\lambda(G) \in o(1/\log^2 n)$, our theorem covers a much wider range of discrepancies.
Note however, that for small values of $K$ we use their result to reduce the discrepancy to $O(1)$.

\section{Proofs of our Results}

\subsection*{Proof of \autoref{thm:main}}
To show this result, we split the convergence time into two parts.
In \autoref{sec:prooftechnical} we show that the time to reduce the discrepancy to $\alpha= 32 \frac{\log n}{\Phi(G)}$ is $O\left(  \frac{Kn\log{(Kn)} }{\Phi(G)^2 }\right)$ w.h.p.\ and in expectation
(\autoref{lemma:phasescombined} and \autoref{thm:technical}).
The second part reduces the discrepancy from $\alpha$ to $0$.
In \cite{CRS23}, Cooper, Radzik and Shiraga reduce
the incremental voting to iterative instances of 2-value pull voting, showing 
that the expected time of reducing the discrepancy by~$1$ in the incremental voting
is at most the (worst-case) expected convergence time of 2-value pull voting $\mathcal{T}^{2V}_{G}$. 
From this result we get that 
the time to reduce the discrepancy from $\alpha$ to zero is 
in expectation at most $\alpha\mathcal{T}^{2V}_{G}$
and w.h.p.\ at most $(6\alpha)\mathcal{T}^{2V}_{G}$.
For the latter bound, consider $3\alpha$ phases of $2\mathcal{T}^{2V}_{G}$ steps and observe that 
for any given phase which starts with positive discrepancy, 
the probability that this phase does not reduce the discrepancy is at most $1/2$.
Using linearity of expectation  together with the bounds from both parts proves \autoref{thm:main}. \qed


\subsection*{Proof of \autoref{thm:lowerbound}}
At a high level, we consider a large set $S$ which minimizes the conductance (cf. ~\eqref{eq:conductance})
and its complement $\overline{S} = V \setminus S$. 
All nodes in $S$ start with opinion $K$, all others with opinion $0$.
We know that eventually, all nodes must have the same opinion.
The core insight of the proof is the following: the difference between the mean opinions inside set $S$ and outside it evolves like a slow random walk with very small step size and very small drift. The only interactions that can systematically reduce this difference are those along edges crossing the cut $(S,\overline{S})$. Since these edges constitute only a fraction $\abs{E(S,\overline{S})}/\abs{E} \le \Phi(G)$ of all edges, the expected change in the mean difference per step is small. Furthermore, each individual interaction changes the mean of either side by at most $1/\abs{S}$. Hence, 
random fluctuations accumulate slowly, growing in  the order of $\sqrt{t}/\abs{S}$ after $t$ steps.
As a result, if the initial gap between the mean opinions of $S$ and its complement is large, it cannot shrink significantly in a short time. 
We present the detailed formal proof in the appendix (cf. \autoref{sec:lowerboundproof}).

\subsection*{Proof of \autoref{thm:main2}}
In \autoref{thm:main2} we consider regular graphs with small absolute second eigenvalue  $\lambda(G) \in o(1)$.
Note that by Cheeger's inequality (cf. \autoref{Cheeger} in the appendix), the conductance
is at least $(1-\lambda(G))/2$, so at least $1/2 - o(1)$ for $\lambda(G) \in o(1)$.
Therefore, we already know that the  discrepancy $K$ is reduced to $\alpha \in O(\log n)$ in $O(Kn\log Kn)$ steps, both in expectation and high probability (\autoref{thm:technical} in the proof of \autoref{thm:main}).
Given this result, it only remains to consider how the discrepancy reduces from $O(\log n)$ to $0$.
Again, the proof is divided into two parts.
In \autoref{sec:disc3}, we show that in expectation and with high probability, the discrepancy reduces from $K$ to $3$ in $ O((K+\log n) \cdot (\lambda(G)^2 n^2 + n\log n))$ steps.
For $K  \in O(\log n)$, this further simplifies to 
$O(\lambda(G)^2 n^2 \log n + n\log^2 n)$.
The main tool for proving this is \autoref{lem:one-step}, which combines techniques from 2-value pull voting with our new machinery.
The remaining time to reduce from $3$ to $0$ then follows from the aforementioned reduction to $2$-value pull voting described in \cite{CRS23}. \qed

\subsection*{Proof of \autoref{thm:finalvalue}}
To prove \autoref{thm:finalvalue}, we require insights from \cite{CRS23} and \cite{CRS24}.  
First, we use the result from \cite{CRS23} showing that it is sufficient to bound the time until discrepancy $1$.
More precisely, suppose the process starts with initial discrepancy $K$ and initial weighted average $\ovl W(\x{0})$.  
Suppose further that the discrepancy reduces to $1$ within $O((\nicefrac{n}{\delta})^2)$ steps for some $\delta$. Further let $\Vec{X}^c$ be the configuration when the process has converged so that
$\ovl W(\Vec{X}^c)$ is equal to the agreed value. Then, we have, 

\begin{align}\label{final-disc-from-average}
    \pr{|{\ovl W(\x{0})} - {\ovl W(\Vec{X}^c)}| > \frac{1}{2}} \leq e^{-\Omega(\delta)}. 
\end{align}
From the proof of \autoref{thm:main2} (namely, \autoref{lem:k-step}) we know that for $K \in o(n/\log^2 n)$ and $\lambda(G) \in o(1/{{\log n}})$, after $O({n^2}/{\log n})$ steps
w.h.p. the discrepancy is $3$.
Provided that $\gamma(G) \in O(1)$, 
\cite{CRS24}
bounds the time to get from $K$ to $1$ as
\begin{align*}
    O\left({n^2}\left(\frac{K\log n}{n}+K\lambda(G)+\sqrt{\lambda(G)}+ \frac{\log n}{n^{\nicefrac{1}{3}}}\right)\right).
\end{align*}
For $K=3$ this is dominated by $O(\sqrt{\lambda(G)}\cdot n^2 + n^{5/3}\log n) = o((n/\log n)^2)$.
Thus, \autoref{thm:finalvalue} follows, using \eqref{final-disc-from-average} with $\delta = \log n$. \qed

\section{Reducing Discrepancy to  $O\left(\tfrac{\log n}{\Phi(G)}\right)$}
\label{sec:prooftechnical}

The main technical contribution of this paper is the following proposition, which bounds the time until the discrepancy is reduced from $K$ down to $\tfrac{32\log n}{\Phi(G)}$.

To develop intuition, consider configurations with large discrepancies.
In this regime, the process behaves similarly to 
the following discrete load-balancing, or token distribution, process.
Suppose that in each interaction a pair of adjacent nodes $(u,v)$ is selected. Then the node with higher load sends one unit of load to the node with lower (or equal) load. 
The expected evolution of load differences between neighboring nodes closely resembles the evolution of opinion differences in our process.
In fact, this process can be obtained by letting both endpoints perform the update of our protocol simultaneously.
The key difference between the two processes is that, in our setting, the average of the opinions changes over time while the two-sided/load balancing process preserves the load.
Nevertheless, we will see that techniques to analyze load balancing processes, most notably potential functions based on squared deviations, remain applicable, provided the potential is defined relative to suitable thresholds.
Formally, in this section, we prove the following result.
\begin{proposition}
\label{lemma:phasescombined}
Consider the \ProtAV process on  a simple connected graph with conductance $\Phi(G)$. 
Assume $\vec{x}(0)$ is an arbitrary configuration with discrepancy 
$K \ge \tfrac{32\cdot \log n}{\Phi(G)}$ and $c>0$.
Then, both in expectation and with probability at least $1-o\left(\tfrac{1}{(Kn)^c}\right)$, 
the discrepancy is reduced to $(3/4)\cdot K$ in 
$20(c+3)\tfrac{K \cdot n \cdot \log(Kn)}{\Phi(G)^2}$ 
steps.
\end{proposition}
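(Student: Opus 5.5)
We bound the two extremes separately. Write $\xmin{}(0)$ for the initial minimum. Since $\xmax{}$ never increases and $\xmin{}$ never decreases, it suffices to prove two one-sided statements within the stated time, both in expectation and with the stated probability:
\begin{itemize}
\item $\xmax{}$ drops to at most $\xmin{}(0)+\tfrac78K$;
\item $\xmin{}$ rises to at least $\xmin{}(0)+\tfrac18K$.
\end{itemize}
Together they give discrepancy at most $\tfrac34K$. A union bound then finishes, and the two statements are symmetric under $x\mapsto K-x$, so we treat only the upper one. The plan is a \emph{multi-scale threshold potential}. For thresholds $\tau_j=\xmin{}(0)+\tfrac K2+j$, $0\le j\le \tfrac38K$, set
\[
\Psi_j(t)=\sum_{v} d_v\,\bigl(X_v(t)-\tau_j\bigr)_+^2 .
\]
The last threshold is empty, i.e.\ $\Psi_{3K/8}=0$, exactly when $\xmax{}\le \xmin{}(0)+\tfrac78K$. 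We process the thresholds in phases. In phase $j$ we show that, starting from $\Psi_{j-1}=0$ (all opinions at most $\tau_{j-1}$), the set $S_j=\{v: X_v>\tau_j\}$ empties. The number of phases is linear in $K$, and the $\log(Kn)/\Phi^2$ per-phase cost then gives the stated product.

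The key one-step estimate is the following. Pick an edge $\{u,v\}$ with $x_u>x_v$. If $u$ is the active node and $x_u>\tau$, the potential drops by about $2d_u(x_u-\tau)$. If $v$ is active, it rises by about $d_v(2(x_v-\tau)_++1)$, and the $+1$ is paid only when $v$ steps across $\tau$. Summing, with rates $\tfrac1{n d_u}$ and $\tfrac1{n d_v}$ so that the $d$-weights cancel, we aim for
\[
\E{\Psi_j(t+1)-\Psi_j(t)\mid \vx(t)}\le -\frac1n\sum_{\{u,v\}\in E}\bigl|f_u-f_v\bigr|+\frac{|E(S_j,\overline{S_j})|}{n},
\]
where $f=(x-\tau_j)_+$. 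Two facts drive the progress. First, because the previous phase has ended, every node satisfies $x_v\le\tau_j$ or $x_v=\tau_j+O(1)$ above it, so $f\in\{0,1\}$ essentially. Second, the crossing term must be charged against the decrease along the cut. For $0/1$-valued $f$ supported on a set of volume at most $m$, conductance gives $\sum_E|f_u-f_v|\ge \Phi\,\vol{S_j}$. Combining this with the size of the potential should give a multiplicative drift of order $\Phi^2/n$ in $\Psi_j$. The $\Phi^2$ rather than $\Phi$ comes from losing the additive crossing term: we run the argument at threshold $\tau_j$ but charge the crossings to the layer just above. Since $\Psi_j\le 2m\cdot O(1)$ at the start of a phase, standard multiplicative-drift tail bounds give extinction within $O(\tfrac{n}{\Phi^2}\log(Kn))$ steps with failure probability $(Kn)^{-(c+3)}$. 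Summing over $O(K)$ phases and union-bounding yields the claim, with constants tuned to $20(c+3)$. The expectation bound follows by restarting on failure.

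The main obstacle is ensuring that $\vol{S_j}\le m$, so that conductance applies to the set above the threshold. The set above $\tau_j$ need not shrink, and the weighted average $\ovl W$ is only a martingale, not fixed. I would first try to exploit the symmetric pair of statements. At any time, at most one of $\{x>\xmin{}(0)+\tfrac K2\}$ and $\{x<\xmin{}(0)+\tfrac K2\}$ has volume exceeding $m$. So we run the upper-threshold argument while the upper set is small, and the lower one otherwise. The assumption $K\ge 32\log n/\Phi$ provides the slack for this: it gives $\Omega(\log n/\Phi)$ levels between the median and each target. The drift across those levels, via a Chernoff/Azuma argument on the $\ovl W$-martingale, should keep the relevant set on the correct side of volume $m$ w.h.p. throughout. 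If this fails, I would fall back on the linear potential $\sum_v d_v(x_v-\tau)_+$ with the volume-$\le m$ side chosen dynamically, and accept a looser constant.
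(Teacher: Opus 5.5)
Your proposal has two genuine gaps. The second one is the crux.

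\textbf{The one-sided reduction is false.} Your plan requires \emph{both} extremes to move by $K/8$, but in general only one of them does.
\begin{itemize}
\item Take all nodes at $K$ except one minimum-degree node at $0$.
\item Since $\ovl W$ is a bounded martingale, the consensus value $C$ satisfies $\E{C}=\ovl W(\vec{x}(0))\ge K(1-1/n)$.
\item As $C\le K$, Markov's inequality gives $\pr{C\le \tfrac78 K}\le 8/n$.
\item $\xmax{}$ is non-increasing with limit $C$, so your first bullet fails with probability $1-O(1/n)$.
\end{itemize}
The paper proves only a disjunction. For some $k^\star\in[\tfrac12K,\tfrac34K]$, the potential $\Psi^{(+)}_{k^\star}$ vanishes in the original \emph{or} in the mirrored process, i.e.\ $\xmax{}\le\tfrac34K$ or $\xmin{}\ge\tfrac14K$. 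Your closing idea of choosing the side via the median split points in this direction, but it cannot coexist with needing both bullets.

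\textbf{A fixed threshold has no multiplicative drift.} Your phase indexing is also inverted: if all opinions are at most $\tau_{j-1}<\tau_j$, then $S_j$ is already empty. The main issue is the drift estimate.
\begin{itemize}
\item For $f\in\{0,1\}$ you have $\sum_E|f_u-f_v|=|E(S_j,\ovl{S_j})|$. In your inequality the conductance gain therefore cancels the crossing term exactly, and you get only drift $\le 0$.
\item The true drift is $-\frac1n$ times the number of edges from $S_j$ to nodes at opinion at most $\tau_j-1$. Edges to nodes at exactly $\tau_j$ are pull-voting edges with zero net effect.
\item Example: on a constant-degree regular expander, put $n/4$ nodes at $K$, one node at $0$, and the rest at $K-1$. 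The top layer has volume $m/2\le m$ and $f\in\{0,1\}$.
\item Apart from the few edges near the low node, this layer is running 2-value pull voting against level $K-1$. Its volume is essentially a martingale, so removing it takes order $n^2$ steps, or never happens. Your per-phase bound claims $O(n\log(Kn))$.
\end{itemize}

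A single threshold can stay shielded like this for a long time. What cannot happen is that many thresholds are shielded at once, and this is the idea you are missing.
\begin{itemize}
\item The paper calls $k$ bad if $|E(S_{k+1},\ovl{S_k})|<\tfrac14\Phi\vol{S_{k+1}}$, and shows this forces $\vol{S_k}>(1+\tfrac34\Phi)\vol{S_{k+1}}$.
\item Hence in each step at most $4\log n/\Phi\le K/8$ of the $K/4$ thresholds in $[\tfrac12K,\tfrac34K]$ are bad. This is where $K\ge 32\log n/\Phi$ is used, not in an Azuma bound on $\ovl W$.
\item Restricting to the half of the steps in which the relevant median side has volume at most $m$ (original or mirrored process), some $k^\star$ is good in $T/4$ steps.
\item All potentials $\Psi^{(+)}_k$ run \emph{in parallel} as supermartingales over a single window, with $f$ ranging up to $K-k$ rather than over $\{0,1\}$.
\item In good steps each potential contracts by a factor $1-\Phi^2/(5(K-k)n)$. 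The $\Phi^2$ comes from a case split: the level-set (Mihail) identity when $\sum_i d_i f(x_i)$ is large, and the goodness condition otherwise.
\item A rescaled supermartingale with optional stopping, plus a union bound over $k$, finishes the proof.
\end{itemize}
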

The following immediate corollary gives the first part of the proof of \autoref{thm:main}.
\begin{corollary}
\label{thm:technical}
Let  $\alpha = \tfrac{32 \cdot \log n}{\Phi(G)}$.
Consider the \ProtAV process on a graph $G$ with conductance $\Phi(G)$.
Assume $\vec{x}(0)$ is a configuration with discrepancy $K$. 
Then, both in expectation and with high probability, the discrepancy is reduced to $\alpha$ in $O\!\left(\tfrac{K \cdot n \cdot \log(Kn)}{\Phi(G)^2}\right)$ steps.
\end{corollary}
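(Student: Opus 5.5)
We derive the corollary from \autoref{lemma:phasescombined} by iterating it. Set $K_0 = K$ and $K_{j+1} = \lceil (3/4) K_j\rceil$. As long as the current discrepancy $D_j$ is at least $\alpha = 32\log n/\Phi(G)$, we apply \autoref{lemma:phasescombined} to the current configuration. The process is Markov, so the proposition may be applied afresh at the random time phase $j$ begins. Phase $j$ ends once the discrepancy is at most $(3/4)D_j$, and the discrepancy never increases. This holds because each update moves $x_u$ one step toward $x_v$, so the new value stays within $[\xmin{},\xmax{}]$. Hence $D_{j+1}\le (3/4) D_j$ and $D_j \le (3/4)^j K$. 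The number of phases needed to reach $\alpha$ is therefore at most $J=\lceil\log_{4/3}(K/\alpha)\rceil+1 = O(\log K)$.

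The time bound is a geometric sum. Phase $j$ takes at most $20(c+3) D_j n\log(D_j n)/\Phi(G)^2 \le 20(c+3)(3/4)^j K n\log(Kn)/\Phi(G)^2$ steps, in expectation and with probability $1-o((D_jn)^{-c})$. Summing over $j$ gives total time at most $4\cdot 20(c+3) K n\log(Kn)/\Phi(G)^2 = O(Kn\log(Kn)/\Phi(G)^2)$. For the expectation we sum the per-phase expected bounds; this is valid by linearity of expectation and the strong Markov property. Each phase's expected bound from the proposition holds uniformly over starting configurations with that discrepancy, so conditioning on the start of the phase causes no issue.

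The high-probability bound is the step I expect to be the main obstacle, specifically making the failure probability small. In phase $j$ we have $D_j\ge\alpha\ge 32\log n \ge 1$, so $D_j n\ge n$ and the failure probability is $o(n^{-c})$. A union bound over $J=O(\log K)$ phases gives total failure probability $o(n^{-c}\log K)$. This is only $O(n^{-c'})$ when $\log K$ is at most polynomial in $n$. If $K$ is huge, so that $\log K$ is not $n^{O(1)}$, the phases with large $D_j$ fail with probability $o((D_jn)^{-c})$, which is tiny. Summing $\sum_j (D_j n)^{-c}$ over the geometric sequence $D_j\ge \alpha (4/3)^{J-j}$ is itself dominated by a geometric series, giving total failure probability $O((\alpha n)^{-c}) = O(n^{-c})$.

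Choosing $c$ a large enough constant then yields the w.h.p. statement with the stated time bound. The factor $\log(D_jn)\le\log(Kn)$ is what lets every phase be charged against the same $\log(Kn)$.
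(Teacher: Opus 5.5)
Your proposal is correct and follows essentially the same route as the paper. You iterate \autoref{lemma:phasescombined} over $O(\log K)$ phases that each shrink the discrepancy by a factor $3/4$, sum the geometrically decreasing phase lengths for the expectation bound, and union-bound the phase failures for the high-probability bound. Your handling of the union bound is in fact more careful than the paper's: you sum the per-phase failure probabilities $o((D_j n)^{-c})$ as a geometric series dominated by the last phase with $D_j \ge \alpha$, whereas the paper charges every phase $o(1/(Kn^c))$ and multiplies by $O(\log K)\subseteq O(K)$ phases, even though the proposition's guarantee for phase $i$ is stated in terms of $K_i$ rather than $K$.
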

\begin{proof}
We divide the execution into $O(\log K)$ phases.
Each phase corresponds to reducing the discrepancy by a factor of $3/4$.
For $i \geq 1$ we define  $K_i = K\cdot (\tfrac{3}{4})^{i}$. Hence, in phase $i$  the discrepancy is reduced from $K_{i-1}$ to $K_i$.
In following, let the random variable $T_i$ denote the length of phase $i$.
By \autoref{lemma:phasescombined}, the time to reduce the discrepancy from $K_{i-1}$ to $K_{i}$ is 
$O(\frac{K_i \cdot n \log{(K_i n)}}{\Phi(G)^2})$, in expectation and with probability $1-o(\tfrac{1}{Kn^c})$ even when started from the worst possible configuration with discrepancy $K_{i-1}$.
For convenience, 
define
\[
\beta \coloneqq \tfrac{K \cdot n \cdot \log(Kn)}{\Phi(G)^2},
\]
and let
\[
i^\star \coloneqq \left\lceil \log_{4/3}\left(\tfrac{K}{\alpha}\right) \right\rceil
\]
denote the number of phases required until the discrepancy drops below $\alpha$.
Then the time to reduce the discrepancy from $K$ to $\alpha$ is
\begin{align*}
\sum_{i=0}^{i^\star-1}
       T_i
    \le
       O(\beta)\cdot
       \sum_{i=0}^{i^\star-1} (\tfrac{3}{4})^i \le O(\beta).
\end{align*}
This establishes the bound in expectation.
For the high-probability bound, note that each phase duration satisfies the same bound with probability $1-o(\tfrac{1}{Kn^c})$.
Since the total number of phases is
$i^\star = O(\log(K/\alpha)) = O(\log K) \in O(K)$,
a union bound over all phases implies that the overall runtime satisfies the same bound with high probability.
\end{proof}

We now outline the proof of \autoref{lemma:phasescombined}.
The proof is based on a potential-function argument.
However, instead of using a single potential, we introduce a family of potential functions
$\Psi_k(\Vec{X}(t))$, one for each index $k \in [K/2, K]$.
To define the potential functions it will be convenient to split the nodes into two groups based on their opinion.
We thus define the positive distance and negative distance to the range $[K-k, k]$ as:
\begin{align}
    \varphi^{(+)}_k(x)  &:= \begin{cases}
        x - k, & \textit{if } x \geq k,\\
         0, & \textit{otherwise;}
    \end{cases}
    \nonumber \\
    \varphi^{(-)}_{K-k}(x)  &:= \begin{cases}
        (K-k) - x, & \textit{if } x \leq K-k,\\
         0, & \textit{otherwise}.
    \end{cases}
    \label{eqn:posneqdistance}
\end{align}

Now we define the positive and negative potentials w.r.t.\ a fixed value $k\in [K/2, K]$ as follows.
Recall that $d_i$ is the degree of node $i$.
\begin{align*}
 \Psi^{(+)}_k(\X{t}) &= \sum_{i \in V} d_i \cdot \left(\varphi^{(+)}_k(X_i(t))\right)^2, \\
 \Psi^{(-)}_{K-k}(\X{t}) &= \sum_{i \in V} d_i \cdot \left(\varphi^{(-)}_{K-k}(X_i(t))\right)^2
\end{align*}
The potential corresponding to index $k$ is defined as
\begin{align}
 \Psi_k(\X{t}) = \min \left\{\Psi^{(+)}_k(\X{t}), \Psi^{(-)}_{K-k}(\X{t}) \right\}.
\end{align}
Each potential $\Psi_k(\Vec{X}(t))$ measures the remaining discrepancy relative to the threshold $k$
\emph{at the end of $t$ steps}.
Note that as soon as $\Psi_k(\X{t}) = 0$, the discrepancy is at most $k$ 
(i.e., has decreased by at least $K-k$) because all node values are at most
$k$ or all of them are at least $K-k$. 

For the analysis, it will often be useful to consider the \emph{mirrored process} $(\Vec{X}'(t))_{t \geq 0}$,
defined by 'mirroring' configurations of the actual process:
\[
    \Vec{X}'(t) \coloneqq (K-X_1(t), \ldots, K-X_n(t)).
\]
Note that the processes $(\Vec{X}(t))_{t \ge 0}$ and $(\Vec{X}'(t))_{t \ge 0}$ can be trivially coupled given that the original process starts in $\Vec{x}(0)$ and the mirrored process starts in $\Vec{x'}(0)$.
Under this coupling, any discrepancy reduction in $\Vec{X}(t)$ corresponds to exactly the same discrepancy reduction in $\Vec{X}'(t)$.
We make the following observations.

\begin{observation}\label{obs:ActualMirrorred}
Let $(\Vec{X}(t))_{t \ge 0}$ and $(\Vec{X}'(t))_{t \ge 0}$ be the original and the mirrored processes.
Then, for all $t \geq 0$,
\begin{enumerate}
    \item $\disc{\Vec{X}(t)} = \disc{\Vec{X'}(t)}$
    \item $\Psi^{(-)}_{K-k}(\Vec{X}(t)) = \Psi^{(+)}_{k}(\Vec{X'}(t))$
        \item $\min\{\Psi^{(+)}_{k}(\Vec{X}(t)),\Psi^{(-)}_{K-k}(\Vec{X}(t))\} = 0 \Rightarrow \disc{\Vec{X}(t)} \leq k$
    \item $\min\{\Psi^{(+)}_{k}(\Vec{X}(t)),\Psi^{(+)}_{k}(\Vec{X'}(t))\} = 0 \Rightarrow \disc{\Vec{X}(t)} \leq k$
\end{enumerate}
\end{observation}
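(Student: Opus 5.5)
All four items follow directly from the definitions. The only standing fact needed is that opinions stay in $[K]=\{0,\ldots,K\}$ for all $t$. This holds because $\vec{x}(0)\in\mathcal{X}(K)$ and an update moves $x_u$ by one unit towards some $x_v$ with $x_v\neq x_u$. Hence $x_u(t+1)$ lies between $x_u(t)$ and $x_v(t)$, so $\xmin{}$ never decreases and $\xmax{}$ never increases. We also use that $G$ is connected with $n\ge 2$, so every degree satisfies $d_i\ge 1$.

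For item (1), write $X'_i(t)=K-X_i(t)$. Then $\max_i X'_i(t)=K-\min_i X_i(t)$ and $\min_i X'_i(t)=K-\max_i X_i(t)$. Subtracting, the $K$'s cancel and $\disc{\Vec{X}'(t)}=\disc{\Vec{X}(t)}$.

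For item (2), we check pointwise that $\varphi^{(-)}_{K-k}(x)=\varphi^{(+)}_k(K-x)$ for every $x$:
\begin{itemize}
\item If $x\le K-k$, then $K-x\ge k$, and both sides equal $(K-k)-x=(K-x)-k$.
\item Otherwise $x>K-k$, so $K-x<k$, and both sides are $0$.
\end{itemize}
Squaring, multiplying by $d_i$ and summing over $i\in V$ gives $\Psi^{(-)}_{K-k}(\Vec{X}(t))=\Psi^{(+)}_k(\Vec{X}'(t))$.

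For item (3), suppose first that $\Psi^{(+)}_k(\Vec{X}(t))=0$. This is a sum of nonnegative terms $d_i\,(\varphi^{(+)}_k(X_i(t)))^2$ with $d_i\ge 1$, so each $\varphi^{(+)}_k(X_i(t))=0$. Hence $X_i(t)\le k$ for all $i$ (note the boundary value $x=k$ also gives $\varphi^{(+)}_k=0$). Together with $X_i(t)\ge 0$ this gives $\disc{\Vec{X}(t)}\le k$. Symmetrically, $\Psi^{(-)}_{K-k}(\Vec{X}(t))=0$ forces $X_i(t)\ge K-k$ for all $i$. Combined with $X_i(t)\le K$, this again gives discrepancy at most $k$.

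Item (4) follows from items (2) and (3): substituting $\Psi^{(+)}_k(\Vec{X}'(t))=\Psi^{(-)}_{K-k}(\Vec{X}(t))$ turns the hypothesis of (4) into that of (3).

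There is no real obstacle here. The only point needing care is the invariant that opinions remain in $[0,K]$. Item (3) uses it, and it also guarantees that the mirrored process is again a configuration in $\mathcal{X}(K)$.
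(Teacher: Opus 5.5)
Your proof is correct and matches the paper's argument item by item: the $K$'s cancel in (1), the pointwise identity $\varphi^{(-)}_{K-k}(x)=\varphi^{(+)}_k(K-x)$ gives (2), a vanishing potential forces $x_{\mathsf{max}}\le k$ or $x_{\mathsf{min}}\ge K-k$ in (3), and substituting (2) into (3) gives (4). You are slightly more careful than the paper, which checks the identity in (2) only for $x\le K-k$ and takes for granted both the invariant that opinions stay in $[0,K]$ and that $d_i\ge 1$.
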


\begin{proof}
We show each observation separately.

\noindent (1)
For each configuration $\vec{x}$,
\begin{align*}
    \disc{\Vec{x}}
    =
    \xmax{} - \xmin{}
    =
    (K - \xmin{}) - (K - \xmax{})
    =
    \disc{\Vec{x'}}.
\end{align*}

\noindent (2)
Note that for every $x \le K-k$, it holds that
\[
    \varphi^{(-)}_{K-k}(x)
    =
    (K-k) - x
    =
    (K-x) - k
    =
    \varphi^{(+)}_{k}(K-x).
\]
Thus, for each configuration $\vec{x}$,
\begin{align*}
    \Psi^{(-)}_{K-k}(\Vec{x})
    =
    \sum_{i \in V} d_i \cdot \varphi^{(-)}_{K-k}(x_i)^2
    =
    \sum_{i \in V} d_i \cdot \varphi^{(+)}_k(K-x_i)^2
    =
    \Psi^{(+)}_k(\Vec{x'}).
\end{align*}
\noindent (3) 
Note that $\Psi^{(+)}_{k}(\Vec{x})=0$ implies $\xmax{} \leq k$ and $\Psi^{(-)}_{K-k}(\Vec{x})=0$ implies $\xmin{} \geq K-k$.
Either event implies  $$\disc{\Vec{x}} = \xmax{} - \xmin{}  \leq k,$$
as $\xmax{} \leq K$ and $\xmin{} \geq 0$ always.

\noindent (4)
Follows directly from (1), (2), and (3).
\end{proof}

Thus, by \autoref{obs:ActualMirrorred}(4), 
it suffices to analyze the positive potentials in the original and 
the mirrored processes.
We will therefore consider the positive potentials in the original and mirrored processes and show that at least one of them must decrease.
Furthermore, once the potential for some $k \in [(1/2) K,(3/4) K]$ is zero, the phase must be over.

Next, in \autoref{sec:potentialdrop}, we prove two key properties.
First, we show that each potential function is monotonically non-increasing over time.
Second, for each $k$ we define good configurations as configurations with many edges between nodes with opinions at least $k+1$ and nodes with opinions at most $k-1$.
\begin{definition}[$k$-good Configurations]
\label{def:kgood}
For any configuration $\vec{x}$ and integer $k$, define
\begin{align}
    S   \coloneqq \{ i \in V \mid x_i \ge k \},
    \;
    \overline{S} \coloneqq V \setminus S, 
    \;
    S_{+1} \coloneqq \{ i \in V \mid x_i \ge k+1 \}.
\label{defn-Splus}
\end{align}
We call configuration $\vec{x}$ $k$-good if and only if
$\vol{S} \le m$
and 
$\abs{E(S_{+1},\ovl S)} \geq \tfrac{1}{4}\Phi(G)\cdot\vol{S_{+1}}$.
Further, if the configuration $\vx(t)$ at the beginning of step $t$ is $k$-good we say that the (threshold) index $k$ is good in step $t$.
(Recall that step $t$ uses $\vx(t)$ to decide $\X{t+1}$.)
\end{definition}
The definition of $k$-good configurations applies to both the original and the mirrored processes.
We will show that in a good configuration the potential corresponding to index $k$ decreases 
at least by an amount that is proportional to the conductance.

In \autoref{sec:goodindex} we prove that either in the original or in the mirrored process many of the indices are  frequently good.
More precisely, using a pigeonhole argument together with structural properties of the configuration, we show that there exists an index
$k^\star \in [(1/2)K, (3/4)K]$
that is good in a constant fraction of all steps, either in the original or in the mirrored process.
Finally we combine these results  in \autoref{sec:lemmaproof}.
Suppose $k^{\star}$ is good in the original process.
Since the potential $\Psi^{(+)}_{k^\star}(\Vec{X}(t))$ decreases whenever $k^\star$ is good and it never increases,
it follows that this potential eventually reaches zero, given a sufficiently number of  good steps.
This implies that the discrepancy drops below $k^\star \le (3/4)K$, completing the proof of the proposition.

\subsection{Potential Drop (For a Fixed $k$)}
\label{sec:potentialdrop}
In this section, we analyze the expected drop for a potential.
To this end, fix an arbitrary integer threshold $k \in [K/2, K-1]$ and show two things.
First, the positive potential associated with $k$ is a supermartingale and decreases in expectation;
and second, in each good configuration, it decreases by multiplicative factor proportional to the conductance. 
Formally, we prove in this section the following lemma.

\newcommand{\Ext}[1]{\underset{_{\Vec{X}(t) = \Vec{x}}}{\mathbf{E}}\left[#1\right]}

\begin{lemma}
\label{lemma:potentialdrop}
Let $G \coloneqq (V,E)$ be a simple graph with conductance $\Phi(G)$ and assume $\vec{x}$ is an arbitrary configuration with discrepancy at most $K$.
For any $k \in [K/2,K-1]$, the following holds.
\begin{enumerate}[itemsep=2pt,topsep=2pt,parsep=0pt,partopsep=0pt,leftmargin=14pt]
\item The positive potential is non-increasing in expectation:
\[
    \E{\Psi^{(+)}_k(\Vec{X}(t+1))\mid \X{t} = \Vec{x}}
    \;\le\;
    \Psi^{(+)}_k(\Vec{x})
\]

\item If $\Vec{x}$ is $k$-good in step $t$, then the positive potential exhibits multiplicative drift:
\[
    \E{\Psi^{(+)}_k(\Vec{X}(t+1)) \mid \X{t} = \Vec{x}}
    \;\le\;
    \left(1 - \frac{\Phi(G)^2}{5(K-k)n}\right)\cdot \Psi^{(+)}_k(\Vec{x})
\]
\end{enumerate}
The same holds for the positive potential $\Psi^{(+)}_k(\Vec{X'}(t+1))$ for  the mirrored process.
\end{lemma}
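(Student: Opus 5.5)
The plan is to compute the one-step expected change of $\Psi^{(+)}_k$ exactly, edge by edge, and then lower-bound the resulting negative drift using the conductance. Write $y_i \coloneqq \varphi^{(+)}_k(x_i)$. In step $t$ node $u$ is chosen with probability $1/n$ and its neighbour $v$ with probability $1/d_u$. The term $d_u y_u^2$ changes by $d_u(1-2y_u)$ if $u$ moves down while $x_u\ge k+1$, and by $d_u(2y_u+1)$ if $u$ moves up while $x_u\ge k$; all other moves leave $\Psi^{(+)}_k$ unchanged. The factor $d_u$ cancels the $1/d_u$, so
\[
\E{\Psi^{(+)}_k(\X{t+1})-\Psi^{(+)}_k(\vx)\mid \X{t}=\vx} = \frac1n\sum_{\{u,v\}\in E}\Delta_{uv}.
\]
Here $\Delta_{uv}$ is the total contribution of the two orientations of the edge $\{u,v\}$.

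For part (1) I then classify the edges $\{u,v\}$ with $x_u>x_v$.
\begin{itemize}
\item If both endpoints lie in $S$, then $\Delta_{uv}=(1-2y_u)+(2y_v+1)=2-2(x_u-x_v)\le 0$, and $\Delta_{uv}\le-(x_u-x_v)$ once $x_u-x_v\ge2$.
\item If $u\in S_{+1}$ and $v\in\ovl S$, only $u$ contributes, and $\Delta_{uv}=1-2y_u\le -y_u$.
\item All remaining edges contribute $0$ (for instance $x_u=k$ with $x_v<k$, or both endpoints in $\ovl S$).
\end{itemize}
Hence every $\Delta_{uv}\le 0$, which proves part (1). The same computation applies verbatim to the mirrored process, since it is again an instance of \ProtAV.

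For part (2) it suffices to show
\[
-\sum_e \Delta_e \;\ge\; \frac{\Phi(G)^2}{5}\sum_i d_i y_i .
\]
This is enough because $y_i\le K-k$ gives $\Psi^{(+)}_k(\vx)\le (K-k)\sum_i d_i y_i$. My approach is a level-set (co-area) argument.
\begin{itemize}
\item For $j\ge1$ let $S_j\coloneqq\{i: y_i\ge j\}\subseteq S_{+1}$. Since $\vol{S_j}\le\vol{S}\le m$, the definition of conductance gives $|E(S_j,V\setminus S_j)|\ge\Phi(G)\vol{S_j}$.
\item Summing over $j$ gives $\sum_{\{u,v\}\in E}|y_u-y_v|\ge\Phi(G)\sum_i d_iy_i$.
\item By the case analysis above, $-\Delta_e\ge |y_u-y_v|$ for every edge except the edges inside $S$ whose opinions differ by exactly $1$, which contribute nothing.
\end{itemize}

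The main obstacle is therefore the mass of $\sum_e|y_u-y_v|$ carried by these difference-one edges. This is where the $k$-goodness condition and the square of $\Phi(G)$ should enter. My first attempt is a case split. If the edges of difference at least $2$ together with the edges $E(S_{+1},\ovl S)$ carry at least a $\Phi(G)/5$ fraction of $\Phi(G)\sum_i d_iy_i$, we are done. Otherwise the profile of $y$ is a fine staircase with unit steps, so $\sum_i d_iy_i$ should be controlled, up to a factor $O(1/\Phi(G))$, by the bottom-level volume $\vol{S_{+1}}$. In that case the goodness bound $|E(S_{+1},\ovl S)|\ge\frac14\Phi(G)\vol{S_{+1}}$, with each such edge contributing at least $y_u\ge1$, supplies the drift.

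To make the second case rigorous, I will try to show that when the unit-step edges dominate, the volumes $\vol{S_j}$ decay geometrically in $j$ at rate roughly $1-\Phi(G)$. Each level cut has at least $\Phi(G)\vol{S_j}$ edges, and if almost all of them are unit steps they land in $S_{j-1}$. This would give $\sum_j\vol{S_j}\lesssim\vol{S_{+1}}/\Phi(G)$, and combining it with the goodness bound yields the $\Phi(G)^2$ factor; the constant $5$ is then whatever survives this bookkeeping. If the geometric decay does not come out cleanly, the fallback is to make the same case split level by level and sum the resulting bounds.
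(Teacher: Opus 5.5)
Your per-edge drift computation, part~(1), the remark on the mirrored process, and the level-set bound $\sum_{\{u,v\}\in E}|y_u-y_v|\ge\Phi(G)\sum_i d_iy_i$ are all correct. They coincide, up to bookkeeping, with the paper's edge-wise bound and its Mihail-type identity.

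The gap is the step you yourself flag as the main obstacle: part~(2) is not actually established. Your planned route, geometric decay of $\vol{S_j}$ in $j$, does not follow from your case hypothesis. That hypothesis is global: the non-unit edges carry little of $\sum_e|y_u-y_v|$ in total. It does not force each individual cut $E(S_j,V\setminus S_j)$ to consist mostly of unit steps. For instance, a gap in the occurring opinion values makes several consecutive $S_j$ coincide, so there is no decay at all across those levels. Hence the per-level recursion $\vol{S_{j-1}}\ge(1+c\,\Phi(G))\vol{S_j}$ is not available, and the ``level by level'' fallback is left unspecified.

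The missing idea is a one-line count, and no decay is needed. Every edge inside $S$ whose endpoints differ by exactly $1$ has its upper endpoint in $S_{+1}$, so the number $N_1$ of such edges is at most $\vol{S_{+1}}$. The paper splits on $\sum_i d_iy_i$ versus $\tfrac54\vol{S_{+1}}/\Phi(G)$.
\begin{itemize}
\item If $\sum_i d_iy_i$ is large, then $N_1\le\vol{S_{+1}}\le\tfrac45\Phi(G)\sum_i d_iy_i$. Your level-set bound then leaves $-\sum_e\Delta_e\ge\tfrac15\Phi(G)\sum_i d_iy_i$.
\item If it is small, then $\vol{S_{+1}}>\tfrac45\Phi(G)\sum_i d_iy_i$. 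Goodness, together with $-\Delta_e\ge 1$ on $E(S_{+1},\ovl S)$ and $-\Delta_e\ge0$ elsewhere, gives $-\sum_e\Delta_e\ge\tfrac14\Phi(G)\vol{S_{+1}}>\tfrac15\Phi(G)^2\sum_i d_iy_i$.
\end{itemize}
Your own split closes the same way. In your second case, $\vol{S_{+1}}\ge N_1>(1-\Phi(G)/5)\,\Phi(G)\sum_i d_iy_i$. This is exactly the aggregate bound $\sum_j\vol{S_j}=O(\vol{S_{+1}}/\Phi(G))$ you were aiming for, obtained without any per-level structure. It also yields the constant $5$ directly, since $\tfrac14(1-\Phi(G)/5)\ge\tfrac15$.
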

\begin{proof}
We begin the analysis by introducing some notations and assumptions that we use in this section.
During the whole proof, we will condition all expectations and probabilities on the event that $\{\Vec{X}(t) = \Vec{x} \}$ for some valid configuration $\Vec{x} \in [K]^n$.
For readability, we just write $\E{\cdot}$ and $\pr{\cdot}$ instead of $\E{\cdot \mid \Vec{X}(t) = \Vec{x}}$ and $\pr{\cdot \mid \Vec{X}(t) = \Vec{x}}$, respectively.
Further, since we only consider the positive potential and a fixed $k$, we write $f(x)$ instead of $\varphi_k^{(+)}(x)$.
Recall that $f(x)= (x-k)^+$, that is, the threshold function of the form
\[
    f(x) =
    \begin{cases}
        x-k, & \text{if } x \ge k,\\
        0, & \text{otherwise.}
    \end{cases}
\]
We define the random variable
\begin{align*}
    \Delta(\Vec{X}(t+1)) &=
    \Psi^{(+)}_k(\Vec{X}(t+1)) - \Psi^{(+)}_k(\Vec{X}(t+1))
    =\sum_{i \in V} d_i \cdot f(X_i(t+1))^2
    -
    \sum_{i \in V} d_i \cdot f(X_i(t))^2,
\end{align*}
which measures the change of the potential in step $t$.
Since we condition on $\Vec{X}(t) = \Vec{x}$, we have
\[
\E{\Delta(\Vec{X}(t+1))} =
\E{\sum_{i \in V} d_i \cdot f(X_i(t+1))^2}
-
\sum_{i \in V} d_i \cdot f(x_i)^2.
\]
Given this definition, it suffices to show that $\E{\Delta(\Vec{X}(t+1))} \leq 0$ to prove part (1) of the lemma.
To prove part (2), we need to show that if $\Vec{x}$ is $k$-good in step~$t$, then
\begin{equation}\label{eq:driftLem4.4}
\E{\Delta(\Vec{X}(t+1))}
\leq
-
\frac{\Phi(G)^2}{5(K-k)n}\cdot
\sum_{i \in V} d_i \cdot f(x_i)^2.
\end{equation}
While $G = (V,E)$ is a simple \emph{undirected} graph,
it will be convenient to view $G$ as a bi-directed graph that has two directed edges $(i,j)$ and $(j,i)$ for each undirected edge $\{i,j\}$; 
and we use the notation~\eqref{defn-Splus}.

Having established these preliminaries, we begin the analysis with a technical claim that will immediately imply (1) and aid us in the proof of (2).
We show that while the potential can increase if a node increases its opinion, in expectation it does not.
More precisely, we show that the expected potential change can be bounded by a sum over directed edges involving only linear differences of the form $f(x_i) - f(x_j)$. 
We present the full proof of the following claim in the appendix (cf. \autoref{sec:proofClaim1}).
\begin{claim}\label{Claim1}
Let $E_{S_{+1}} = \{ (i,j) \in E \mid i \in S_{+1}\}$.  Then it holds:
\begin{align}
     \E{\Delta(\Vec{X}(t+1))}
     \le
     -\,
     \frac{1}{n}
     \sum_{\substack{(i,j) \in E_{S_{+1}}\\ x_i\geq x_j}}
     \left(
     f(x_i) - f(x_j)
     -
     \mathds{1}_{\{|x_i-x_j |= 1\}}
     \right)
\label{EsumClaim1}
\end{align}
\end{claim}
\begin{proof}[Proof Sketch]
 The claim follows from elementary calculations.
The core idea is as follows:
For two adjacent nodes $i$ and $j$ with $x_i > x_j$, we can amortize the increase in potential when the edge $(j,i)$ is chosen with the decrease in potential when $(i,j)$ is chosen.
Both these changes cancel out in expectation for edges $\{i,j\}$ with difference $1$, while for edges with difference $2$ or more, 
the convexity of the square function implies that the expected decrease of $x_i$ dominates the expected increase of $x_j$.
The difference is precisely captured by the term $f(x_i) - f(x_j)$, and summing over all edges yields the claim.   
\end{proof}
Note that in this notation, we consider the potential difference along the edges (denoted by $f(x_i) - f(x_j)$) and add an error term for those edges which have the opinion difference of exactly $1$ (denoted by $\mathds{1}_{\{|x_i - x_j| =1\}}$). 
This claim 
will make the remainder of the proof significantly simpler, as it eliminates both the degree weights and the squared terms. 

\paragraph{Proof of (1)} Having established \autoref{Claim1} we now use~\eqref{EsumClaim1} to prove the first statement of the lemma.
To this end, note that each summand in the sum in~\eqref{EsumClaim1} is non-negative:
as $x_i\ge k+1$, if $|x_i-x_j|\ge 1$ then $f(x_i)-f(x_j)\ge 1$ and the summand is non-negative, and
if $x_i = x_j$ then the summand is equal to $0$.
Thus $\E{\Delta(\Vec{X}(t+1))} \le 0$, proving that the potential is non-increasing in expectation and therefore a supermartingale.

\paragraph{Proof of (2)} For proving (2), we require some more techniques.
First, we show that the edge-wise differences $f(x_i) - f(x_j)$ can be bounded in terms of the graph's conductance.
Using techniques from the analysis of random walk and continuous load balancing, we prove the following claim.
\begin{claim}
\label{claim:averaging}
Suppose that $\vol{S} \leq m$, then
\begin{align}
     \E{\Delta(\Vec{X}(t+1))}
     \le
     -
     \frac{\Phi(G)}{n}\cdot
     \sum_{i \in V} d_i f(x_i)
     +
     \frac 1n
     \sum_{\substack{(i,j) \in E_{S_{+1}}\\ x_i\geq x_j}}
     \!\!\!\! \mathds{1}_{\{|x_i-x_j |= 1\}}.
\label{Part2Ineq1}
\end{align}
\end{claim}
\begin{proof}
For the proof, 
we show that the RHS in~\eqref{Part2Ineq1} is at least the RHS in \eqref{EsumClaim1}, that is, after
canceling the error term and the $1/n$ factor, we need to show that
\begin{align}
    \sum_{\substack{(i,j) \in E_{S_{+1}}\\ x_i\geq x_j}}
    \left(
    f(x_i) - f(x_j)
    \right)
    \ge
    \Phi(G)
    \cdot
    \sum_{i \in V} d_i f(x_i).
\label{eq:edgediff}
\end{align} 
We show this inequality using a clever combinatorial trick from Mikhail~\cite{Mihail89}, where it was used to bound the convergence of simple random walks on $d$-regular graphs.
We only need to slightly adapt the analysis from \cite{Mihail89} to fit our process.
Assume w.l.o.g. the nodes are ordered by decreasing opinion, i.e., $x_1 \geq x_2 \geq \ldots \geq x_n$. This way $i < j$ implies $x_i \geq x_j$.
Further, let $j_k$ be the last index with $x_{j_k} > k$, that is, $S_{+1} = \{1,2,\ldots,j_k\}$,
and observe that $f(x_i) = 0$ for all $i > j_k$.
Thus, 
\begin{align}
    \sum_{\substack{(i,j) \in E_{S_{+1}}\\ x_i\geq x_j}}
    \left(
    f(x_i) - f(x_j)
    \right)= \sum_{\substack{(i,j) \in E \\ i<j }} (f(x_i)-f(x_j)).
\label{eq:klkl}
\end{align}
For each $\ell\in [n]$, 
let 
$A_\ell:= \{v_1, \ldots, v_\ell\}$ 
be the set of the first $\ell$ nodes after ordering by decreasing opinion.
Further, define $\mathbb{O}(A_\ell)$ to be set of edges leading out of $A_\ell$:
\begin{align*}
    \mathbb{O}(A_\ell) \coloneqq \left\{ \{v,w\} \in E \mid v \in A_\ell\text{ and } w \not\in A_\ell \right\}.
\end{align*}
Mikhail~\cite{Mihail89} proves a useful identity, from which inequality~\eqref{eq:edgediff} will follow, namely
\begin{align}
\sum_{\substack{(i,j) \in E \\ i<j }} (f(x_i)-f(x_j))=\sum_{\ell=1}^{n-1}(f(x_{\ell})-f(x_{\ell+1}))|\mathbb{O}(A_\ell)|. 
\label{eq:ppp}
\end{align}
The above identity holds because the LHS can be written as\\
\begin{align*}
\sum_{\substack{(i,j) \in E \\ i<j }} \sum_{\ell=i}^{j-1}(f(x_\ell)-f(x_{\ell+1}))
=\sum_{\ell=1}^{n-1} \sum_{\substack{(i,j) \in E \\ i\le \ell <j }} (f(x_\ell)-f(x_{\ell+1})).
\end{align*}
The index set of the inner sum on the right is the set of edges $(i,j) \in E$ where $i\le \ell <j$, 
so the set $\mathbb{O}(A_\ell)$, implying that this inner sum is equal to 
$(f(x_{\ell})-f(x_{\ell+1}))|\mathbb{O}(A_\ell)|$.\\

With identity~\eqref{eq:ppp}, we can return to the statement we want to prove.
Recall that for each $S \subseteq V$, we defined $\vol{S} = \sum_{v \in S} d_v$, so for each $1\le \ell\le n$, 
$\vol{A_\ell} = \sum_{i=1}^\ell d_i$ and $\vol{A_\ell}-\vol{A_{\ell-1}}=d_\ell$,
setting $\vol{A_0}=0$.
By the definition of the conductance $\Phi(G)$ (given in \autoref{eq:conductance}), we have $|\mathbb{O}(A_\ell)| \geq \Phi(G)\cdot \vol{A_\ell}$ for any $\ell\ge 1$ with $ \vol{A_\ell} \leq m$.
By the definition of $j_k$ and the conditions of the claim, for $1 \le\ell\le j_k$,
$\vol{A_\ell} \le \vol{A_{j_k}} = \vol{S_{+1}} \le \vol{S} \le m$; and recall that for $\ell> j_k$, 
$f(x_{\ell})=0$.
Thus \eqref{eq:edgediff} follows from~\eqref{eq:klkl}, \eqref{eq:ppp} and the following:
\begin{align*}
&\sum_{\ell=1}^{n-1}(f(x_{\ell})-f(x_{\ell+1}))|\mathbb{O}(A_\ell)| \\
&\;\; = \; \sum_{\ell=1}^{j_k}(f(x_{\ell})-f(x_{\ell+1}))|\mathbb{O}(A_\ell)| \\
&\;\; \ge \; \Phi(G) \sum_{\ell=1}^{j_k}(f(x_{\ell})-f(x_{\ell+1}))\vol{A_\ell} \\
&\;\; = \; \Phi(G) \left( \sum_{\ell=1}^{j_k}f(x_{\ell})\cdot\vol{A_\ell} - \sum_{\ell=1}^{j_k} f(x_{\ell})\cdot\vol{A_{\ell-1}} \right)\\
&\;\; = \; \Phi(G) \sum_{\ell=1}^{j_k} f(x_{\ell})\cdot d_\ell \; = \; \Phi(G) \sum_{\ell=1}^n f(x_{\ell})\cdot d_\ell.  \qedhere
\end{align*}
\end{proof}

We now use inequalities~\eqref{EsumClaim1} and~\eqref{Part2Ineq1} to complete the proof of part (2) of \autoref{lemma:potentialdrop}.
The next step is to prove the following claim.

\begin{claim}
\label{Claim2}
Suppose that $\Vec{x}$ is $k$-good in step $t$.
Then,
\begin{align}
\E{\Delta({\Vec{X}(t+1)})}
\le
-
\frac{\Phi(G)^2}{5n}\cdot
\sum_{i \in V} d_i f(x_i).
\label{Claim2Value}
\end{align}
\end{claim}

\begin{proof}[Proof sketch]
(The full proof is in the appendix \autoref{sec:proofClaim2}.)
We distinguish two cases defined by the magnitude of $\sum_{i \in V} d_i f(x_i)$.

\medskip
\noindent
\textbf{Case 1:}
Suppose $\sum_{i \in V} d_i f(x_i)\ge \frac{5}{4}\cdot\frac{\vol{S_{+1}}}{\Phi(G)}$.
In this case
the error term sum in \eqref{Part2Ineq1}, maximized when every edge with an endpoint in $S_{+1}$ has difference exactly $1$, is at most 
$\vol{S_{+1}} \le \frac{4}{5}\cdot\Phi(G)\sum_{i \in V} d_i f(x_i)$.
Using this in \eqref{Part2Ineq1} gives
\[
\E{\Delta(\Vec{X}(t+1))}
\le
-\,\frac{1}{5}\cdot \frac{\Phi(G)}{n}\cdot
\sum_{i \in V} d_i f(x_i)
\le
-\,\frac{\Phi(G)^2}{5n}\cdot
\sum_{i \in V} d_i f(x_i).
\]
\noindent
\textbf{Case 2:}
Suppose the opposite, that
$
\sum_{i \in V} d_i f(x_i)
<
\frac{5}{4}\cdot\frac{\vol{S_{+1}}}{\Phi(G)}.
$
In this case, we use \eqref{EsumClaim1} instead of \eqref{Part2Ineq1}.
First, recall that all summands in the sum in~\eqref{EsumClaim1} are non-negative, 
as shown in the proof of part (1) of the lemma. 
In the following, we only count the contribution to this sum
from the edges 
in $E(S_{+1},\ovl S)$, that is, from the edges $(i,j)$ with $x_i\ge k+1$ and $x_j\le k-1$.
Each such edge contributes at least~$1$ to this sum, so
\[
\E{\Delta(\Vec{X}(t+1))}
\le -\,\frac{1}{n}\cdot \abs{E(S_{+1},\ovl S)}.
\]

Second, since $\Vec{x}$ is $k$-good by assumption, we have by \autoref{def:kgood} that
\begin{align*}
\abs{E(S_{+1},\ovl S)} \geq \tfrac{1}{4}\cdot\Phi(G)\cdot\vol{S_{+1}}.   
\end{align*}
Thus 
\[
\E{\Delta(\Vec{X}(t+1))}
\le
-\,\frac{1}{4n}\cdot\Phi(G)\cdot \vol{S_{+1}}.
\]
Using now the assumption that 
$\vol{S_{+1}} > \frac{4}{5}\cdot \Phi(G)\cdot \sum_{i \in V} d_i f(x_i)$, 
we obtain the claimed bound~\eqref{Claim2Value}
also in this case.
\end{proof}

To finalize the proof of \autoref{lemma:potentialdrop}, we use \Cref{Claim2Value} and
bound $\sum_{i \in V} d_i f(x_i)$ in terms of $\sum_{i \in V} d_i f(x_i)^2$.
As each $x_i$ is at most $K$, then $f(x_i)=(x-k)^+$ is at most $K-k$, so
\begin{align*}
\sum_{i=1}^n d_i f(x_i)
\ge
\frac{1}{K-k}\sum_{i=1}^n d_i f(x_i)^2.
\label{eqn:norm_bound}
\end{align*}
Applying the above to \eqref{Claim2Value} gives the required  
bound~\eqref{eq:driftLem4.4}.
\end{proof}

\subsection{Existence of Good Indices}
\label{sec:goodindex}

In this section, we show that there exists an index $k^{\star}$ whose corresponding potential decreases frequently in either the original or the mirrored process.
\begin{lemma}
\label{lemma:goodindexexists}
Let $K \ge \tfrac{32 \log n}{\Phi(G)}$ and consider a sequence of $T$ steps $0,1,\ldots, T-1$. Then there exists an index
$k^{\star} \in 
\left[(\nicefrac{1}{2})K,(\nicefrac{3}{4})K\right]$
that is good in at least $T/4$ steps in the original process or at least $T/4$ steps in the mirrored process.
\end{lemma}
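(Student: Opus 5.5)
The plan is to prove a per-step statement and then average over steps. The per-step statement is that for every single configuration $\vec{x}$ with discrepancy at most $K$, at least half of the integers $k \in [(1/2)K,(3/4)K]$ are good. Moreover, all of these good indices are good in the same one of the two processes, original or mirrored. Let $N \approx K/4$ be the number of integers in the range.

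\emph{Choosing a side.} Fix $\vec{x}$. Suppose $\vol{\{i : x_i \ge \lceil K/2\rceil\}} \le m$. Then for every $k$ in the range, $S_k = \{i: x_i\ge k\}$ is contained in this set, so the volume condition $\vol{S}\le m$ of \autoref{def:kgood} holds for every $k$ in the original process. Otherwise $\vol{\{i: x_i \le \lceil K/2\rceil -1\}} < m$. In the mirrored process the set $S$ for index $k$ is $\{i : x_i \le K-k\}$, which is contained in this set, so the volume condition holds for all relevant $k$ on the mirrored side. A possible boundary index at $k=K/2$ may have to be discarded, costing one index; I will absorb this in the constants. By symmetry (\autoref{obs:ActualMirrorred}), assume the original process is the chosen side.

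\emph{Few bad indices.} Write $S_k$ for the set $S$ of \autoref{def:kgood} and $L_k = S_k\setminus S_{k+1}$ for the nodes at level exactly $k$. Suppose $k$ is bad. If $S_{k+1}=\emptyset$, then $k$ is trivially good, so $\vol{S_{k+1}}\ge 1$. Since $\vol{S_{k+1}}\le m$, the conductance gives $|E(S_{k+1},V\setminus S_{k+1})| \ge \Phi(G)\vol{S_{k+1}}$. Fewer than $\tfrac14\Phi(G)\vol{S_{k+1}}$ of these edges go to $\overline{S_k}$, so more than $\tfrac34\Phi(G)\vol{S_{k+1}}$ of them end in $L_k$. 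Hence
\[
\vol{S_k}\ge \vol{S_{k+1}}+\vol{L_k} \ge \left(1+\tfrac34\Phi(G)\right)\vol{S_{k+1}}.
\]
The volumes $\vol{S_k}$ are nonincreasing in $k$ and lie between $1$ and $m$. So if $b$ indices in the range are bad, then $(1+\tfrac34\Phi)^b\le m\le n^2$. Using $\ln(1+y)\ge 3y/7$ for $y\le 3/4$, this gives
\[
b\le \frac{56\ln n}{9\Phi(G)} < \frac{K}{8}\approx \frac N2,
\]
where the last inequality uses $K\ge 32\log n/\Phi(G)$. Therefore at least about $N/2$ indices are good on the chosen side in that step.

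\emph{Pigeonhole over time.} Consider the $2N$ pairs $(k,\text{process})$. By the per-step statement, each step $t\in\{0,\dots,T-1\}$ makes at least $N/2$ pairs good. Summing over steps gives at least $TN/2$ (step, pair) incidences. Averaging over the $2N$ pairs, some pair $(k^\star,\text{process})$ is good in at least $T/4$ steps, which is the claim.

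\emph{Main obstacle.} The delicate part is the constant bookkeeping in the counting step: the goodness threshold $\tfrac14\Phi$, the requirement $K\ge 32\log n/\Phi$, and the integer rounding of the range $[K/2,3K/4]$. Together these must guarantee that bad indices number at most half the range, since the final averaging needs a good fraction of at least $1/2$ to reach $T/4$. If the constants turn out too tight, I would first sharpen the logarithm estimate or use $\log m$ in place of $2\log n$. Failing that, I would slightly shrink the range at its boundary.
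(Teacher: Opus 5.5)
Your argument is essentially the paper's. In each step the volume condition holds for all indices on one side (original or mirrored), and a bad index forces $\vol{S_k}\ge(1+\tfrac34\Phi(G))\vol{S_{k+1}}$, so few indices are bad; a double count then gives $T/4$. The paper first fixes a side that holds in at least $T/2$ steps and then averages over the indices, while you average over all (index, process) pairs at once. These are the same count. Your threshold $\lceil K/2\rceil$ with an explicitly discarded boundary index is more careful than the paper's handling of that boundary.

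One step does not hold as written. With $\ln(1+y)\ge 3y/7$ you get $b\le \frac{56\ln n}{9\Phi(G)}\approx 6.2\,\frac{\ln n}{\Phi(G)}$. But $K\ge 32\log n/\Phi(G)$ only gives $K/8\ge 4\,\frac{\ln n}{\Phi(G)}$, or about $5.8\,\frac{\ln n}{\Phi(G)}$ even if $\log$ were base $2$. So $b<K/8$ does not follow when $K$ is near the threshold.

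The fix is the one you anticipated: use $\ln(1+y)\ge\tfrac23 y$ for $0<y\le 1$, as the paper does. This gives $b<\frac{2\ln n}{(2/3)(3/4)\Phi(G)}=\frac{4\ln n}{\Phi(G)}\le K/8$. Since here $y=\tfrac34\Phi(G)\le\tfrac34$, you can even use $\ln(1+y)\ge\tfrac43\ln\tfrac74\cdot y\approx 0.746\,y$, giving $b<3.6\,\frac{\ln n}{\Phi(G)}$. The resulting slack of order $\ln n/\Phi(G)$ absorbs two things once $n$ exceeds a small constant: the discarded boundary index, and the integer rounding of $[K/2,3K/4]$, whose count is only guaranteed to be at least $(K-1)/4$. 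The paper's own count, which treats the range as having exactly $K/4$ indices, glosses over this bookkeeping as well.
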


\begin{proof}
For each step $t$, define
\[
S_k(t) \coloneqq \{i \in V : X_i(t) \ge k\},
\qquad
\overline S_k(t) \coloneqq V \setminus S_k(t).
\]
Since $\vol{S_{K/2}(t)} + \vol{\overline S_{K/2}(t)} = 2m$, it follows that in every step $t$,
at least one of the two sets 
${S_{K/2}(t)}$ and ${\overline S_{K/2}(t)}$ has volume at least $m$.
Therefore, there exists a subset of steps $T' \subseteq [T-1]$ with $|T'| \ge T/2$
such that either
\[
\vol{S_{K/2}(t)} \leq m,
\quad \text{for all } t \in T',
\]
or 
\[
\vol{\ovl S_{K/2}(t)} \leq m,
\quad \text{for all } t \in T'.
\]
In the following, we assume w.l.o.g. that the former holds and show that 
there is an index which is good in at least half of 
the steps in $T'$ in the original process.
Otherwise we can do an analogous proof and show that there is an index which is good in at least half of the
steps in $T'$ in the mirrored process.
In the remainder of the proof, we restrict attention to steps $t\in T'$.
Recall that an index $k \ge K/2$ is good in step $t$ if
\begin{align*}
    \abs{E(S_{k+1}(t),\ovl S_k(t))} \geq \tfrac{1}{4}\Phi(G)\cdot\vol{S_{k+1}(t)}.
\end{align*}
Thus, if an index is \emph{bad} (not good) in step $t$, it holds
\begin{equation}\label{eq:badindex}
\vol{S_k(t)} > \left(1+\tfrac{3}{4}\Phi(G)\right)\cdot\vol{S_{k+1}(t)}.
\end{equation}
To see this, consider the difference
$\vol{S_k(t)} - \vol{S_{k+1}(t)}$, which
is at least the number of edges 
between $S_{k+1}(t)$ and $S_k(t) \setminus S_{k+1}(t)$,
which in turn, for a bad index~$k$, is at least $\tfrac{3}{4}\Phi(G)\cdot\vol{S_{k+1}(t)}$:
\begin{align*}
\vol{S_k(t)} - \vol{S_{k+1}(t)}
&= {\textstyle\sum}\left\{ d_i:\; {i\in S_k(t) \setminus S_{k+1}(t)} \right\}\\
&\ge \abs{E(S_{k+1}(t),\, S_{k}(t)\setminus S_{k+1}(t))}\\
&=\abs{E(S_{k+1}(t),\, \ovl S_{k+1}(t))}
- \abs{E(S_{k+1}(t),\, \ovl S_{k}(t))}\\
&> \Phi(G)\cdot\vol{S_{k+1}(t)} - 
\tfrac{1}{4}\Phi(G)\cdot\vol{S_{k+1}(t)}.
\end{align*}


Using~\eqref{eq:badindex}, we first show that in any step $t$, the number of bad indices in 
$\left[(\nicefrac{1}{2})K,(\nicefrac{3}{4})K\right]$ is small.
Fix a step $t$ and let $k_0 > k_1 > \cdots > k_\ell$ be the bad indices in $\left[(\nicefrac{1}{2})K,(\nicefrac{3}{4})K\right]$ in this step.
Since $S_{k_{\ell-1}}(t) \subseteq S_{k_\ell}(t)$, it follows from~\eqref{eq:badindex}
that
\begin{align*}
\vol{S_{k_{\ell}}(t)}
&> \left(1+\tfrac{3}{4}\Phi(G)\right)\cdot \vol{S_{k_{\ell}+1}(t)}
\geq \left(1+\tfrac{3}{4}\Phi(G)\right)\cdot\vol{S_{k_{\ell-1}}(t)}.    
\end{align*}
Applying this inductively to indices $k_\ell, k_{\ell-1}, \ldots, k_{1}$ yields
\[
2m \ge \vol{S_{k_\ell}(t)}
>
\left(1+\tfrac{3}{4}\Phi(G)\right)^\ell\cdot \vol{S_{k_0}(t)}
\ge
\left(1+\tfrac{3}{4}\Phi(G)\right)^\ell,
\]
where the last inequality holds since $\vol{S_{k_0}(t)} \ge 1$. 
This implies
\[
\ell < \frac{\log(n^2)}{\log (1+(\nicefrac{3}{4})\Phi(G))}
\le \tfrac{4\log n}{\Phi(G)},
\]
where the last inequality holds because $\log(1+x)\ge (\nicefrac{2}{3})x$ for $0<x\le 1$.
%
Therefore, the number of bad indices in any step is at most
$\alpha \coloneqq \frac{4\log n}{\Phi(G)}\le K/8$.

For each step $t \in T'$, let 
$B(t) \subseteq \left[(\nicefrac{1}{2})K,(\nicefrac{3}{4})K\right]$ 
denote the set of bad indices in that step.
As shown above, $|B(t)| \le \alpha$.
Consider the set of pairs
\[
P \coloneqq \{(k,t): t \in T',\: k \in B(t)\}.
\]
Counting by steps yields
$|P| = \sum_{t \in T'} |B(t)| \le \alpha |T'|$,
counting by indices yields
$|P| = \sum_{k=K/2}^{3K/4} |\{t \in T' : k \in B(t)\}|$, so
\[
\sum_{k=K/2}^{3K/4} |\{t \in T' : k \in B(t)\}| \le \alpha |T'|.
\]
Therefore, there exists an index $k^\star \in \left[(\nicefrac{1}{2})K,(\nicefrac{3}{4})K\right]$ such that
\[
|\{t \in T' : k^\star \in B(t)\}|
\le
\frac{\alpha |T'|}{K/4}
\le
\frac{|T'|}{2}.
\]
Such an index $k^\star$ is good in at least $|T'|/2 \ge T/4$ steps.
\end{proof}

\subsection{Putting  Everything Together }
\label{sec:lemmaproof}

Finally, we put these results together to prove \autoref{lemma:phasescombined}.
First, we show that for any index $k$ the probability that there are many steps when $k$ is good is large  
and that the probability  that the corresponding potential is still large is negligible.
In the appendix (cf. \autoref{sec:ProofLemma35}), we prove the following statement.
\begin{lemma}
\label{lemma:goodindexdrops}
Fix $k\in[K/2,K]$ and let $T = 20c \cdot (K-k) \tfrac{n\log(Kn)}{\Phi^2(G)} $ for some $c>2$.
Let $\tau_k(t) \coloneqq |\{\tau\in[t]: \text{$k$ is good in step $\tau$}\}|$.
Then
\[
\pr{\Psi^{(+)}_k(\X{T}) > 0 \;\wedge\; \tau_k(T-1) \ge T/4 }
\le \frac{1}{(Kn)^{c-2}},
\]
The event above means that the potential w.r.t.\ index $k$ is still positive
after the steps $0,1,\ldots,T-1$ despite this index being good in at least
quarter of these steps.
The same holds for the mirrored process.
\end{lemma}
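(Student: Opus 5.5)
The plan is a standard multiplicative-drift argument via a supermartingale built from \autoref{lemma:potentialdrop}. Write $\Psi(t)\coloneqq\Psi^{(+)}_k(\X{t})$ and $\delta\coloneqq \frac{\Phi(G)^2}{5(K-k)n}$. Define the process
\[
M(t)\coloneqq \Psi(t)\cdot (1-\delta)^{-\tau_k(t-1)},
\]
with $\tau_k(-1)=0$. Note that $\tau_k(t-1)$ counts the good steps among $0,\dots,t-1$, and whether step $t$ is good is determined by $\vx(t)$. Hence
\[
\E{M(t+1)\mid \X{t}=\vx,\mathcal F_t}=(1-\delta)^{-\tau_k(t-1)}(1-\delta)^{-\mathds 1_{\{k\text{ good at }t\}}}\,\E{\Psi(t+1)\mid \X{t}=\vx}.
\]
By part (2) of \autoref{lemma:potentialdrop}, in a good step the last expectation is at most $(1-\delta)\Psi(t)$. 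In a bad step, part (1) gives at most $\Psi(t)$. In both cases $\E{M(t+1)\mid\mathcal F_t}\le M(t)$, so $M$ is a nonnegative supermartingale, and therefore $\E{M(T)}\le M(0)=\Psi(0)$.

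Next, bound the initial potential crudely: $\Psi(0)\le \sum_i d_i (K-k)^2\le 2m K^2\le n^2K^2$. On the event $\{\Psi(T)>0\wedge\tau_k(T-1)\ge T/4\}$, integrality of opinions gives $\Psi(T)\ge 1$, since any node with $x_i>k$ contributes at least $d_i\ge1$. Hence on this event
\[
M(T)\ge (1-\delta)^{-T/4}\ge e^{\delta T/4}.
\]
By Markov's inequality,
\[
\pr{\Psi(T)>0\wedge \tau_k(T-1)\ge T/4}\le \Psi(0)\,e^{-\delta T/4}\le (Kn)^2 e^{-\delta T/4}.
\]

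It remains to plug in $T=20c(K-k)n\log(Kn)/\Phi(G)^2$. This gives $\delta T/4 = \frac{20c}{20}\log(Kn)=c\log(Kn)$, so the probability is at most $(Kn)^{2-c}$, as claimed. Here $\log$ must be the natural logarithm; otherwise the constants shift slightly, and I would absorb that into the slack $c>2$. The case $k=K$ is trivial because $\Psi\equiv 0$, so the lemma only needs $k\le K-1$, where \autoref{lemma:potentialdrop} applies. For the mirrored process the same argument works verbatim, using the mirrored statement of \autoref{lemma:potentialdrop} and the mirrored notion of goodness.

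The main subtlety I expect is making the supermartingale step rigorous. The factor $(1-\delta)^{-\mathds 1}$ depends on whether step $t$ is good, so goodness must be measurable with respect to the state $\vx(t)$ at the start of step $t$. This is exactly what \autoref{def:kgood} ensures, and it is why the drift in \autoref{lemma:potentialdrop} is stated conditionally on $\X{t}=\vx$. I also need to check the off-by-one between $\tau_k(T-1)$ and the exponent in $M(T)$. Since $M(T)$ uses $\tau_k(T-1)$, the steps $0,\dots,T-1$ are counted, consistent with the definition.
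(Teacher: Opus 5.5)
Your proof is correct and follows the paper's argument. Both use the same supermartingale $\Psi^{(+)}_k(\X{t})\,(1-\delta)^{-\tau_k(t-1)}$ with $\delta=\Phi(G)^2/(5(K-k)n)$, justified by the two parts of \autoref{lemma:potentialdrop}, followed by a Markov-type bound and the computation $\delta T/4=c\log(Kn)$. The only difference is that the paper applies optional stopping at the bounded stopping time $T'$ (the first $t$ with $\tau_k(t-1)\ge T/4$), whereas you evaluate directly at the deterministic time $T$. Your version is simpler and does not need the extra observation that $\Psi^{(+)}_k=0$ is absorbing.
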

\begin{proof}[Proof Sketch]
We begin with some definitions and assumptions.
Let $\mathcal{F}(t)$ be the natural filtration of the process that contains all configurations (and thus all random choices) up to (the beginning of) step $t$.
Note that $\tau_k(t)$ is completely determined by $\mathcal{F}(t)$.
Further denote the event which probability we want to upper bound by
\begin{align}
    B_k(T) \coloneqq \{\Psi^{(+)}_k(\X{T}) > 0\} \;\cap\; \{\tau_k(T-1) \ge T/4\}.
\label{dfn:BkT}
\end{align}
We define the following auxiliary random process:
\begin{align*}
    \psi(t) \coloneqq
    \begin{cases}
        \Psi^{(+)}_k(\X{0}) & \text{if $t=0$},\\[2pt]
        \Psi^{(+)}_k(\X{t})\cdot\left(1- \dfrac{\Phi(G)^2}{5(K-k)n}\right)^{-\tau_k(t-1)} & \text{if $t>0$}.
    \end{cases}
\end{align*}
Note that $\psi(t)$ is measurable with respect to the filtration $\mathcal{F}(t)$.
First, we show that $(\psi(t))_{t \geq 0}$ is a supermartingale, i.e., for all $t\ge 0$,
\[
\E{\psi(t+1) \mid \mathcal{F}(t)} \le \psi(t).
\]
This follows by a simple case distinction we present in the appendix \autoref{sec:ProofLemma35}.
The core idea is to use \autoref{lemma:potentialdrop} that the increase in the term 
$$\left(1- \dfrac{\Phi(G)^2}{5(K-k)n}\right)^{-\tau_k(t-1)}$$ 
is always compensated by the expected decrease in $\Psi^{(+)}_k(\X{t})$.
Now define the stopping time
\[
T'
\coloneqq
\min\left\{
T,\;
\inf\{t\ge 0:\tau_k(t-1)\ge T/4\}
\right\}.
\]
As $(\psi(t))_{t\ge0}$ is a nonnegative supermartingale and the stopping time $T'$ is bounded by $T$, 
the conditions of Doob's optional stopping theorem (cf.\ \autoref{doob}) are satisfied and we obtain
\[
\E{\psi(T')}
\ge
\psi(0).
\]
On the event $B_k(T)$, we have $\Psi_k^{(+)}(\X{T'}) > 0$, and since the potential is integer-valued, it follows that $\Psi_k^{(+)}(\X{T'}) \ge 1$.
Further, since $B_k(T)$ implies $\tau_k(T'-1)\ge T/4$, we have
\begin{align*}
\psi(T')
&=
\Psi_k^{(+)}(\X{T'})
\cdot
\left(1-\frac{\Phi(G)^2}{5(K-k)n}\right)^{-\tau_k(T'-1)}\\
&\ge
\left(1-\frac{\Phi(G)^2}{5(K-k)n}\right)^{-T/4}.
\end{align*}

Therefore,
\begin{align*}
\E{\psi(T')}
&\ge
\pr{B_k(T)} \cdot \E{\psi(T')\,|\, B_k(T)}\\
&\ge
\pr{B_k(T)}
\cdot
\left(1-\frac{\Phi(G)^2}{5(K-k)n}\right)^{-T/4}.
\end{align*}

Since $\psi(0)=\Psi_k^{(+)}(\x{0})\le K^2 n^2$, we obtain
\begin{align*}
K^2 n^2
\ge
\pr{B_k(T)}
\cdot
\left(1-\frac{\Phi(G)^2}{5(K-k)n}\right)^{-T/4}.
\end{align*}

Solving for $\pr{B_k(T)}$ and using 
$T = 20c \tfrac{(K-k)n\log(Kn)}{\Phi(G)^2}$ gives
\begin{align*}
\pr{B_k(T)}
&\le
(Kn)^2
\left(1-\frac{\Phi(G)^2}{5(K-k)n}\right)^{T/4}\\
&=
(Kn)^2
\left(1-\frac{\Phi(G)^2}{5(K-k)n}\right)^{\tfrac{5(K-k)n}{\Phi(G)^2}\cdot c\log(nK)}\\
&\leq (Kn)^2 e^{-c\log{Kn}} =  
\frac{1}{(Kn)^{c-2}}.
\qedhere
\end{align*}
\end{proof}
\autoref{lemma:phasescombined} follows from Lemmas \ref{lemma:goodindexexists} and \ref{lemma:goodindexdrops}.
\begin{proof}[Proof of \autoref{lemma:phasescombined}]
Let $F$ be the event that for each $k \in \left[(\nicefrac{1}{2})K, (\nicefrac{3}{4})K\right]$ 
both potentials (original and mirrored) are still positive after $T = 20(c+3)\tfrac{Kn\log Kn}{\Phi(G)^2}$ steps, that is,
$$\Psi_k^{(+)}(\X{T}) > 0 \; \mbox{ and }\; \Psi_{k}^{(+)}(\Vec{X'}({T}))>0.$$ 
If $F$ does not occur, then the discrepancy must have dropped by $\tfrac{1}{4}K$ by time $T$, as desired.
Thus it suffices to upper bound the probability of $F$.
That being said, for each $k \in [(\nicefrac{1}{2})K, (\nicefrac{3}{4})K]$, consider the events 
$B_k(T)$ for the original process and  $B'_k(T)$ for the mirrored process 
defined in \autoref{lemma:goodindexdrops} in~\eqref{dfn:BkT}.
By \autoref{lemma:goodindexexists}, there must exist
an index $k^\star\in[(\nicefrac{1}{2})K, (\nicefrac{3}{4})K]$ 
such that $\tau_k(T-1) \ge T/4$ 
in the original or mirrored process.
Hence, if $F$ occurs, then 
$B_{k^\star}(T)$ or $B'_{k^\star}(T)$ occurs, so
$F \subseteq \bigcup_{k=(\nicefrac{1}{2})K}^{(\nicefrac{3}{4})K} \{B_k(T) \cup B'_k(T)\}$
and by the union bound,
\begin{align*}
\pr{F}\;
&\;\le\;
\pr{\bigcup_{k=(\nicefrac{1}{2})K}^{(\nicefrac{3}{4})K} \{B_k(T) \cup B'_k(T)\}}\\
&\;\le\;
\sum_{k=(\nicefrac{1}{2})K}^{(\nicefrac{3}{4})K} \left(\pr{B_k(T)} + \pr{B'_k(T)}\right)\\
&\;\le\;
\frac{K}{4}\cdot\frac{2}{(Kn)^{(c+3)-2}}
\;=\;
o\left(\frac{1}{(Kn)^{c}}\right). \qedhere
\end{align*}
\end{proof}

\section{Reducing Discrepancy to $3$}
\label{sec:disc3}
The main result of this section is the following proposition.
\begin{proposition}
\label{lem:one-step}
Let $G$ be a regular graph with second largest (in absolute value) eigenvalue $\lambda(G) \in o(1)$. 
Then, from any configuration with discrepancy $K \geq 4$, the probability that the discrepancy reduces by one within $O(\lambda(G)^2\cdot n^2 + n\log n)$ steps
is at least $\frac{1}{2}$, 
\end{proposition}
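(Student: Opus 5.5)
Write $M=\xmax{}$ and $m=\xmin{}$ for the configuration at the start, so that $M-m=K\ge 4$. Define $U(t)=\{i: X_i(t)=M\}$ and $L(t)=\{i:X_i(t)=m\}$. No node can ever exceed $M$ or fall below $m$, so the discrepancy drops by one as soon as $U$ or $L$ becomes empty. The plan is to show that one of $u(t)=|U(t)|$ and $\ell(t)=|L(t)|$ reaches $0$ within $O(\lambda(G)^2 n^2+n\log n)$ steps in expectation, and then to apply Markov's inequality. Throughout, $G$ is $d$-regular, so each step selects a uniformly random directed edge.

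\textbf{Drift of $u(t)$.} The count $u$ decreases by one when a node of $U$ picks a neighbor outside $U$. It increases by one when a node with value $M-1$ picks a neighbor in $U$. Writing $D_{\le M-2}=\{i:X_i\le M-2\}$, this gives
\[
\E{u(t+1)-u(t)\mid \vx(t)} = -\tfrac{1}{nd}\,\abs{E(U,D_{\le M-2})}.
\]
So $u$ is a supermartingale, and by symmetry so is $\ell$. The expander mixing lemma yields
\[
\abs{E(U,D)}\ge \tfrac{d}{n}|U||D|-\lambda d\sqrt{|U||D|}.
\]
Since $K\ge 4$ implies $M-2\ge m+2$, the sets $D_{\le M-2}$ and $\{i: X_i\ge m+2\}$ together cover $V$. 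Hence at every step at least one of them has size at least $n/2$. When $D_{\le M-2}$ is the large one, the drift of $u$ is at least $\tfrac{u}{2n^2}-\tfrac{\lambda\sqrt{u}}{n}$. This is at least $\tfrac{u}{4n^2}$ whenever $u\ge C\lambda^2 n$ for a suitable constant $C$, giving multiplicative drift.

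\textbf{Two regimes.} I would analyse a combined potential such as $\Psi=\sqrt{u}+\sqrt{\ell}$, or alternatively $\min(u,\ell)$ handled with a careful switching argument. The analysis has two regimes.

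\emph{Large regime.} While the relevant count is at least $C\lambda^2 n$, the multiplicative drift of order $u/n^2$ reduces it to $O(\lambda^2 n)$. The time needed is $O(n\log n)$ if we measure in units of $n$ steps per factor-$e$ decrease; more precisely I expect $O(n\log n+\lambda^2 n^2)$ after accounting for the $1/n$ normalisation.

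\emph{Small regime.} Here I use variance instead of drift. For small $|U|$, the mixing lemma again gives $\abs{E(U,V\setminus U)}\ge d|U|(1-|U|/n-\lambda)$. So the probability that $u$ changes in a step is at least $c\,u/n$, and the conditional variance of the increment is at least $c\,u/n$. For the concave function $\sqrt{u}$, a second-order expansion gives an expected decrease of order $\tfrac{1}{n\sqrt{u}}$ per step. A standard drift theorem (additive drift on $\sqrt u$ with state-dependent rate, or optional stopping on $\sqrt{u}+c\,t/(n\sqrt{h})$) then shows that $0$ is hit within $O(nh)$ expected steps when starting from $h$. With $h=O(\lambda^2 n)$ this is $O(\lambda^2 n^2)$. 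The term $O(n\log n)$ also absorbs the case $\lambda^2 n<\log n$.

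\textbf{Main obstacle.} The step I expect to be hardest is that which side is ``large'' (that is, $|D_{\le M-2}|\ge n/2$ versus $|\{X_i\ge m+2\}|\ge n/2$) can switch over time. Consequently neither $u$ nor $\ell$ alone has a guaranteed drift. I would first try a potential that sums the two quantities, such as $\sqrt u+\sqrt\ell$. Both summands are supermartingales. Both also enjoy the variance-based decrease in the small regime, and that decrease does not depend on which side is large. In the large regime, at every step at least one summand has strong negative drift, and the decrease from that summand is never cancelled, because the other summand is also a supermartingale. Summing the expected times of the two phases gives expected time $O(\lambda(G)^2n^2+n\log n)$ until $\min(u,\ell)=0$. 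Markov's inequality, applied after doubling the constant, then gives probability at least $1/2$ that the discrepancy has dropped by one within that many steps.
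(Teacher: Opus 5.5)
Your ingredients are the paper's. Your covering argument is its median argument. Your threshold $C\lambda^2 n$ is its Case~1/Case~2 split at $\min(u,\ell)=16\lambda^2 n$: for $d$-regular $G$, $\eta=d\min(u,\ell)$ and $\Psi^{(+)}_{K-1}=d\,u$. Your small regime is \autoref{lem:dropbyone}. However, two steps do not go through as written.

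\textbf{The large regime and the switching problem.} First, your drift bound is off by a factor $n$. In fact
$\frac{1}{nd}\bigl(\frac{d}{n}u|D|-\lambda d\sqrt{u|D|}\bigr)=\frac{u|D|}{n^2}-\frac{\lambda\sqrt{u|D|}}{n}\ge\frac{u}{2n}-\lambda\sqrt{u/n}$
for $n/2\le |D|\le n$. This is at least $\frac{u}{4n}$ once $u\ge 16\lambda^2 n$. With your rate $u/(4n^2)$, this regime would cost $n^2\log n$.

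More importantly, $\sqrt u+\sqrt\ell$ does not resolve the switching you flagged. Suppose the active side (the one with $|D|\ge n/2$) is the side with the smaller count. Then you can only guarantee a drift of order $\sqrt{\min(u,\ell)}/n$, which is not proportional to the potential. The resulting additive bound gives $O(\min\{n/\lambda,n^{3/2}\})$ steps, e.g.\ $n^{3/2}$ on $K_n$, where the target is $O(n\log n)$.

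What is missing is per-side bookkeeping, which the paper supplies. In Case~2, index $K-1$ or index $1$ is $k$-good in every step, so one of them is good in at least half the steps of the interval. \autoref{lemma:goodindexdrops} (the supermartingale $\psi(t)$ rescaled by the number of good steps) then makes it unlikely that Case~2 persists for more than $O(n\log n)$ steps.

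A single-potential alternative is $u\ell$. For $K\ge 3$ a step changes at most one of $u,\ell$. Hence $\E{\Delta(u\ell)}\le -u\ell/(4n)$ while both exceed $16\lambda^2 n$. The multiplicative drift theorem then gives $O(n\log n)$ steps until $\min(u,\ell)$ drops below $16\lambda^2 n$ (or hits $0$ if $16\lambda^2 n<1$).

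\textbf{The small regime.} $\sqrt u+ct/(n\sqrt h)$ is a supermartingale only while $u\le h$. But $u$ may exceed $h$, and there the drift $c/(n\sqrt u)$ is weaker. Since this drift decreases in $u$, neither additive nor variable drift theorems apply. So the claimed $O(nh)$ expected hitting time does not follow from a variance-only argument. Indeed, for the two-opinion voter chain on $K_n$ with minority $h$, the expected absorption time is $\Theta(nh\log(n/h))$.

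What you can get, and all the proposition needs, is a constant-probability bound:
\begin{itemize}
\item Also stop when $u\ge 4h$. This has probability at most $1/4$, since $u$ is a supermartingale.
\item Up to that stopping time the drift of $\sqrt u$ is at least $c/(2n\sqrt h)$.
\item Optional stopping plus Markov then shows that $u$ hits $0$ within $O(nh)=O(\lambda^2 n^2)$ steps with constant probability, exactly as in \autoref{lem:dropbyone}.
\end{itemize}
Combine this with Markov's inequality on the large regime and the strong Markov property, repeating $O(1)$ times if needed. This reaches probability $1/2$ without passing through an expected-time bound.
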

From  \autoref{lem:one-step} we get the following corollary.
\begin{corollary}
\label{lem:k-step}Consider the \ProtAV process on a regular graph $G$ with second largest eigenvalue $\lambda(G) \in o(1)$ in absolute value.
Assume $\vec{x}(0)$ is a configuration with discrepancy $K$. 
Then,  both in expectation and with high probability, the discrepancy is reduced to $3$ in
$O\!\left( (K+\log n) \cdot \left(\lambda(G)^2 \cdot n^2 + n\log n\right) \right)$ steps.
\end{corollary}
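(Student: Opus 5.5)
The plan is to derive \autoref{lem:k-step} from \autoref{lem:one-step} by chaining phases, in the same way \autoref{thm:main} used \cite{CRS23}. Let $L \coloneqq C(\lambda(G)^2 n^2 + n\log n)$ be the phase length given by \autoref{lem:one-step}.

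\textbf{Step 1: a uniform per-phase success probability.} \autoref{lem:one-step} holds from \emph{any} configuration with discrepancy at least $4$. By the Markov property, a phase of $L$ steps that starts with discrepancy $\ge 4$ therefore reduces the discrepancy by at least one with probability at least $1/2$, no matter what happened earlier. Discrepancy never increases, since each update moves $u$'s opinion toward a neighbour's and so stays inside $[\xmin{},\xmax{}]$. Hence we only need $K-3$ successful phases.

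\textbf{Step 2: stochastic domination by coin flips.} Let $N$ be the number of phases until the discrepancy reaches $3$. I couple the phase outcomes with i.i.d.\ Bernoulli$(1/2)$ variables, so that every coin success forces a real reduction while the discrepancy is still at least $4$. Then $N$ is stochastically dominated by the number of fair coin flips needed to obtain $K-3$ successes, a negative binomial variable.

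\textbf{Step 3: expectation and tail bounds.}
\begin{itemize}
\item \emph{Expectation.} The negative binomial has mean $2(K-3)$, so $\E{N} \le 2K$ and the expected time is at most $2KL$.
\item \emph{High probability.} Take $M = 4(K-3) + c'\log n$ phases. The event $N > M$ means fewer than $K-3$ successes among $M$ fair flips, while the mean number of successes is $M/2 \ge 2(K-3) + (c'/2)\log n$. A Chernoff bound gives failure probability $\exp(-\Omega(M)) \le n^{-c}$ for a large enough constant $c'$.
\end{itemize}
So w.h.p.\ the time is $O((K+\log n)L)$, matching the claimed bound. The expectation bound $O(KL)$ is also within it.

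\textbf{Main obstacle.} Step 2 needs care: the success probability holds conditionally on the history only through the current configuration. This is handled by the Markov property together with the fact that \autoref{lem:one-step} is a worst-case statement over configurations. Beyond that, the argument is routine, and the $\log n$ term is exactly what is needed to make the Chernoff tail polynomially small when $K$ is small.
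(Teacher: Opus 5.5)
Your proposal is correct and follows essentially the same argument as the paper. Both use phases of length $O(\lambda(G)^2 n^2 + n\log n)$, each succeeding with probability at least $1/2$ regardless of the past by Proposition~\ref{lem:one-step}. Both dominate the number of phases by a sum of $K-3$ i.i.d.\ $\mathrm{Geom}(1/2)$ variables, equivalently your negative binomial, and finish with a Chernoff-type tail bound. The only difference is cosmetic: the paper applies Janson's tail bound for sums of geometric variables directly, while you use the equivalent binomial form (fewer than $K-3$ successes in $M$ fair flips).
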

\begin{proof}
For each $k \geq 0$, let $T_k$ be the time in which the discrepancy drops by one for the first time when started form the worst-case configuration with discrepancy $k$.
Clearly, $\sum_{k=4}^K T_k$ is an upper bound for the time we are looking for as it pessimistically assumes we are in the worst case configuration after each drop in discrepancy.
By \autoref{lem:one-step} we know there is a $\beta \in O(\lambda(G)^2\,n^2 + n\log n)$ such that from any time $t$, regardless of the past, within the next $\beta$ steps the discrepancy drops by $1$ with probability at least $1/2$. 
Thus, if we partition time into phases of length $\beta$, in each phase the discrepancy drops with probability at least $\frac{1}{2}$.
Hence, for $k=K,K-1, \ldots, 4$,
there are 
i.i.d.\ $\mathrm{Geom}(1/2)$ random variables $Y_k$ 
such that 
$T_k \le \beta Y_k$.
Therefore
\[
 \sum_{k=4}^{K} T_k 
 \le
 \beta\sum_{k=4}^{K} Y_k.
\]
Using the Chernoff bound for sums of geometric random variables (see \autoref{janson} in the appendix),
there exists an absolute constant $c>0$ such that for all $K\ge 4$,
\begin{align*}
\pr{ \sum_{k=4}^{K} T_k \ge 2c\beta (K+\log n)}
&\le
\pr{\sum_{k=4}^{K} Y_k \ge 2c(K+\log n)}\\
&\le
\exp\!\big(-\Omega(K+\log n)\big)
\le n^{-\Omega(1)}.    
\end{align*}
The corollary follows from the above bound. 
\end{proof}

Without loss of
generality, assume $\xmin{} = 0$, and let
\begin{align*}
S_0(t) = \{ i \in V : x_i(t) = 0 \}, \;\;\; S_K(t) = \{ i \in V : x_i(t) = K \}.    
\end{align*}
Similar to the analysis $2$-Value pull voting in \cite{DBLP:conf/icalp/BerenbrinkGKM16, CooperR16}, we will track the progress of the process via the minority opinion $$\eta(t) := \min\left\{\vol{S_0(t)}, \vol{ S_K(t)}\right\}.$$
Clearly, the discrepancy has reduced by (at least) $1$ once $\eta(t) = 0$.
In the appendix \autoref{sec:ProofVoting} we show a full proof of
the following lemma, adapting the relevant proofs 
from Berenbrink et al.~\cite{DBLP:conf/icalp/BerenbrinkGKM16} and Cooper and Rivera~\cite{CooperR16}.
\begin{lemma}
\label{lem:dropbyone}
Suppose that $\eta(t) = s$. Then, with probability at least $\tfrac{1}{2}$, the discrepancy drops by $1$ within $O\left(\tfrac{s \cdot n}{\Phi(G) \cdot \dmin}\right)$ steps.
\end{lemma}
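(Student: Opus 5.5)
The plan is to treat $\eta(t)$ as a nonnegative supermartingale whose conditional variance is at least proportional to $\eta(t)$. We then pass to $\sqrt{\eta(t)}$, which has an additive negative drift, and conclude with a variable-drift (or optional stopping) argument plus Markov's inequality. This mirrors the 2-value pull voting analyses of \cite{DBLP:conf/icalp/BerenbrinkGKM16, CooperR16}, with $S_0$ and $S_K$ playing the role of the two opinion classes.

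\emph{Step 1 (supermartingale).} Since $0=\xmin{}$ and $K=\xmax{}$ can never be exceeded, a node enters $S_0(t)$ only if it currently holds $1$ and picks a neighbour in $S_0(t)$. A node leaves $S_0(t)$ exactly when it picks a neighbour outside $S_0(t)$. Node $i\in S_0$ leaves with probability $|E(\{i\},\ovl{S_0})|/(n d_i)$, and leaving costs volume $d_i$. Hence the expected volume decrease is $|E(S_0,\ovl{S_0})|/n$. The expected volume increase is $|E(S_0,\{x=1\})|/n$, which is no larger. So $\vol{S_0(t)}$ is a supermartingale, symmetrically so is $\vol{S_K(t)}$, and therefore $\eta(t)$, their minimum, is a supermartingale. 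Once $\eta=0$ the discrepancy has dropped, so we may stop there.

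\emph{Step 2 (variance).} Let $A\in\{S_0,S_K\}$ attain the minimum, so $\vol{A}=\eta\le m$. Any node leaving $A$ decreases $\eta$ by exactly its degree, because the other set is at least as large and is unaffected. The probability that some node leaves $A$ in the step, weighted by the squared jump, is
\[
\sum_{i\in A}\frac{|E(\{i\},\ovl A)|}{n d_i}\,d_i^2\;\ge\;\frac{\dmin}{n}\,|E(A,\ovl A)|\;\ge\;\frac{\dmin\,\Phi(G)}{n}\,\eta .
\]
So downward jumps have second moment $\Omega(\dmin\Phi\eta/n)$.

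\emph{Step 3 (square root drift and conclusion).} Set $Y(t)=\sqrt{\eta(t)}$. For upward jumps we use concavity: $\sqrt{a+h}\le\sqrt a+h/(2\sqrt a)$. For downward jumps $h\in[-a,0]$ we use the elementary bound $\sqrt{a+h}\le \sqrt a+\frac{h}{2\sqrt a}-\frac{h^2}{8a^{3/2}}$. Together with Steps 1 and 2 this gives
\[
\E{Y(t+1)-Y(t)\mid\mathcal F(t)}\le -\frac{\dmin\Phi(G)}{8n}\cdot\frac{\eta}{\eta^{3/2}}=-\frac{c\,\dmin\Phi(G)}{n\,Y(t)} .
\]
By the variable drift theorem (or directly: $Y^2+\!\!$ drift arguments via optional stopping on $Y(t)^2+\frac{c\,\dmin\Phi}{n}\,t$, checking it is a supermartingale before hitting $0$), the expected hitting time of $0$ from $Y(0)=\sqrt s$ is at most $\int_0^{\sqrt s}\frac{n y}{c\,\dmin\Phi}\,dy=O\!\left(\frac{sn}{\Phi\dmin}\right)$. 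Markov's inequality then gives the discrepancy drop within twice this time with probability at least $1/2$.

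The main obstacle is Step 3. The jumps of $\eta$ can be as large as $\eta$ itself (a single node dropping to $0$), and upward jumps of size up to $d_{\max}$ are uncontrolled. So a naive second-order Taylor expansion is invalid. The fix is to use the one-sided inequalities above: only downward jumps contribute the concavity gain, and upward ones are bounded linearly. The linear terms cancel against the supermartingale property of Step 1. A subtle point to verify is the switching of the minimizing set $A$ between steps. Because $\eta$ is the minimum, it still never increases more than the increasing set would, so the one-step bounds remain valid.
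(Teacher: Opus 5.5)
Your Steps 1 and 2 are correct, and so is the one-step bound $\E{Y(t+1)-Y(t)\mid\mathcal F(t)}\le -c'/Y(t)$ on $\{\eta(t)>0\}$ with $c'=c\,\dmin\Phi(G)/n$. This is essentially the paper's Claim: the same one-sided Taylor bounds and the same downward second moment via $|E(A,\ovl A)|\ge\Phi(G)\vol{A}$.

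The gap is the last step, where you convert this drift into a hitting-time bound.
\begin{itemize}
\item \emph{Variable drift does not apply.} The variable drift theorem requires the drift bound $h(y)$ to be non-decreasing in $y$, so that $\int_0^y 1/h$ is concave. Here $h(y)=c'/y$ is decreasing and $\int_0^y 1/h=y^2/(2c')$ is convex, so Jensen goes the wrong way.
\item \emph{The direct alternative fails too.} Since $Y^2=\eta$, you would need $\eta(t)+c't$ to be a supermartingale. It is not, because $\eta$ can have zero drift: for $K=1$, $\vol{S_0(t)}$ is a martingale, so $\eta$ is a martingale whenever the minimising set does not switch.
\item \emph{The conclusion itself is false.} Take $G=K_n$, $K=1$, and a single node holding the minority opinion. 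Then $s=n-1=\dmin$ and $\Phi(G)\ge 1/2$, so you claim expected absorption time $O(n)$. But the minority count is a lazy symmetric walk that moves from $j$ with probability $2j(n-j)/(n(n-1))$. A Green's function computation gives expected absorption time $(n-1)H_{n-1}=\Theta(n\log n)$ from $j=1$. The hitting time is heavy-tailed: roughly, the minority first grows to $j$ nodes with probability about $1/j$ and then needs time of order $jn$. So the expectation picks up an extra logarithmic factor. This is exactly why the lemma only asserts probability $1/2$; Markov applied to an expectation bound cannot be the route.
\end{itemize}

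You need an argument that bounds the probability directly.

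\emph{The paper's route.} Let $\tau_s$ be the first $t$ with $\pr{\eta(t)=0}\ge 1/2$. For $t<\tau_s$, average your pointwise drift over $\{\eta(t)>0\}$ using Jensen:
$\E{\mathds{1}_{\{\eta(t)>0\}}/\sqrt{\eta(t)}}\ge\pr{\eta(t)>0}^2/\E{\sqrt{\eta(t)}}\ge 1/(4\sqrt s)$,
using $\E{\sqrt{\eta(t)}}\le\sqrt s$. Hence $\E{\sqrt{\eta(t)}}$ drops by at least $c'/(4\sqrt s)$ per step and is at most $1/2$ after $O(s/c')$ steps. Then $\pr{\eta(t)>0}\le\E{\sqrt{\eta(t)}}$ finishes.

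\emph{Localising your own argument.} Let $\sigma$ be the first time with $\eta=0$ or $\eta>4s$.
\begin{itemize}
\item Before $\sigma$ we have $Y(t)\le 2\sqrt s$, so your drift bound gives $\E{Y(t+1)-Y(t)\mid\mathcal F(t)}\le -c'/(2\sqrt s)$. The additive drift theorem then yields $\E{\sigma}\le 2s/c'$.
\item Optional stopping for the bounded supermartingale $\eta$ gives $\pr{\eta(\sigma)>4s}\le 1/4$.
\item Markov's inequality gives $\pr{\sigma>8s/c'}\le 1/4$.
\end{itemize}
Hence $\eta$ hits $0$ within $8s/c'=O(sn/(\Phi(G)\dmin))$ steps with probability at least $1/2$.
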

\begin{proof}[Proof Sketch]
Consider $\sqrt{\eta(t)}$.
For $s>0$, define
\[
\tau_s := \min\Bigl\{t \ge 0 : \pr{\eta(t)=0 \mid \eta(0)=s} \ge \tfrac12 \Bigr\}.
\]
Using Taylor bounds for the concave 
$\sqrt{\cdot}$ and calculations from~\cite{CooperR16},
we show\footnote{For the precise calculation see \autoref{claim:taylorclaim} in \autoref{sec:ProofVoting}.} there exists a constant $c>0$ such that for all $t<\tau_s$,
\[
\E{\sqrt{\eta(t+1)} \mid \eta(0)=s}
\le
\E{\sqrt{\eta(t)} \mid \eta(0)=s}
\;-\;
\frac{c \dmin \cdot \Phi(G)}{\sqrt{s}\cdot n}.
\]
By induction, for all $t<\tau_s$ it follows that
\[
\E{\sqrt{\eta(t)} \mid \eta(0)=s}
\;\le\;
\sqrt{s}
\;-\;
t \cdot c \cdot \frac{\dmin \cdot \Phi(G)}{\sqrt{s}\cdot n}.
\]
Moreover, since $\eta(t)$ is integral, $\1_{\{\eta(t)>0\}} \le \sqrt{\eta(t)}$, and hence
\[
\pr{\eta(t)>0 \mid \eta(0)=s}
\;\le\;
\E{\sqrt{\eta(t)} \mid \eta(0)=s}.
\]
Therefore, if
\[
t \;\ge\; \frac{\sqrt{s}-\tfrac12}{c}\cdot \frac{\sqrt{s}\cdot n}{\dmin \cdot \Phi(G)} \in O\left(\frac{s n}{\Phi(G) \cdot \dmin }\right),
\]
then $\E{\sqrt{\eta(t)} \mid \eta(0)=s} \le \tfrac12,$ implying
$\pr{\eta(t)=0 \mid \eta(0)=s} \ge$ $\tfrac12$ and thus $t \ge \tau_s$.
\end{proof}
With the help of this lemma, we can now complete the prove the proposition.

\begin{proof}[Proof of \autoref{lem:one-step}.]
We distinguish two cases based on the size of $\eta(t)$.

\noindent
\textbf{Case 1:} $\eta(t) \le 16\cdot {{2m} \cdot \lambda(G)^2}$.
In this case, we apply \autoref{lem:dropbyone}.
This lemma shows that and extreme opinion disappears after
\[
T = O\left(\frac{n\eta(t)}{\dmin \cdot
 \Phi(G) }\right)
\]
steps with probability at least $1/2$. 
Note that by Cheeger's inequality (cf. \autoref{Cheeger} in the appendix), we can lower bound the conductance by $(1-\lambda(G))/2$. Thus, for $\lambda(G) \in o(1)$, the conductance is constant.
Together with $G$'s regularity and the fact that $\eta(t) \leq m$ (which must be true as $S_K(t) \cap S_0(t) = \emptyset$), we get
\begin{align*}
    O\left(\frac{n \eta(t)}{\dmin \cdot \Phi(G) 
}\right) 
= O\left(\frac{n \cdot  \lambda(G)^2 \cdot m }{\dmin
}\right)
= O\left(\lambda(G)^2 \cdot n^2 \right)
\end{align*}
Thus, in this case, the lemma holds.

\noindent
\textbf{Case 2:} $\eta(t) >  16\cdot {{2m} \cdot \lambda(G)^2}$.
In this case, either $1$ or $K-1$ must be good in the sense of \autoref{lemma:potentialdrop}
Recall that the discrepancy is at least $4$ and suppose, w.l.o.g., the median opinion is at most $K-2$. 
Otherwise, consider the mirrored configuration.
Now consider the opinions with opinion at most $K-2$, namely
$\ovl S(t) = \{ i \in V : x_i(t) \leq K-2 \}$.
In order to show that $K-1$ is good, we must show
$$\abs{E(S_K(t),\ovl S(t))} \geq \tfrac{1}{4}\Phi(G)\cdot\vol{S_K(t)}.$$
We want to apply the expander mixing lemma, which gives
\begin{align*}
    \abs{E(S_K(t),\ovl S(t))}
\ge \; &
\frac{\vol{S_K(t)}\cdot\vol{\ovl S(t)}}{2m}\\
&-
\lambda(G) \cdot\sqrt{\vol{S_K(t)}\cdot\vol{\ovl S(t)}}.
\end{align*}
Since the median opinion is at most $K-2$, we have $\vol{\ovl S(t) } \ge m$. 
Furthermore, by assumption
$$\vol{S_K(t)} \geq \eta(t) \geq 16\cdot {2\lambda(G)^2m} \geq 16 \cdot {\lambda(G)^2\vol{\ovl S(t)}}.$$
Therefore, $\vol{\ovl S(t)} \leq \tfrac{\lambda(G)^{-2} \vol{S_K(t)}}{16}$.
Using these upper and lower bound in the
expander mixing lemma, we get
\begin{align*}
    \abs{E(S_K(t),\ovl S(t))}
&\ge
\frac{\vol{S_K(t)}}{2}
-
\lambda(G) \sqrt{ \frac{\lambda(G)^{-2} \vol{S_K(t)}^2}{16}}\\
&=
\frac{\vol{S_K(t)}}{2} - \frac{\vol{S_K(t)}}{4} = \frac{\vol{S_K(t)}}{4}  
\end{align*}
As $\Phi(G)$ is at most $1$, index $K-1$ is good.

Now consider the process over an interval of length 
$T \in O\bigl( \lambda(G)^2 \cdot n^2 + \tfrac{n \log n}{\Phi^2(G)} \bigr)$. 
If at some time during
this interval we enter Case~1, by the argument above, an extreme opinion
disappears within an additional 
$O(\lambda(G)^2 \cdot n^2) \leq  T$ steps with probability at least $1/2$.
Otherwise, Case~2 holds throughout the entire interval, and hence in each step either index $K-1$ or $1$ (or both)
are good.
In particular, one of them must be good for at least $T/2$ steps or vanish. 
In this case, by \autoref{lemma:goodindexdrops}, either $K$ or $0$
disappears within $O(\frac{n \log n}{\Phi^2(G)} ) \leq  T$ steps , again with probability at least $1/2$.
Therefore, starting from any configuration with discrepancy $K$, within at most $2T$ steps, discrepancy decreases by one with probability $\frac{1}{2}$.
Finally, as $\lambda \in o(1)$ implies $\Phi(G) \in o(1)$ by \autoref{Cheeger}, the proposition follows.
\end{proof}

\section{Tools}
\subsection{Combinatoric Tools}
The spectral gap characterizes the presence of sparse cuts and the mixing behavior of
random walks, a relationship made precise by Cheeger’s inequality.

\begin{lemma}[Cheeger inequality {\cite{JerrumSinclair1989,LevinPeresWilmer2009}}]
\label{Cheeger}
Let $G=(V,E)$ be a connected, undirected graph with random-walk matrix $P$.
Then
\begin{align*}
\tfrac{\Phi(G)^2}{2}
\;\le\;
1-\lambda_2(P)
\;\le\;
2\Phi(G).
\end{align*}
\end{lemma}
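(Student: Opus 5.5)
Both inequalities come from the variational (Rayleigh quotient) characterization of the spectral gap of the reversible walk $P=D^{-1}A$. Since $P$ is self-adjoint in $\ell^2(\pi)$, we have
\begin{align*}
1-\lambda_2(P) \;=\; \min\left\{ \frac{\sum_{\{i,j\}\in E}(g_i-g_j)^2}{\sum_{i\in V} d_i g_i^2} \;:\; g\neq 0,\ \sum_{i\in V} d_i g_i = 0\right\}.
\end{align*}
I will write $\mathcal{R}(g)$ for this quotient and prove the two inequalities separately.

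\emph{Upper bound $1-\lambda_2\le 2\Phi(G)$.} Let $S$ attain the minimum in~\eqref{eq:conductance}, so $\vol{S}\le m$. Use the test vector $g=\1_S-\tfrac{\vol{S}}{2m}\1$, which is orthogonal to constants in the degree-weighted sense.
\begin{itemize}
\item The numerator of $\mathcal{R}(g)$ is exactly $|E(S,V\setminus S)|$, since only cut edges contribute and each contributes $1$.
\item The denominator is $\vol{S}\vol{V\setminus S}/(2m)$.
\item Since $\vol{V\setminus S}\ge m$, the denominator is at least $\vol{S}/2$.
\end{itemize}
Hence $1-\lambda_2\le \mathcal{R}(g)\le 2|E(S,V\setminus S)|/\vol{S}=2\Phi(G)$.

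\emph{Lower bound $\Phi(G)^2/2\le 1-\lambda_2$.} This uses a sweep-cut argument.
\begin{enumerate}
\item Take an eigenvector $g$ of $P$ for $\lambda_2$, so that $(I-P)g=(1-\lambda_2)g$.
\item Choose a degree-weighted median $c$ of $g$, so that both $\{i:g_i>c\}$ and $\{i:g_i<c\}$ have volume at most $m$.
\item Set $h=(g-c)^+$ and $h'=(c-g)^+$. Replacing $g$ by $-g$ if necessary, assume $h\neq0$ and that $h$ is the part we work with.
\item Show $\mathcal{R}(h)\le 1-\lambda_2$. Multiply the eigen-equation $(I-P)g=(1-\lambda_2)g$ coordinatewise by $d_ih_i$ and sum. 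On the support of $h$ we have $((I-P)(g-c))_i=((I-P)g)_i=(1-\lambda_2)g_i$, because $(I-P)$ annihilates constants. Using $h_i\le g_i-c$ pointwise and the sign of the off-support terms, this gives $\sum_{\{i,j\}\in E}(h_i-h_j)^2\le (1-\lambda_2)\sum_i d_ih_i(g_i-c)=(1-\lambda_2)\sum_i d_ih_i^2$. Note that $c\,(1-\lambda_2)\sum_i d_ih_i$ has to be handled here. I would treat it by instead choosing the $\pi$-orthogonal shift and arguing with whichever of $h,h'$ has the smaller Rayleigh quotient, in the standard way.
\item Show $\sum_{\{i,j\}\in E}|h_i^2-h_j^2|\ge\Phi(G)\sum_i d_ih_i^2$. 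Apply the Mihail identity~\eqref{eq:ppp} exactly as in the proof of \autoref{claim:averaging}, with $f=h^2$. Every level set $\{h^2\ge s\}$ with $s>0$ lies in $\{g>c\}$, so it has volume at most $m$ and the conductance bound applies to each prefix set $A_\ell$.
\item Apply Cauchy--Schwarz:
\begin{align*}
\sum_{\{i,j\}\in E}|h_i^2-h_j^2| \;\le\; \Big(\sum_{\{i,j\}\in E}(h_i-h_j)^2\Big)^{1/2}\Big(\sum_{\{i,j\}\in E}(h_i+h_j)^2\Big)^{1/2}.
\end{align*}
The second factor satisfies $\sum_{\{i,j\}\in E}(h_i+h_j)^2\le 2\sum_i d_ih_i^2$.
\item Combining steps 4--6 gives $\Phi(G)\le\sqrt{2\mathcal{R}(h)}$, i.e.\ $1-\lambda_2\ge\mathcal{R}(h)\ge\Phi(G)^2/2$.
\end{enumerate}

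The main obstacle is step 4: showing that truncating the eigenvector at the median does not increase the Rayleigh quotient, given that the median shift breaks orthogonality to constants. I would handle it using $h\le (g-c)$ on its support, $h\ge 0$ elsewhere, and the fact that $(I-P)$ kills constants. If needed, I would fall back to the textbook version in~\cite{LevinPeresWilmer2009} and simply cite it, which is how the paper itself uses this lemma.
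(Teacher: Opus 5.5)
The paper gives no proof of this lemma; it is quoted from Jerrum--Sinclair and Levin--Peres--Wilmer. Your outline is the standard textbook argument: the cut indicator as test vector for the upper bound, and the truncated second eigenvector, a sweep bound via Mihail's identity, and Cauchy--Schwarz for the lower bound. The upper bound and steps 1--3 and 5--7 are correct as stated. The one real gap is step 4, which you flagged yourself, and as written it does not go through.

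First, the pointwise inequality you need is $h\ge g-c$, not $h\le g-c$. Since $P$ has nonnegative entries, this gives $(Ph)_i\ge (Pg)_i-c$. Hence $((I-P)h)_i\le(1-\lambda_2)g_i=(1-\lambda_2)(h_i+c)$ on the support of $h$. Summing against $d_ih_i$ yields
\[
\sum_{\{i,j\}\in E}(h_i-h_j)^2\;\le\;(1-\lambda_2)\Big(\sum_i d_ih_i^2+c\sum_i d_ih_i\Big),
\]
not $(1-\lambda_2)\sum_i d_ih_i^2$. The eigen-equation produces $\sum_i d_ih_ig_i$, not $\sum_i d_ih_i(g_i-c)$. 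Without a sign condition on $c$, this does not give $\mathcal{R}(h)\le 1-\lambda_2$.

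The fix is a choice of which part to use, not an extra estimate. If $c\le 0$, use $h=(g-c)^+$. If $c>0$, run the same computation for the eigenvector $-g$ with median $-c$, i.e.\ use $h'=(c-g)^+$. In both cases the cross term is $\le 0$, and the support still has volume at most $m$, so step 5 applies. The chosen part is nonzero because $g\neq 0$ with $\sum_i d_ig_i=0$ takes both signs.

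Even simpler, as in Levin--Peres--Wilmer, take $c=0$. Then both $g^+$ and $g^-$ satisfy $\mathcal{R}\le 1-\lambda_2$, and you pick the one whose support has volume at most $m$.

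Your fallback rule, ``whichever of $h,h'$ has the smaller Rayleigh quotient,'' is wrong with $c=0$. It may select the part whose support has volume above $m$, where step 5 fails. It is valid only if you keep the median shift and use two facts:
\begin{itemize}
\item $\sum_i d_i(g_i-c)^2=\sum_i d_ig_i^2+2mc^2$, so $\mathcal{R}(g-c)\le\mathcal{R}(g)$;
\item $\sum_{\{i,j\}\in E}(g_i-g_j)^2\ge\sum_{\{i,j\}\in E}\big[(h_i-h_j)^2+(h'_i-h'_j)^2\big]$.
\end{itemize}
Together these give $\min\{\mathcal{R}(h),\mathcal{R}(h')\}\le\mathcal{R}(g)=1-\lambda_2$.
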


Beyond expansion, the second eigenvalue controls how evenly edges are distributed between
vertex sets, formalized by the following lemma.

\begin{lemma}[Expander Mixing Lemma {\cite{Chung1997,HooryLinialWigderson2006}}]
Let $G=(V,E)$ be a connected, undirected graph with random-walk matrix $P$.
Then for all $S,T\subseteq V$,
\begin{align*}
\bigl|E(S,T)-\tfrac{\vol{S}\vol{T}}{2m}\bigr|
\;\le\;
\lambda_2(P)\sqrt{\vol{S}\vol{T}},
\end{align*}
where $E(S,T)$ denotes the number of edges between $S$ and $T$
(counting edges in $S\cap T$ twice).
\end{lemma}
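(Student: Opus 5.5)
We prove the Expander Mixing Lemma spectrally, after symmetrizing the random-walk matrix. Let $N \coloneqq D^{-1/2} A D^{-1/2} = D^{1/2} P D^{-1/2}$. This matrix is symmetric and similar to $P$, so it has the same eigenvalues $1=\lambda_1\ge\lambda_2\ge\cdots\ge\lambda_n\ge -1$ and an orthonormal eigenbasis $u_1,\ldots,u_n$. The top eigenvector is $u_1 = D^{1/2}\mathbf{1}/\sqrt{2m}$, since $N D^{1/2}\mathbf{1} = D^{-1/2}A\mathbf{1} = D^{1/2}\mathbf{1}$ and $\|D^{1/2}\mathbf{1}\|^2=2m$.

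\textbf{Step 1: rewrite the edge count.} For $S,T\subseteq V$ with indicator vectors $\mathbf{1}_S,\mathbf{1}_T$,
\begin{align*}
E(S,T) = \mathbf{1}_S^{\top} A \mathbf{1}_T = a^{\top} N b, \qquad a\coloneqq D^{1/2}\mathbf{1}_S,\; b\coloneqq D^{1/2}\mathbf{1}_T .
\end{align*}
Here $\mathbf{1}_S^{\top}A\mathbf{1}_T$ counts ordered pairs $(i,j)\in S\times T$ with $\{i,j\}\in E$. An edge with both endpoints in $S\cap T$ is counted twice, which matches the convention in the statement.

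\textbf{Step 2: expand in the eigenbasis and isolate the main term.} Write $a=\sum_r \alpha_r u_r$ and $b=\sum_r \beta_r u_r$. Then $a^{\top}Nb=\sum_r \lambda_r\alpha_r\beta_r$. The top coefficients are
\begin{align*}
\alpha_1=\langle a,u_1\rangle = \frac{\mathbf{1}_S^{\top} D\mathbf{1}}{\sqrt{2m}} = \frac{\vol{S}}{\sqrt{2m}}, \qquad \beta_1=\frac{\vol{T}}{\sqrt{2m}},
\end{align*}
so the $r=1$ term equals exactly $\vol{S}\vol{T}/(2m)$.

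\textbf{Step 3: bound the error term.} The remainder is $\sum_{r\ge 2}\lambda_r\alpha_r\beta_r$. We bound each $|\lambda_r|$ by $\max_{r\ge2}|\lambda_r|$ and then apply Cauchy--Schwarz:
\begin{align*}
\Bigl|\sum_{r\ge 2}\lambda_r\alpha_r\beta_r\Bigr| \le \max_{r\ge 2}|\lambda_r|\,\Bigl(\sum_{r\ge2}\alpha_r^2\Bigr)^{1/2}\Bigl(\sum_{r\ge2}\beta_r^2\Bigr)^{1/2}.
\end{align*}
By Parseval, $\sum_{r\ge2}\alpha_r^2 \le \|a\|^2 = \sum_{i\in S} d_i = \vol{S}$, and likewise $\sum_{r\ge2}\beta_r^2 \le \vol{T}$. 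This gives the claimed bound $\sqrt{\vol{S}\vol{T}}$. One could sharpen it slightly by subtracting $\alpha_1^2$ and $\beta_1^2$, but that is not needed.

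\textbf{Main obstacle: which eigenvalue appears.} Step 3 only works with $\lambda(G)=\max\{\lambda_2,|\lambda_n|\}$, not with $\lambda_2$ alone. For a bipartite graph, taking $S=T$ equal to one side shows the inequality can fail if only $\lambda_2$ is used, because the $\lambda_n=-1$ component is then unbounded by $\lambda_2$. So the lemma must be read with $\lambda(G)$, which is the paper's own notation for the second largest eigenvalue in absolute value. This is also the form used in the proof of \autoref{lem:one-step}, where $G$ is regular and $\lambda(G)\in o(1)$. With that reading, the argument is only the computations above.
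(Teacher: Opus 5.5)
The paper gives no proof of this lemma (it only cites \cite{Chung1997,HooryLinialWigderson2006}), and your argument is the standard spectral proof behind those references and is correct: you symmetrize to $D^{-1/2}AD^{-1/2}$, split off the component along $D^{1/2}\mathbf{1}/\sqrt{2m}$, and control the remainder by Cauchy--Schwarz and Parseval. Your caveat about the eigenvalue is also right and necessary, because read literally with $\lambda_2(P)$ the statement fails for any graph with $\lambda_2(P)<0$ as well as for bipartite graphs (e.g.\ $K_n$, where $\lambda_2(P)=-1/(n-1)$ and $S=T=V$ gives left-hand side $0$ against a negative right-hand side). So it must be read with $\lambda(G)$, the largest of $\abs{\lambda_2},\ldots,\abs{\lambda_n}$, which is exactly the quantity the paper uses when it applies the lemma in the proof of \autoref{lem:one-step}.
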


If $\gamma(G)$ is bounded by a constant, volume-based and cardinality-based notions of
expansion are equivalent up to constant factors.

\subsection{Tools from Probability Theory}
All random processes are defined on a common probability space.
Let $(\mathcal{F}_t)_{t\ge0}$ be a filtration, i.e., all available randomness until step $t$.
A stochastic process $(Y_t)_{t\ge0}$ adapted to $(\mathcal{F}_t)$ i\emph{martingale}
if $\E{Y_{t+1}\mid \mathcal{F}_t}=Y_t$ for all $t\ge0$, and 
a \emph{supermartingale}
if $\E{Y_{t+1}\mid \mathcal{F}_t}\le Y_t$ for all $t\ge0$.
A random variable $\tau$ is a \emph{stopping time} with respect to $(\mathcal{F}_t)$ if the
event $\{\tau=t\}$ is measurable with respect to $\mathcal{F}_t$ for all $t$.
We will make use of the following well-known tool; see, for example,~\cite{durrett2019probability}.

\begin{theorem}[Doob's Optional Stopping Theorem]
\label{doob}
Let $(Y_t)_{t\ge0}$ be a supermartingale and let $\tau$ be a stopping time such that
$\tau\le c$ almost surely for some constant $c$.
Then $\E{Y_\tau}\le \E{Y_0}$.
\end{theorem}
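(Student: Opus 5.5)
The statement to prove is Doob's Optional Stopping Theorem for bounded stopping times. I would give the standard telescoping argument for the stopped process, which needs no integrability beyond what the supermartingale definition already provides. Since $\tau\le c$ almost surely, we may take $c$ to be a nonnegative integer $N$ (replacing $c$ by $\lceil c\rceil$). Then
\[
Y_\tau = Y_0 + \sum_{t=0}^{N-1} \1_{\{\tau > t\}}\,(Y_{t+1}-Y_t),
\]
because on the event $\{\tau = s\}$ the sum telescopes exactly to $Y_s - Y_0$. This identity is the whole backbone of the proof.

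The next step is to take expectations term by term. The key observation is that $\{\tau>t\} = \Omega\setminus\bigcup_{s\le t}\{\tau=s\}$ belongs to $\mathcal{F}_t$, since $\tau$ is a stopping time. Each $Y_t$ is integrable, which is implicit in the definition of a supermartingale, so each summand is integrable. By the tower property,
\[
\E{\1_{\{\tau>t\}}(Y_{t+1}-Y_t)} = \E{\1_{\{\tau>t\}}\bigl(\E{Y_{t+1}\mid\mathcal{F}_t}-Y_t\bigr)} \le 0,
\]
using the supermartingale inequality $\E{Y_{t+1}\mid \mathcal{F}_t}\le Y_t$ and the nonnegativity of the indicator. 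Summing over $t=0,\ldots,N-1$ gives $\E{Y_\tau}\le \E{Y_0}$.

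There is no real obstacle here; the only point requiring care is the measurability of $\{\tau>t\}$ with respect to $\mathcal{F}_t$, which is what makes the "pull out what is known" step legitimate. The boundedness of $\tau$ is what keeps the sum finite, so no convergence or uniform-integrability arguments are needed. In the application in \autoref{lemma:goodindexdrops}, the process $\psi$ is nonnegative and $T'\le T$, so the hypotheses hold directly. Note that the inequality one gets there is $\E{\psi(T')}\le\psi(0)$, which is the direction actually used when solving for $\pr{B_k(T)}$.
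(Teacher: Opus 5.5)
Your proof is correct. The telescoping identity for $Y_\tau$, the $\mathcal{F}_t$-measurability of $\{\tau>t\}$, and the tower property are exactly what is needed, and the boundedness of $\tau$ keeps the sum finite.

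The paper gives no proof of its own; it cites this as a standard result \cite{durrett2019probability}. The textbook proof there is essentially your argument, phrased as the stopped process $Y_{\tau\wedge t}$ being a martingale transform of $Y$ by the predictable indicator $\1_{\{\tau>t\}}$. Your closing remark that the application needs $\E{\psi(T')}\le\psi(0)$ also matches the appendix version of that lemma.
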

We will also use concentration bounds for martingales with bounded increments, namely the following well-known bound; see, for example, \cite{boucheron2013concentration}.
\begin{lemma}[Azuma--Hoeffding inequality]
Let $(X_t)_{t=0}^T$ be a martingale with respect to a filtration
$(\mathcal{F}_t)_{t=0}^T$.
Assume that $|X_t-X_{t-1}|\le c$ almost surely for all $t\ge1$.
Then for all $\lambda>0$,
\begin{align*}
\pr{|X_T-X_0|\ge \lambda}
\;\le\;
2\exp\!\left(-\tfrac{\lambda^2}{2Tc^2}\right).
\end{align*}
\end{lemma}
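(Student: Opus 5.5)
We prove the bound by the exponential-moment (Chernoff) method applied to the martingale differences $D_t \coloneqq X_t - X_{t-1}$, $t=1,\ldots,T$. By the martingale property, $\E{D_t \mid \mathcal{F}_{t-1}} = 0$, and by assumption $|D_t|\le c$ almost surely. It suffices to prove the one-sided bound
\[
\pr{X_T - X_0 \ge \lambda} \le \exp\!\left(-\tfrac{\lambda^2}{2Tc^2}\right).
\]
The lower tail follows by applying the same argument to the martingale $(-X_t)_{t}$, whose increments $-D_t$ satisfy the same hypotheses. A union bound over the two tails then gives the factor $2$.

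\textbf{Step 1 (conditional Hoeffding lemma).} We show that for every real $s$ and every $t$,
\[
\E{e^{sD_t} \mid \mathcal{F}_{t-1}} \le e^{s^2c^2/2}.
\]
Since $x\mapsto e^{sx}$ is convex, for $x\in[-c,c]$ we have
\[
e^{sx} \le \tfrac{c-x}{2c}e^{-sc} + \tfrac{c+x}{2c}e^{sc}.
\]
Substituting $x = D_t$ and taking conditional expectations, the conditional mean zero removes the linear term and yields $\E{e^{sD_t}\mid\mathcal{F}_{t-1}} \le \cosh(sc)$. Comparing Taylor series termwise gives $\cosh(y)\le e^{y^2/2}$, which proves the claim. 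This is the only step that needs genuine care: the bound must hold conditionally and almost surely, which is where the almost-sure bound $|D_t|\le c$ is used.

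\textbf{Step 2 (iterate by the tower property).} Since $e^{s(X_{T-1}-X_0)}$ is $\mathcal{F}_{T-1}$-measurable, Step 1 gives
\[
\E{e^{s(X_T-X_0)}} = \E{e^{s(X_{T-1}-X_0)}\,\E{e^{sD_T}\mid\mathcal{F}_{T-1}}} \le e^{s^2c^2/2}\,\E{e^{s(X_{T-1}-X_0)}}.
\]
Inducting downward over $t = T, T-1, \ldots, 1$ yields
\[
\E{e^{s(X_T-X_0)}} \le e^{Ts^2c^2/2}.
\]

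\textbf{Step 3 (Markov and optimization).} For $s>0$, Markov's inequality gives
\[
\pr{X_T-X_0\ge\lambda} \le e^{-s\lambda}\,\E{e^{s(X_T-X_0)}} \le e^{-s\lambda + Ts^2c^2/2}.
\]
Choosing $s=\lambda/(Tc^2)$ minimizes the exponent and gives $\exp\!\left(-\lambda^2/(2Tc^2)\right)$. Combining this with the symmetric lower-tail bound completes the proof.
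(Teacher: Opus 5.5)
Your proof is correct and complete. It is the standard exponential-moment argument: a conditional Hoeffding lemma via convexity and $\cosh y \le e^{y^2/2}$, iteration by the tower property to get $\mathbf{E}\bigl[e^{s(X_T-X_0)}\bigr]\le e^{Ts^2c^2/2}$, Markov's inequality with $s=\lambda/(Tc^2)$, and a union bound over the two tails. The paper does not prove this lemma itself; it cites it as a well-known tool (Boucheron, Lugosi, and Massart), where essentially this argument is the one given.
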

We state the tail bounds of Janson~\cite{Janson2018} for sums of independent geometric random variables.

\begin{theorem}
\label{janson}
Let $X_1,\dots,X_n$ be independent random variables with $X_i \sim Geo(p)$.
and define
$X := \sum_{i=1}^n X_i$
For $\mu := \mathbb{E}[X] = \sum_{i=1}^n \frac{1}{p}$ and every $\lambda \ge 1$,
\[
\pr{X \ge \lambda \mu}
\le
\exp\!\big(- p \mu (\lambda - 1 - \ln \lambda)\big).
\]
\end{theorem}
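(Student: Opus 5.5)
The plan is the standard Chernoff (exponential moment) method. The one nontrivial ingredient is a clean bound on the moment generating function of a single geometric variable, comparing it with that of an exponential variable with the same mean. Throughout, I take $X_i\sim Geo(p)$ to be supported on $\{1,2,\dots\}$ with $\pr{X_i=j}=(1-p)^{j-1}p$. Then $\mu=n/p$, so $p\mu=n$; this identity is what makes the final exponent come out exactly.

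\textbf{Step 1 (MGF of one geometric).} For $0\le t<-\ln(1-p)$ we have
\[
\E{e^{tX_i}}=\frac{pe^t}{1-(1-p)e^t}.
\]
I claim that for $0\le t<p$ (note $p\le -\ln(1-p)$, so this range is admissible)
\[
\E{e^{tX_i}}\le \frac{1}{1-t/p}.
\]
Both denominators are positive, so after clearing them the inequality is equivalent to $pe^t(1-t/p)\le 1-(1-p)e^t$. Rearranging gives $e^t\bigl(p-t+1-p\bigr)\le 1$, i.e.\ $e^t(1-t)\le 1$. This holds for all $t\ge 0$, because $(1-t)e^t$ has derivative $-te^t\le 0$ and equals $1$ at $t=0$. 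This is the step I would check most carefully. It is the only place where the geometric distribution enters, and choosing the comparison $1/(1-t/p)$, rather than a cruder bound that loses an additive $n$, is what yields the sharp exponent.

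\textbf{Step 2 (Markov + independence).} By Markov's inequality applied to $e^{tX}$ and by independence of the $X_i$,
\[
\pr{X\ge\lambda\mu}\le e^{-t\lambda\mu}\prod_{i=1}^n\E{e^{tX_i}}\le e^{-t\lambda\mu}\,(1-t/p)^{-n}.
\]

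\textbf{Step 3 (optimize $t$).} Choose $t=p(1-1/\lambda)$, which lies in $[0,p)$ since $\lambda\ge 1$. Then $1-t/p=1/\lambda$, so the bound becomes
\[
\exp\bigl(-p\mu(\lambda-1)\bigr)\,\lambda^{n}.
\]
Substituting $n=p\mu$, this equals $\exp\bigl(-p\mu(\lambda-1-\ln\lambda)\bigr)$, which is the claimed bound. If the convention were instead geometric on $\{0,1,\dots\}$, I would shift by $n$ and redo the same computation. However, the form of the statement, in particular $\mu=\sum 1/p$, indicates the positive-support convention, which is also the one used in \autoref{lem:k-step}.
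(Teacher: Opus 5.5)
Your proof is correct. The paper does not prove this statement itself; it cites Janson~\cite{Janson2018}, and your argument is essentially that standard Chernoff proof specialized to a common parameter. It uses the comparison $\E{e^{tX_i}}=p/(e^{-t}-1+p)\le p/(p-t)$ for $0\le t<p$ (equivalent to your $(1-t)e^t\le 1$), then Markov's inequality with independence, then the choice $t=p(1-1/\lambda)$. With unequal $p_i$ one would need an extra step bounding $-\ln(1-t/p_i)\le (p_*/p_i)\bigl(-\ln(1-t/p_*)\bigr)$ to reduce to the minimum $p_*$; with equal $p$ the identity $n=p\mu$ gives the exponent directly.
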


\section{Omitted Proofs}

\subsection{Proof of \autoref{Claim1}}
\label{sec:proofClaim1}

For two values $x_i$ and $x_j$ define 
\begin{align*}
    \iota(i,j) \coloneqq \begin{cases}
        0 & \textit{if } x_i=x_j\\
        -1 & \textit{if } x_i>x_j\\
        1 & \textit{if } x_i<x_j\\
    \end{cases}
\end{align*}
According to Protocol \autoref{prot:edgeinteraction}, at a given step a random directed edge $e=(u,v)$ is chosen as follows. Firstly a  vertex $u$ is chosen uniformly at random, and then a neighbour $v \in N_u$ is chosen with probability $1/d_u$.  Thus for a given vertex $i \in V$,
\begin{align*}
\E{d_i \cdot f(X_i(t+1))^2} &= \pr{u \ne i} \cdot d_i \cdot f(x_i)^2 +\pr{u=i} \cdot d_i \cdot \sum_{j\in N_i}\pr{v=j \mid u=i} \cdot \E{ f(X_i(t+1)))^2 \mid (u,v)=(i,j)}.
\end{align*} 
As the vertices $u$ and $v$ are random variables,
for a given vertex $i$, the probability that $u=i$ is $1/n$, and  the probability that $u \ne i$ is $(1-1/n)$. 
Further, $\pr{v=j \mid u=i} = \tfrac{1}{d_i}$.
Thus, as $f(X_i(t+1))=0$ if $i \not \in S$, and $f(X_i(t+1))=f(x_i+\iota(i,j))$ if $i \in S$ and $(u,v)=(i,j)$,
\begin{align}
    \E{\sum_{i \in V} d_i \cdot f(X_i(t+1))^2 } &= \left(1-\frac{1}{n}\right)\sum_{i \in V}  d_i \cdot f(x_i)^2 
   + \frac{1}{n} {\sum_{\substack{(i,j) \in E \\ x_i \geq k}} d_i \cdot \tfrac{1}{d_i}  \cdot f(x_i + \iota(i,j))^2} \nonumber \\ 
   &= \left(1-\frac{1}{n}\right)\sum_{i \in V}  d_i \cdot f(x_i)^2 
   + \frac{1}{n} \underbrace{\sum_{\substack{(i,j) \in E \\ x_i \geq k }} \ f(x_i + \iota(i,j))^2}_{(*)}.
    \label{eqn:total_exp}
\end{align}

In the following, we focus on the sum $(*)$. Note that the potential $\sum d_i f(X_i(t+1))^2$
can only differ from $\sum d_i f(x_i)^2$ if at least one endpoint of the selected directed edge 
is in $S_{+1} = \{ r \in V \mid x_r \geq k+1\}$. 
Otherwise, if neither value is above $k$, the updated value cannot be above $k$. 
Further, noting that $\iota(i,j) = - \iota(j,i)$, we can rearrange this sum as follows, to introduce the constraint $x_i \ge x_j$:
\begin{align*}
    (*) = \sum_{\substack{\substack{(i,j) \in E \\ x_i \geq k }}}  f(x_i + \iota(i,j))^2 = 
    \sum_{\substack{(i,j) \in E_{S_{+1}} \\ x_i > x_j}} \left(f(x_i + \iota(i,j))^2 + f(x_j - \iota(i,j))^2\right).
\end{align*}
Consider a single summand in the right-hand sum above, i.e., the expected change along a single edge $\{i,j\}$.
Let $i \in S_{+1}$ be a node with $x_i > k$ and $j \in V$ be a node with $x_j \leq x_i$. 
We show that 
\begin{align}
f(x_i + \iota(i,j))^2 + f(x_j - \iota(i,j))^2 \leq f(x_i)^2 + f(x_j)^2 - (f(x_i) - f(x_j)) + \mathds{1}_{|x_i-x_j|=1}.    \label{lem:edgediff}
\end{align}
To prove this inequality, we distinguish between the following  three cases.

\noindent \textbf{Case 1}: $x_i > x_j+1$.
    In this case, we have $\iota(i,j)=-1$, i.e., $x_i$ will decrease and $x_j$ will  increase.
    Our assumption that $x_i > k$ implies that
    $f(x_i)$ is at least one,
    so we have
    \begin{align*}
        f(x_i-1)^2 &:= \big((x_i-1) - k \big)^2 = \big((x_i - k) - 1\big)^2\\
        &= (f(x_i)-1)^2 = f(x_i)^2 - 2f(x_i) + 1
    \end{align*}
    Now, we distinguish between two subcases:
    \begin{itemize}
        \item If $x_j < k$, it holds $f(x_j ) = f(x_j + 1) = 0$. Therefore
        \begin{align*}
            f(x_j + 1)^2 = 0 = f(x_j)^2 + f(x_j).
        \end{align*}
        Together with the other bounds, this gives:
        \begin{align*}
            f(x_i + \iota(i,j))^2 + f(x_j - \iota(i,j))^2 &= f(x_i-1)^2 + f(x_j+1)^2\\
            &= f(x_i)^2 - 2f(x_i) + 1 + f(x_j)^2 + f(x_j)\\
            &\leq f(x_i)^2 + f(x_j)^2 - (f(x_i) - f(x_j)).
        \end{align*}
        Here, the last inequality follows from $f(x_i) \geq 1$ (as $x_i > k$).
        \item Otherwise, if $x_j \geq k$, it holds that:
    \begin{align*}
        f(x_j+1)^2 &= \big((x_j+1) - k \big)^2 = \big((x_j - k) + 1\big)^2\\
        &= (f(x_j)+1)^2 = f(x_j)^2 + 2f(x_j) + 1
    \end{align*}
    Thus, 
    \begin{align*}
        f(x_i-1)^2 + f(x_j+1)^2&= f(x_i)^2 + f(x_j)^2   - 2f(x_i) + 2f(x_j) + 2
    \end{align*}
    As  $f(x_i)-f(x_j)=x_i-x_j \ge 2$, this simplifies to
        \begin{align*}
        f(x_i-1)^2 + f(x_j+1)^2&{\; \le} f(x_i)^2 + f(x_j)^2   - (f(x_i) - f(x_j))
    \end{align*}
    as claimed.
    \end{itemize}

\noindent \textbf{Case 2}: $x_i = x_j+1$ (the case with $\mathds{1}_{|x_i-x_j|=1} = 1$). 
In this case, we also  have $\iota(i,j)=-1$.
However, as the difference is exactly one, the process (locally) reduces to classical voting.
We note that, since $x_i-1 = x_j$ it holds:
\begin{align*}
    f(x_i-1) &= f(x_j)\\
    f(x_j+1) &= f(x_{i})
\end{align*}
Therefore, the potential remains unchanged:
\begin{align*}
        f(x_i-1)^2 + f(x_j+1)^2&= f(x_i)^2 + f(x_j)^2
    \end{align*}
By noting that $f(x_i)-f(x_j)=1$, we get: 
\begin{align*}
f(x_i-1)^2 + f(x_j+1)^2 
=  f(x_i)^2 + f(x_j)^2 - (f(x_i) - f(x_j)) + 1.
\end{align*}

\noindent \textbf{Case 3}: $x_i = x_j$. 
In this case, there is no change as neither node will change its opinion because they already agree, i.e., we have $\iota(i,j)=0$.
Thus, as $f(x_i)=f(x_j)$,
\begin{align*}
f(x_i-0)^2 + f(x_j+0)^2&= f(x_i)^2 + f(x_j)^2  {\; - (f(x_i) - f(x_j)).}
\end{align*}

Inequality~\eqref{lem:edgediff} implies the following: 
\begin{align*}        
     (*) &\leq  \sum_{\substack{(i,j) \in E_{S_{+1}} \\ x_i \geq x_j}}
     \left(f(x_i)^2 + f(x_j)^2 - (f(x_i) - f(x_j)) + \mathds{1}_{|x_i-x_j|=1}\right)\\
      &= \sum_{i \in V} d_i f(x_i)^2 - 
      \sum_{\substack{(i,j) \in E_{S_{+1}} \\ x_i \geq x_j}} 
      \left(f(x_i) - f(x_j) 
      - \mathds{1}_{|x_i-x_j|=1}\right).
\end{align*}
Referring back to~\eqref{eqn:total_exp}, we see that
\begin{align*}
\E{\sum_{i \in V} d_i \cdot f(X_i(t+1))^2 }
\leq   & \left(1-\frac{1}{n}\right)\sum_{i \in V}  d_i \cdot f(x_i)^2\\
\phantom{=}   & + \frac{1}{n} \left(\sum_{i \in V}  d_i \cdot f(x_i)^2 -\sum_{\substack{(i,j) \in E_{S_{+1}} \\ x_i \geq x_j}} \left(f(x_i) - f(x_j)
- \mathds{1}_{|x_i-x_j|=1}\right) 
\right)\\
   =  &  \sum_{i \in V} d_i\cdot f(x_i)^2 - \frac{1}{n}\left(\sum_{\substack{(i,j) \in E_{S_{+1}}\\ x_i\geq x_j}} \; \left(f(x_i) - f(x_j) - 
   \mathds{1}_{|x_i-x_j = 1|}\right)\right)
\end{align*}
The above inequality is equivalent to~\eqref{EsumClaim1}.
\clearpage

\subsection{Detailed Proof of \autoref{Claim2}}
\label{sec:proofClaim2}
First, recall that
\begin{align}
    \E{\Delta(X_i(t+1)) \mid \X{t} = \vec{x}} &\coloneqq \E{\sum_{i \in V} d_i \cdot f(X_i(t+1))^2 \mid \X{t} = \vec{x}}- \sum_{i \in V} d_i \cdot f(x_i)^2 \nonumber\\
    &\le -\, \frac{1}{n}\sum_{\substack{(i,j) \in E_{S_{+1}}\\ x_i\geq x_j}}\left( f(x_i) - f(x_j) - \mathds{1}_{\{|x_i-x_j = 1|\}}\right)
    \;\;\;\;\;\;\;\;\; \text{[ by \Cref{EsumClaim1} ]}\label{eq:case2}\\
    &\leq - \frac{1}{n}\left( \Phi(G)\sum_{i \in V } d_i \cdot f(x_i) - \sum_{\substack{(i,j) \in E_{S_{+1}}\\ x_i\geq x_j}} \mathds{1}_{\{|x_i-x_j = 1|\}}\right)
    \;\;\; \text{[ by \Cref{Part2Ineq1}, as $\vol{S} \leq m$ ]}
    \label{eq:case1}
\end{align}
Recall that $E_{S_{+1}} \coloneqq \{ (i,j) \in E \mid i \in {S_{+1}}\}$.
%
To prove \autoref{Claim2Value} we now use \autoref{eq:case2} and \autoref{eq:case1}.
We distinguish between two cases based on the value of $\sum_{i \in V} d_i \cdot f(x_i)$. 
\begin{itemize}
[itemsep=2pt,topsep=2pt,parsep=0pt,partopsep=0pt,leftmargin=14pt]
\item \textbf{Case 1:} {\; Assume ${\sum_{i \in V} d_i \cdot f(x) \geq  (4/3){\vol{S_{+1}}}/{\Phi(G)}}$. Here, we use \autoref{eq:case1} to obtain:}
    \begin{align*}
     \E{\Delta(X_i(t+1)) \mid \X{t} = \vec{x}}
     &\leq {\, (-1)} \;\;\frac{1}{n}\left( \Phi(G)\sum_{i \in V} d_i f(x_i) - \sum_{\substack{(i,j) \in E_{S_{+1}}\\ x_i\geq x_j}} \mathds{1}_{\{|x_i-x_j| = 1\}}\right)\\
     & \text{\hspace{15em} (using that $\sum_{i \in V} d_i \cdot f(x) \geq  {(4/3)\vol{S_{+1}}}/{\Phi(G)}$)}\\
     &\leq {\, (-1)}\frac{1}{n}\left( \frac{\Phi(G)}{4} \sum_{i \in V} d_i f(x_i) + \frac{3\Phi(G)}{4}\cdot\frac{(4/3)\vol{S_{+1}}}{\Phi(G)}
     - \sum_{\substack{(i,j) \in E_{S_{+1}}\\ x_i\geq x_j}} \mathds{1}_{\{|x_i-x_j| = 1\}}\right)\\
     &\leq {\, (-1)}\frac{1}{n}\left( \frac{\Phi(G)}{4} \sum_{i \in V} d_i f(x_i) +  \vol{S_{+1}} - 
     \sum_{\substack{(i,j) \in E_{S_{+1}}\\ x_i\geq x_j}} \mathds{1}_{\{|x_i-x_j| = 1\}}\right)
\end{align*}
Now, we focus on the last summand, namely $\sum_{\substack{(i,j) \in E_{S_{+1}}\\ x_i\geq x_j}} \mathds{1}_{\{|x_i-x_j| = 1\}}$. First, we pessimistically assume that the difference along all edges with an endpoint in $S_{+1}$ is exactly $1$. 
Then, the formula simplifies to
\begin{align*}
     \E{\Delta(X_i(t+1)) \mid \X{t} = \vec{x}} &\leq {\, (-1)}\frac{1}{n}\left( \frac{\Phi(G)}{4} \sum_{i \in V} d_i f(x_i) +  \vol{S_{+1}} - 
     \sum_{\substack{(i,j) \in E_{S_{+1}}\\ x_i\geq x_j}} 1\right)
\end{align*}
We continue by bounding $\sum_{\substack{(i,j) \in E_{S_{+1}}\\ x_i\geq x_j}} 1$ by $\vol{S_{+1}}$.
Note that the sum $\sum_{\substack{(i,j) \in E_{S_{+1}}\\ x_i\geq x_j}} 1$ counts each edge adjacent to a node in $S_{+1}$ exactly once. 
On the other hand, in $\vol{S_{+1}} \coloneqq \sum_{i \in S_{+1}} d_i$, each edge is counted once \textbf{for each endpoint in $S_{+1}$}.
This means, each edge $\{i,j\}$ that is within $S_{+1}$, i.e., between $i \in S_{+1}$ and $j \in S_{+1}$, is counted twice and each edge that with precisely one endpoint in $S$, i.e, between $i \in S$ and $j \not\in S$, is counted once.
Thus, $\vol{S_{+1}}$ is always bigger than $\sum_{\substack{(i,j) \in E_{S_{+1}}\\ x_i\geq x_j}} 1$.
Therefore,
\begin{align*}
    \E{\Delta(X_i(t+1)) \mid \X{t} = \vec{x}}  &\leq {\, (-1)}\frac{1}{n}\left( \frac{\Phi(G)}{2} \sum_{i \in V} d_i f(x_i) +  \vol{S_{+1}} - 
     \sum_{ i \in S_{+1}} d_i\right)\\
      &\leq{\, (-1)} \frac{1}{n}\left( \frac{\Phi(G)}{4} \sum_{i \in V} d_i f(x_i) + \vol{S_{+1}} - \vol{S_{+1}} \right)\\
      &\leq{\, (-1)} \frac{1}{n}\left( \frac{\Phi(G)}{4} \sum_{i \in V} d_i f(x_i)\right) 
      \leq {\; (-1)} \frac{1}{n}\left( \frac{\Phi^2}{4} \sum_{i \in V} d_i f(x_i)\right) .
\end{align*}
This proves the fist case.
    \item \textbf{Case 2:} Assume $\sum_{i \in V} d_i \cdot f(x) < (4/3)\vol{S_{+1}}/\Phi(G)$. In this case, use \autoref{eq:case2} and bound the term in two steps. First, we let the internal edges between nodes in $S_{+1}$ cancel each other out. 
    To this end, we divide the summand into two groups: The edges within $S_{+1}$ where both endpoints have a value larger than $k$ and the the edges that leave $S_{+1}$ and end in $S \setminus S_{+1}$ or $\overline S$. Note that we can ignore edges between $S \setminus S_{+1}$ and $\overline S$ as both endpoints are at most $k$ as they do not contribute to the sum. 
    Further, note that both endpoints of an edge $(i,j) \in E_{S_{+1}}$ are in $S_{+1}$ if and only if $x_i, x_j > k$.
    In light of these observations, it holds:
    \begin{align}
     \E{\Delta(X_i(t+1)) \mid \X{t} = \vec{x}} &= - \frac{1}{n}\sum_{\substack{(i,j) \in E_{S_{+1}}\\ x_i\geq x_j}}\left( f(x_i) - f(x_j) - \mathds{1}_{\{|x_i-x_j = 1|\}}\right)\nonumber\\
     &\leq {\,(-1)}\frac{1}{n}\left( \sum_{\substack{(i,j) \in E_{S_{+1}}\\ x_i \geq x_j > k}} f(x_i)-f(x_j) + \sum_{\substack{(i,j) \in E_{S_{+1}}\\ x_i > k, x_j \leq k}} f(x_i)-f(x_j)\right) \nonumber\\
     & \phantom{\leq}{\;( +1)} \frac{1}{n}\left(\sum_{\substack{(i,j) \in E_{S_{+1}}\\ x_i \geq x_j > k}} \mathds{1}_{\{|x_i-x_j| = 1\}} + \sum_{\substack{(i,j) \in E_{S_{+1}}\\ x_i > k, x_j \leq k}} \mathds{1}_{\{|x_i-x_j| = 1\}}\right)\nonumber \\
     &\leq {\,(-1)}\frac{1}{n}\left( \sum_{\substack{(i,j) \in E_{S_{+1}}\\ x_i \geq x_j > k}} (f(x_i)-f(x_j)) - \mathds{1}_{\{|x_i-x_j| = 1\}} \right) \label{eq:internal}\\
     &\phantom{\leq} {\, -} \frac{1}{n}\left(\sum_{\substack{(i,j) \in E_{S_{+1}}\\ x_i > k, x_j \leq k}} (f(x_i)-f(x_j))  -  \mathds{1}_{\{|x_i-x_j| = 1\}}\right)\label{eq:external}
\end{align}
Now we bound \autoref{eq:internal} and \autoref{eq:external} separately. 
We begin with \autoref{eq:internal}:
Note that whenever it holds $|x_i-x_j| = 1$ the difference $f(x_i) - f(x_j)$ is also exactly one as both $i$ and $j$ are larger than $k$. 
Formally,
\begin{align*}
    f(x_i) - f(x_j) = (x_i - k) - (x_j - k) = x_i - x_j \geq \mathds{1}_{|x_i - x_j| = 1}
\end{align*}
Therefore, the terms cancel each other and we get:
\begin{align*}
   \eqref{eq:internal} &= - \frac{1}{n}\left(\sum_{\substack{(i,j) \in E_{S_{+1}}\\ x_i \geq x_j > k}} (f(x_i)-f(x_j)) - \mathds{1}_{\{|x_i-x_j| = 1\}}\right) \\
   &\leq - \frac{1}{n}\left(\sum_{\substack{(i,j) \in E_{S_{+1}}\\ x_i \geq x_j > k}} \mathds{1}_{\{|x_i-x_j| = 1\}} - \mathds{1}_{\{|x_i-x_j| = 1\}}\right) = 0.
\end{align*}
Thus, we only need to consider \autoref{eq:external} and analyze:
\begin{align*}    
         \E{\Delta(X_i(t+1)) \mid \X{t} = \vec{x}}&\leq {\,(-1)} \frac{1}{n} \sum_{\substack{(i,j) \in E_{S_{+1}}\\ x_i > k, x_j \leq k}} \left(  f(x_i)-f(x_j) - \mathds{1}_{\{|x_i-x_j| = 1\}}\right)
\end{align*}
Next, we exploit that the endpoints of all remaining edges are between $S_{+1}$ and $V \setminus S$ have a difference of at least one. Since one endpoint is larger than $k$, for all $(i,j) \in E_{S_{+1}}$ with $x_i > k$ and $x_j \leq k$, we have:
\begin{align}
     f(x_i) - f(x_j) = (x_i - k) - 0 = x_i - k \geq 1.
\end{align}
Recall that we defined $\mathds{O}(S_{+1}) \coloneqq \{ (i,j) \in E \mid i \in S_{+1}, j \in V\setminus S_{+1} \}$.
Together with this definition, we get:
\begin{align*}
     \E{\Delta(X_i(t+1)) \mid \X{t} = \vec{x}}
     &\leq  {\,(-1)} \frac{1}{n}\sum_{\substack{(i,j) \in E_{S_{+1}}\\ x_i > k, x_j \leq k}}\left( (f(x_i) - f(x_j)) -  \mathds{1}_{\{|x_i-x_j| = 1\}}\right)\\
&\text{As $(f(x_i) - f(x_j)) \geq 1$, we get}\\
&\leq  {\,(-1)} \frac{1}{n}\sum_{\substack{(i,j) \in E_{S_{+1}}\\ x_i > k, x_j \leq k}}\left( 1 -  \mathds{1}_{\{|x_i-x_j| = 1\}}\right)\\
&=  {\,(-1)} \frac{1}{n} \mathds{O}(S) + \frac{1}{n}\sum_{\substack{(i,j) \in E_{S_{+1}}\\ x_i > k, x_j \leq k}}\left(\mathds{1}_{\{|x_i-x_j| = 1\}}\right)\\
&\text{As $\mathds{O}(S_{+1}) \geq \Phi(G)\cdot \vol{S_{+1}} $ by definition,}\\
&=  {\,(-1)} \frac{1}{n} \Phi(G)\cdot \vol{S_{+1}}  + \frac{1}{n}\sum_{\substack{(i,j) \in E_{S_{+1}}\\ x_i > k, x_j \leq k}}\left(\mathds{1}_{\{|x_i-x_j| = 1\}}\right)
\end{align*}
Now, we further divide the edges and distinguish between edges that start in $S_{+1}$ and end in $S\setminus S_{+1}$ and edges that start in $S_{+1}$ and end in $\overline{S}$.
By definition, all edges that start in $S_{+1}$ and end in $\overline{S}$ have a difference of at least $2$.
In addition, we pessimistically assume that all that start in $S_{+1}$ and end in $S$ have a difference of precisely $1$ (although it theoretically can be larger, too).
\begin{align*}
     \E{\Delta(X_i(t+1)) \mid \X{t} = \vec{x}} &\leq  {\,(-1)} \frac{1}{n} \Phi(G)\cdot \vol{S_{+1}}  + \frac{1}{n}\sum_{\substack{(i,j) \in E_{S_{+1}}\\ x_i > k, x_j = k}}\left(\mathds{1}_{\{|x_i-x_j| = 1\}}\right)\\
     &\leq  {\,(-1)} \frac{1}{n} \Phi(G)\cdot \vol{S_{+1}}  + \frac{1}{n}\sum_{\substack{(i,j) \in E_{S_{+1}}\\ x_i > k, x_j = k}}1\\
     &=  {\,(-1)} \frac{1}{n} \Phi(G)\cdot \vol{S_{+1}}  + \frac{1}{n}|E(S_{+1},S)|\\
     &\leq {\,(-1)} \frac{1}{n} \Phi(G)\cdot \vol{S_{+1}}  + \frac{1}{n}\frac{3}{4} \Phi(G) \vol{S_{+1}}
\end{align*}
Finally, we get
\begin{align*}
     \E{\Delta(X_i(t+1)) \mid \X{t} = \vec{x}}
     &\leq {\,(-1)} \frac{1}{n}\left(  \frac{1}{2} \Phi(G)\vol{S_{+1}}  \right) \leq {\,(-1)} \frac{1}{n}\left(  \frac{1}{4} \Phi^2(G) \sum_{i \in V} d_i f(x_i) \right).
\end{align*}
This proves the second case.
\end{itemize}
Together, the two cases imply the claim. \qedhere

\subsection{Proof of \autoref{lemma:goodindexdrops}}
\label{sec:ProofLemma35}
\newcommand{\Et}[1]{\mathbf{E}_t\!\left[#1\right]}
We begin with some definitions and assumptions.
Let $\mathcal{F}(t)$ be the natural filtration of the process that contains all configurations (and thus all random choices) up to step $t$.
Note that $\tau_k(t)$ is completely determined by $\mathcal{F}(t)$.
Further define the event
\begin{align*}
    B_k(T) \coloneqq \{\Psi^{(+)}_k(\X{T}) > 0\} \;\cap\; \{\tau_k(T) \ge T/4\}.
\end{align*}

We define the following auxiliary random process
\begin{align*}
    \psi(t) \coloneqq
    \begin{cases}
        \Psi^{(+)}_k(\X{0}) & \text{if $t=0$},\\[2pt]
        \Psi^{(+)}_k(\X{t})\cdot\left(1- \dfrac{\Phi(G)^2}{5(K-k)n}\right)^{-\tau_k(t-1)} & \text{if $t>0$}.
    \end{cases}
\end{align*}
Note that $\psi(t)$ is measurable with respect to the filtration $\mathcal{F}(t)$.
First, we show that $(\psi(t))_{t \geq 0}$ is a supermartingale, i.e., for all $t\ge 0$,
\[
\E{\psi(t+1) \mid \mathcal{F}(t)} \le \psi(t).
\]
This follows by a simple case distinction.
Fix $t\ge 0$ and condition on $\mathcal{F}(t)$.
For easier notation, define
\[
\Et{\cdot} \coloneqq \E{\cdot \mid \mathcal{F}(t) = (\vec{x}(t), \ldots, \vec{x}(0))}.
\]

Now distinguish between the following cases:
\begin{enumerate}[itemsep=2pt,topsep=2pt,parsep=0pt,partopsep=0pt,leftmargin=14pt]
\item If $k$ is good in step $t$, then
\begin{align*}
    \Et{\psi(t+1)}
    &= \Et{\Psi_k^{(+)}(\X{t+1})\cdot \left(1-\frac{\Phi(G)^2}{5(K-k)n}\right)^{-\tau_k(t)}}.
\end{align*}
The factor $\left(1-\frac{\Phi(G)^2}{5(K-k)n}\right)^{-\tau_k(t)}$ is completely determined by $\mathcal{F}(t)$ and can be pulled out of the expectation, so
\begin{align*}
    \Et{\psi(t+1)}
    &= \Et{\Psi_k^{(+)}(\X{t+1})}\cdot \left(1-\frac{\Phi(G)^2}{5(K-k)n}\right)^{-\tau_k(t)}.
\end{align*}
As $k$ is good in step $t$, by \autoref{lemma:potentialdrop},
\[
\Et{\Psi_k^{(+)}(\X{t+1})}
\le
\Psi_k^{(+)}(\x{t})\cdot \left(1-\frac{\Phi(G)^2}{5(K-k)n}\right).
\]
Hence,
\begin{align*}
    \Et{\psi(t+1)}
    &\le
    \Psi_k^{(+)}(\x{t})\cdot
    \left(1-\frac{\Phi(G)^2}{5(K-k)n}\right)
    \cdot
    \left(1-\frac{\Phi(G)^2}{5(K-k)n}\right)^{-\tau_k(t)}\\
    &=
    \Psi_k^{(+)}(\x{t})
    \cdot
    \left(1-\frac{\Phi(G)^2}{5(K-k)n}\right)^{-\tau_k(t)+1}.
\end{align*}
Since $k$ is good in step $t$, we have $\tau_k(t)=\tau_k(t-1)+1$, and thus $-\tau_k(t) + 1=-\tau_k(t-1)$.
Therefore,
\begin{align*}
    \Et{\psi(t+1)}
    &\le
    \Psi_k^{(+)}(\x{t})
    \cdot
    \left(1-\frac{\Phi(G)^2}{5(K-k)n}\right)^{-\tau_k(t-1)}
    =
    \psi(t).
\end{align*}

\item If $k$ is not good in step $t$, then $\tau_k(t)=\tau_k(t-1)$ and
\begin{align*}
    \Et{\psi(t+1)}
    &=
    \Et{\Psi_k^{(+)}(\X{t+1})}
    \cdot
    \left(1-\frac{\Phi(G)^2}{5(K-k)n}\right)^{-\tau_k(t)}\\
    &=
    \Et{\Psi_k^{(+)}(\X{t+1})}
    \cdot
    \left(1-\frac{\Phi(G)^2}{5(K-k)n}\right)^{-\tau_k(t-1)}\\
    &\le
    \Psi_k^{(+)}(\x{t})
    \cdot
    \left(1-\frac{\Phi(G)^2}{5(K-k)n}\right)^{-\tau_k(t-1)}
    =
    \psi(t),
\end{align*}
where the last inequality follows from the first statement of \autoref{lemma:potentialdrop}.

\end{enumerate}

This shows that $(\psi(t))_{t\ge 0}$ is a supermartingale.

Now define the stopping time
\[
T'
\coloneqq
\min\left\{
T,\;
\inf\{t\ge 0:\tau_k(t)\ge T/4+1\}
\right\}.
\]

Since $T' \le T$, Doob's optional stopping theorem (cf.\ \autoref{doob}) yields
\[
\psi(0)
\ge
\E{\psi(T')}.
\]
On the event $B_k(T)$, we have $\tau_k(T')\ge T/4+1$, and thus
\begin{align*}
\psi(T')
&=
\Psi_k^{(+)}(\X{T'})
\cdot
\left(1-\frac{\Phi(G)^2}{5(K-k)n}\right)^{-\tau_k(T'-1)}
\ge
\left(1-\frac{\Phi(G)^2}{5(K-k)n}\right)^{-T/4}.
\end{align*}

Therefore,
\begin{align*}
\E{\psi(T')}
\ge
\pr{B_k(T)}
\cdot
\left(1-\frac{\Phi(G)^2}{5(K-k)n}\right)^{-T/4}.
\end{align*}

Since $\psi(0)=\Psi_k^{(+)}(\x{0})\le K^2 n$, we obtain
\begin{align*}
K^2 n
\ge
\pr{B_k(T)}
\cdot
\left(1-\frac{\Phi(G)^2}{5(K-k)n}\right)^{-T/4}.
\end{align*}

Solving for $\pr{B_k(T)}$ and using $T = 20 c \tfrac{(K-k)n\log(Kn)}{\Phi(G)^2}$ gives
\begin{align*}
\pr{B_k(T)}
&\le
K^2 n
\left(1-\frac{\Phi(G)^2}{5(K-k)n}\right)^{T/4}\\
&\le
K^2 n
\left(1-\frac{\Phi(G)^2}{5(K-k)n}\right)^{5 c \tfrac{Kn\log(Kn)}{\Phi(G)^2}}
\le
\frac{1}{(Kn)^c}.
\qedhere
\end{align*}

\subsection{Proof of \autoref{lem:dropbyone}}
\label{sec:ProofVoting}
Our proof follows the arguments of Cooper and Rivera for linear pull voting almost verbatim and uses essentially the same techniques. 
We begin the proof with some useful definitions.
Define $\eta(t)= \min\{\vol{S_0(t)}, \vol{S_K(t)}\}$. Further, for simplity denote $S(t)$ to the set archieving $\vol{S(t)} = \eta(t)$  and $\ovl S(t) = V \setminus S(t)$. 
Note that once $\eta(t) = 0$, one opinion must have vanished from the system.
We study the process $\sqrt{\eta(t)}$, in particular
\begin{align*}
 \E{\sqrt{\eta(t)} \mid S(0) = S}.
\end{align*}
Note that both $(S_0(t))_{t \geq 0}$ and $(S_K(t))_{t \geq 0}$ are martingales and $\sqrt{\cdot}$ is concave. 
Therefore, by Jensen's inequality, $(\eta(t))_{t \geq 0}$ is a supermartingale.
Further, let $Z(t)$ be the change in the measure of $S(t)$ in step $t$, i.e.
\begin{align*}
    Z(t) \coloneqq \vol{S(t+1)} - \vol{S(t)}.
\end{align*}
Note that
\begin{align*}
    \eta(t+1) = \min\{\vol{S(t)} + Z(t), \vol{S'(t)} - Z(t)\}.
\end{align*}
Further, define $\rho_t = \rho(S(t)) = 2 \mathds{1}_{\vol{S(t)} \leq \vol{S'(t)}} - 1$, so $\rho_t \in \{-1,1\}$.
Then
\begin{align*}
    \eta(t+1) \le \eta(t) + \rho_t Z(t).
\end{align*}
Finally, define
\[
\Upsilon(S) \coloneqq \E{Z(t)^2 \,\mathds{1}_{\rho_t Z(t) < 0} \mid S(t) = S }.
\]
To bound $\Upsilon(S)$ let $X_{ij}$ be the indicator that $i$ adopts $j$'s opinion. 
Note that $\rho_t Z(t) < 0$ exactly when a vertex from the current minority switches to the majority (i.e.\ the minority decreases).
If $i\in S(t)$ switches, the measure drops by $d_i$, hence
\begin{align*}
    \Upsilon(S)
    &= \sum_{(i,j)\in E(S,\overline S)} \pr{X_{ij}}\left(d_i\right)^2
     = \sum_{(i,j)\in E(S,\overline S)} \frac{1}{n d_i}\left(d_i\right)^2 \\
    &= \sum_{(i,j)\in E(S,\overline S)} \frac{d_i}{ n }
     \;\ge\; \frac{d_{\min}}{ n }\,|E(S,\overline S)|.
\end{align*}
Using $|E(S,\overline S)| \ge \Phi(G)\,\vol{S}$ we obtain
\begin{align*}
    \Upsilon(S)
    &\ge \frac{d_{\min}}{ n }\,\Phi(G)\,\vol{S}
\end{align*}
\begin{claim}
\label{claim:taylorclaim}
Let $\Upsilon \coloneqq \frac{1}{8n}\cdot {d_{\min}}\cdot \Phi(G)$.
Then, for all sets $S\subseteq V$,
\begin{align*}
    \E{\sqrt{\eta(t+1)} \mid S_0 = S}
    \le \sqrt{\vol{S}} - \Upsilon\cdot \frac{1}{4\sqrt{\vol{S}}}.
\end{align*}
\end{claim}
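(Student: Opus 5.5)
We bound the one-step change of $\sqrt{\eta}$ with a second-order Taylor estimate for the concave square root, following Cooper and Rivera. Write $s=\vol{S}=\eta(t)$, where $S=S(t)$ is the current minority extreme set, and set $Z=Z(t)$, the change of $\vol{S}$. By the inequality $\eta(t+1)\le \eta(t)+\rho_t Z$ established above and monotonicity of $\sqrt{\cdot}$, it suffices to bound $\E{\sqrt{s+\rho_t Z}}$. The basic tool is the elementary inequality, valid for $y\ge -s$,
\[
\sqrt{s+y}\;\le\;\sqrt{s}+\frac{y}{2\sqrt{s}}-\frac{y^2}{8 s^{3/2}}\,\mathds{1}_{\{y<0\}},
\]
together with the plain tangent bound $\sqrt{s+y}\le \sqrt{s}+\frac{y}{2\sqrt s}$ for $y\ge 0$. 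The negative case follows from concavity: on $[s+y,s]$ the second derivative $-\tfrac14 u^{-3/2}$ is at most $-\tfrac14 s^{-3/2}$. We apply this with $y=\rho_t Z$, noting $|Z|\le d_{\max}$ and $s+\rho_t Z\ge 0$.

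Taking expectations gives two contributions. For the linear term, we use that $\vol{S_0(t)}$ and $\vol{S_K(t)}$ are martingales. Under uniform node and uniform neighbour selection, an edge $(i,j)$ moves $i$ with probability $1/(nd_i)$ and changes the measure by $d_i$, so the expected flows across each edge cancel. Hence $\E{Z\mid S(t)=S}=0$ and the linear term vanishes. For the quadratic term, we use only the event $\rho_t Z<0$, on which a minority vertex leaves $S$. This yields
\[
-\frac{1}{8 s^{3/2}}\,\E{Z^2\mathds{1}_{\{\rho_t Z<0\}}}=-\frac{\Upsilon(S)}{8s^{3/2}}.
\]
Inserting the bound $\Upsilon(S)\ge \frac{d_{\min}}{n}\Phi(G)\,s$ derived just before the claim, this is at most $-\frac{d_{\min}\Phi(G)}{8n\sqrt s}$. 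The claim as stated asks only for the weaker decrease $\Upsilon\cdot\frac{1}{4\sqrt{s}}$ with $\Upsilon=\frac{d_{\min}\Phi(G)}{8n}$, so the constants leave room to spare.

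The main obstacle is a subtlety in the claimed Taylor estimate: the departure from the minority set $S$ must actually correspond to $y<0$, and the minority/majority roles might swap within one step. This is handled by the $\min$ inequality $\eta(t+1)\le \eta(t)+\rho_tZ$, which lets us always track the fixed set $S$. A second point is that a vertex of the extreme set $S_0$ (say) moves to value $1$, not to the other extreme. So $Z$ counts leaving $S$ rather than joining $\ovl S$, and the martingale property must be verified for the extreme level set directly. For that level set it holds because $x_{\min}=0$: vertices enter $S_0$ only from value $1$ via neighbours at $0$, and leave it via neighbours at value $\ge1$. This is exactly the two-value pull-voting structure between $S_0$ and its complement, and there $\E{Z}\le 0$ suffices. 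Finally, the degenerate case where $s$ is small relative to $d_{\max}$, so that $y$ may equal $-s$ and give $\sqrt{s+y}=0$, is covered because the inequality holds on the whole interval $[-s,0]$.
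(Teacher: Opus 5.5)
Your one-step estimate is correct and coincides with the first half of the paper's proof. Conditioned on the current minority extreme set $S(t)=S$ with $s=\vol{S}>0$, you obtain $\E{\sqrt{\eta(t+1)}\mid S(t)=S}\le\sqrt{s}-\Upsilon/\sqrt{s}$. Your Taylor bound $\sqrt{s+y}\le\sqrt{s}+\frac{y}{2\sqrt{s}}-\frac{y^2}{8s^{3/2}}\mathds{1}_{\{y<0\}}$ is the right one; the paper's displayed $\sqrt{1+x}\le 1+x-x^2/8$ on $[-1,0]$ is false, e.g.\ at $x=-\tfrac12$. There is one slip. $\vol{S_0(t)}$ is not a martingale in DIV, because vertices enter $S_0$ only from opinion $1$ but leave it towards any opinion $\ge 1$. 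In fact $\E{Z}=-\frac{1}{n}\,|E(S_0(t),\{i: x_i(t)\ge 2\})|$, so ``hence $\E{Z}=0$'' is wrong. This is harmless, since you only use $\E{Z}\le 0$.

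The real issue is that you read the conditioning $S_0=S$ as conditioning on the \emph{current} state, and therefore treat the factor $\frac14$ as slack. In the paper, $S_0$ is the initial set $S(0)$ and $t$ is arbitrary. The claim is meant, and used in the induction that follows it, as a per-step decrement of the averaged quantity: $\E{\sqrt{\eta(t+1)}\mid S(0)=S}\le\E{\sqrt{\eta(t)}\mid S(0)=S}-\frac{\Upsilon}{4\sqrt{\vol{S}}}$ for $t<\tau_s$.

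Deriving this from your conditional bound needs the second half of the paper's proof, which your proposal lacks. Averaging over $S(t)$ leaves the term $-\Upsilon\,\E{\mathds{1}_{\{\eta(t)>0\}}\eta(t)^{-1/2}}$. Since $\eta(t)$ may by now exceed $s$, there is no pointwise bound on it. Instead, Jensen (or Cauchy--Schwarz) gives $\E{\mathds{1}_{\{\eta(t)>0\}}\eta(t)^{-1/2}}\ge\pr{\eta(t)>0}^2/\E{\sqrt{\eta(t)}}$. Then $\E{\sqrt{\eta(t)}}\le\sqrt{\vol{S}}$ by the supermartingale property, and $\pr{\eta(t)>0}\ge\frac12$ for $t<\tau_s$. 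That is where the $\frac14$ comes from.

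Taken literally, with $\sqrt{\vol{S}}$ rather than $\E{\sqrt{\eta(t)}}$ on the right, the displayed inequality does follow from your bound at time $0$ together with the supermartingale property of $\sqrt{\eta}$. So nothing you wrote is false. But that literal version is too weak to drive the induction $\E{\sqrt{\eta(t)}\mid\eta(0)=s}\le\sqrt{s}-t\cdot\Upsilon/(4\sqrt{s})$, which is what the claim exists to support.
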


\begin{proof}
Since $Z(t)$ depends only on $S(t)$, we first show that for fixed $S(t)=S$,
\begin{align*}
    \E{\sqrt{\eta(t+1)} \mid S(t) = S }
    \le \sqrt{\vol{S}} - \Upsilon\cdot \frac{1}{\sqrt{\vol{S}}}.
\end{align*}
Let $\mathds{1}^+ \coloneqq \mathds{1}_{\rho_t Z(t) \ge 0}$ and $\mathds{1}^- \coloneqq \mathds{1}_{\rho_t Z(t) < 0}$.
Taking expectations,
\begin{align}
    \E{\sqrt{\eta(t+1)}\mid S(t) = S}
    &= \E{\sqrt{\eta(S)+\rho_t Z(t)}\mid S(t)=S} \nonumber\\
    &= \sqrt{\eta(S)}\,\E{\sqrt{1+\tfrac{\rho_t Z(t)}{\eta(S)}}\,\mathds{1}^+ \mid S(t)=S}
     + \sqrt{\eta(S)}\,\E{\sqrt{1+\tfrac{\rho_t Z(t)}{\eta(S)}}\,\mathds{1}^- \mid S(t)=S}.
    \label{eq:split}
\end{align}
Let $x=\rho_t Z(t)/\eta(S)$. Then $x\ge -1$.
For $x\ge -1$ we use the bounds
\begin{align}
    \sqrt{1+x} &\le 1+x, \label{eq:taylor-upper}\\
    \sqrt{1+x} &\le 1+x-\frac{x^2}{8} \qquad \text{for }x\in[-1,0]. \label{eq:taylor-lower-region}
\end{align}
Applying \eqref{eq:taylor-upper} to the $\mathds{1}^+$ term and \eqref{eq:taylor-lower-region} to the $\mathds{1}^-$ term in \eqref{eq:split}, and using that $\vol{S(t)}$ is a supermartingale so $\E{Z(t)\mid S(t)=S} \leq 0$, we obtain
\begin{align*}
    \E{\sqrt{\eta(t+1)}\mid S(t)=S}
    &\le \sqrt{\eta(S)} - \sqrt{\eta(S)}\,
      \E{\frac{(\rho_t Z(t))^2}{8\eta(S)^2}\,\mathds{1}^-\mid S(t)=S} \\
    &\le \sqrt{\eta(S)} - \E{\frac{Z(t)^2}{8\eta(S)^{3/2}}\,\mathds{1}_{\{\rho_t Z(t)<0\}}\mid S(t)=S} \\
    &= \sqrt{\eta(S)} - \frac{\Upsilon(S)}{8\,\eta(S)^{3/2}}
    \;\le\;  \sqrt{\eta(S)} - \frac{d_{\min}\,\Phi(G)\,\vol{S}}{8n\,\eta(S)^{3/2}} \\
    &\leq \sqrt{\eta(S)} - \Upsilon\cdot \frac{1}{\sqrt{\eta(S)}}.
\end{align*}
In the following, we take total expectation over $S(t)$ and apply Jensen to obtain the stated claim:
\begin{align*}
    \E{\sqrt{\eta(t+1)} \mid S(0) = S} &= \sum_{\substack{S' \subset V \\ \vol{S} > 0}} \left(\pr{S(t) = S' \mid S(0) = S}(\sqrt{\eta(S)} - \Upsilon\cdot \frac{1}{\sqrt{\eta(S)}})\right)\\
    &= \E{\sqrt{\eta(S)} \mid S(0) = S} - \pr{\eta(t) > 0 \mid S(0) = S}  \cdot \Upsilon \E{\frac{1}{\sqrt{\eta(S(t))}} \mid S(t) > 0, S(0) = S} \\
    &\text{By Jensen's Inequality:}\\
    &\leq \E{\sqrt{\eta(S(t))} \mid S(0) = S} - \pr{\eta(t) > 0 \mid S(0) = S}  \cdot \Upsilon \cdot \frac{1}{\E{\sqrt{\eta(S(t))}\mid S(t) > 0, S(0) = S}} \\
    &\text{By the law of total probability:}\\
    &\leq \E{\sqrt{\eta(S(t))} \mid S(0) = S} - \pr{\eta(t) > 0 \mid S(0) = S}^2  \cdot \Upsilon \cdot  \frac{1}{\E{\sqrt{\eta(S(t))}\mid S(0) = S}} \\
    &\text{By using that $\eta(t)$ is a supermartingale:}\\
      &\leq \E{\sqrt{\eta(S(t))} \mid S(0) = S} - \pr{\eta(t) > 0 \mid S(0) = S}^2  \cdot \Upsilon \cdot  \frac{1}{\sqrt{\vol{S}}}.
\end{align*}
Finally, note that
$\Pr(\eta(t)>0) \geq \frac{1}{2}$ and the claim follows.
\end{proof}

By induction, for all $t<\tau_s$ it follows that
\[
\E{\sqrt{\eta(t)} \mid \eta(0)=s}
\;\le\;
\sqrt{s}
\;-\;
t \cdot c \cdot \frac{\dmin \cdot \Phi(G)}{\sqrt{s}\cdot n}.
\]
Moreover, since $\eta(t)$ is integer-valued, $\1\{\eta(t)>0\} \le \sqrt{\eta(t)}$ and hence
\[
\pr{\eta(t)>0 \mid \eta(0)=s}
\;\le\;
\E{\sqrt{\eta(t)} \mid \eta(0)=s}.
\]
Therefore, if
\[
t \;\ge\; \frac{\sqrt{s}-\tfrac12}{c}\cdot \frac{\sqrt{s}\cdot n}{\dmin \cdot \Phi(G)} \in O\left(\frac{s n}{\Phi(G) \cdot \dmin }\right),
\]
then $\E{\sqrt{\eta(t)} \mid \eta(0)=s} \le \tfrac12,$ which implies
$\pr{\eta(t)=0 \mid \eta(0)=s} \ge \tfrac12$ and thus $t \ge \tau_s$.

\subsection{Proof of \autoref{thm:lowerbound}}
\label{sec:lowerboundproof}
Given a nonempty set of vertices $S \subseteq V$ and a vector of opinions $\vx = (x_i)_{i \in [n]}$,
    we write $\mu_S(\vx) \coloneqq \frac{1}{\abs{S}} \sum_{i \in S} x_i$ for the mean opinion of $S$'s agents in $\vx$.
Furthermore,
    we write $\Delta_S(\vx) \coloneqq \mu_S(\vx) - \mu_{\overline{S}}(\vx)$ for the difference between the mean opinions within $S$ and its complement in $\vx$.

\begin{lemma}\label{lem:delta_mean_bound}
    Let $S \subseteq V$ be a nonempty subset of $V$ having size $\abs{S} \leq n / 2$.
    Then for all $t \geq 1$ and $\vx \in \mathds{N}^V$, we have
    \[\abs*{\E{\Delta_S(\vX{t}) - \Delta_S(\vX{t-1}) \mid \vX{t-1} = \vx}} \leq \frac{\abs*{\mathbb{O}(S)}}{\abs{E} \abs{S}}.\]
\end{lemma}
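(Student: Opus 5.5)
The plan is to decompose the one-step change of $\Delta_S$ by the directed edge used in the step, and to show that edges inside $S$ or inside $\overline S$ contribute nothing in expectation. Only the cut edges in $\mathbb{O}(S)$ then remain, and each of them contributes at most $1/(\abs{E}\abs{S})$.

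\textbf{Selection model.} I take a step to select an ordered pair $(u,v)$ with $\{u,v\}\in E$ with probability $\frac{1}{2\abs{E}}$, and then $u$ moves one unit toward $x_v$. This is the step rule under which the bound holds for an arbitrary graph. On a $d$-regular graph it coincides exactly with \ProtAV, since there $\frac{1}{n}\cdot\frac{1}{d}=\frac{1}{2\abs{E}}$, and regular graphs are the setting of \autoref{thm:lowerbound}, where the lemma is used.

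For a non-regular graph under the literal node-then-neighbour rule, internal edges would carry unequal weights $\frac{1}{nd_u}\neq\frac{1}{nd_v}$ and would not cancel. In that case I would either restrict to this symmetric edge rule or absorb the discrepancy into constants. I expect this modelling point to be the only real obstacle; the rest is bookkeeping.

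\textbf{Step 1 (per-edge contribution).} Fix $\vX{t-1}=\vx$ and write $\sigma(u,v)\coloneqq\sign(x_v-x_u)\in\{-1,0,1\}$. If $(u,v)$ is chosen, only $x_u$ changes, and it changes by $\sigma(u,v)$. If $u\in S$, then $\mu_S$ changes by $\sigma(u,v)/\abs{S}$ and $\mu_{\overline S}$ is unchanged. If $u\in\overline S$, then $\Delta_S$ changes by $-\sigma(u,v)/\abs{\overline S}$. Hence
\begin{align*}
\E{\Delta_S(\vX{t})-\Delta_S(\vX{t-1})\mid \vX{t-1}=\vx}
=\frac{1}{2\abs{E}}\Bigl(\sum_{u\in S,\,(u,v)}\frac{\sigma(u,v)}{\abs{S}}-\sum_{u\in \overline S,\,(u,v)}\frac{\sigma(u,v)}{\abs{\overline S}}\Bigr).
\end{align*}

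\textbf{Step 2 (cancellation of internal edges).} I split each sum according to whether $v$ lies on the same side as $u$. For an undirected edge $\{u,v\}$ with both endpoints in $S$, the two orientations contribute $\sigma(u,v)/\abs{S}$ and $\sigma(v,u)/\abs{S}$. These sum to zero because $\sigma(v,u)=-\sigma(u,v)$. The same cancellation holds for edges inside $\overline S$. It uses that both orientations carry the same weight $\frac{1}{2\abs{E}}$, which is exactly where the selection model matters.

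\textbf{Step 3 (bounding the cut edges).} Each undirected edge of $\mathbb{O}(S)$ gives two ordered pairs. One has its tail in $S$ and contributes at most $\frac{1}{2\abs{E}\abs{S}}$ in absolute value. The other has its tail in $\overline S$ and contributes at most $\frac{1}{2\abs{E}\abs{\overline S}}\le\frac{1}{2\abs{E}\abs{S}}$, since $\abs{S}\le n/2$ gives $\abs{\overline S}\ge\abs{S}$. By the triangle inequality, the absolute value of the expected change is at most
\[
\abs{\mathbb{O}(S)}\cdot 2\cdot\frac{1}{2\abs{E}\abs{S}}=\frac{\abs{\mathbb{O}(S)}}{\abs{E}\abs{S}},
\]
which is the claim.
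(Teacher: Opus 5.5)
Your proof is correct and is essentially the paper's argument. The paper also weights each orientation of an edge by $\frac{1}{2\abs{E}}$ (phrased as a fair coin deciding which endpoint updates), lets the internal edges of $S$ and $\overline S$ cancel, and bounds each cut edge by $\frac{1}{2\abs{E}}\bigl(\frac{1}{\abs{S}}+\frac{1}{n-\abs{S}}\bigr)\le\frac{1}{\abs{E}\abs{S}}$ using $\abs{S}\le n/2$.

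Your modelling caveat is well founded, since that weighting is exact only for the regular graphs of \autoref{thm:lowerbound}. Your fallback of absorbing the discrepancy into constants would not work for non-regular graphs, however. There the internal edges contribute terms proportional to $\frac{1}{d_u}-\frac{1}{d_v}$ that are not controlled by $\abs{\mathbb{O}(S)}$ at all; for example, a star inside $S$ attached by a single cut edge violates the bound by an unbounded factor. So the lemma genuinely needs regularity or the symmetric edge rule.
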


\begin{proof}
First, let us consider the expected change in $\mu_S$:
    When two agents in $S$ interact, whether the sum of opinions in $S$ increases by one or decreases by one is decided by a fair coin flip,
        so that in this case, there is, in expectation, no change.
    The same trivially holds whenever two agents not in $S$ interact.
    In the remaining case, when an agent in $S$ interacts with an agent not in $S$, the sum of opinions in $S$ changes only if the opinions are different and the coin flip decides that the agent in $S$ should change its opinion.
Hence,
\[\E{\mu_S(\vX{t}) - \mu_S(\vX{t-1}) \mid \vX{t-1} = \vx}
    = \sum_{(i,j) \in \mathbb{O}(S)} \frac{1}{2 \abs{E}} \cdot \frac{\sign(x_j - x_i)}{\abs{S}},\]
    where $\sign(z)$ is 1 if $z$ is positive, $-1$ if $z$ is negative, and $0$ otherwise.
The same holds when substituting $\overline{S}$ for $S$.

Using this, by definition of $\Delta_S(\vx)$, linearity of expectation, and the fact that $\sign(-z) = -\sign(z)$, we get
\begin{align*}
\E{\Delta_S(\vX{t}) - \Delta_S(\vX{t-1}) \mid \vX{t-1} = \vx}
   &= \E{\mu_S(\vX{t}) - \mu_S(\vX{t-1}) \mid \vX{t-1} = \vx}
\\ &- \E{\mu_{\overline{S}}(\vX{t}) - \mu_{\overline{S}}(\vX{t-1}) \mid \vX{t-1} = \vx}
\\ &= \frac{1}{2\abs{E}} \cdot \sum_{(i,j) \in \mathbf{O}(S)} \cdot \left(\frac{\sign(x_j - x_i)}{\abs{S}} - \frac{\sign(x_i - x_j)}{\abs{\overline{S}}} \right)
\\ &= \frac{1}{2\abs{E}} \cdot \sum_{(i,j) \in \mathbf{O}(S)} \cdot \left(\frac{\sign(x_j - x_i)}{\abs{S}} + \frac{\sign(x_j - x_i)}{\abs{\overline{S}}} \right)
\\ &= \frac{1}{2\abs{E}} \cdot \sum_{(i,j) \in \mathbf{O}(S)} \cdot \left(\frac{1}{\abs{S}} + \frac{1}{n - \abs{S}}\right) \cdot \sign(x_j - x_i).
\end{align*}
Since $1/(n - \abs{S}) \leq 1/\abs{S}$ (since $\abs{S} \leq n/2$ and thus $\abs{S} \leq n - \abs{S}$),
    and $\abs{\sign(x_j - x_i)} \leq 1$, we have, by the triangle inequality,
\[\abs*{\E{\Delta_S(\vX{t}) - \Delta_S(\vX{t-1}) \mid \vX{t-1} = \vx}}
    \leq \frac{1}{2\abs{E}} \sum_{(i,j) \in \mathbb{O}(S)} \frac{2}{\abs{S}} \cdot 1 \leq \frac{\abs{\mathbb{O}(S)}}{\abs{E}\abs{S}},\]
    as claimed.
\end{proof}

\begin{lemma}
    Let $S \subseteq V$ be a nonempty subset of $V$ having size $\abs{S} \leq n / 2$,
        and let $\vX{0}$ such that $\Delta_S(\vX{0}) = k$.
    Then for all $p \in (0, 1)$, and $t = \min\left\{{\frac{k \abs{E} \abs{S}}{4 \abs*{\mathbb{O}(S)}}, \frac{k^2 \abs{S}^2}{8 \log(2/p)}}\right\}$, we have
        \[\pr{\Delta_S(\vX{t}) > \frac{k}{2}} \geq 1 - p.\]
\end{lemma}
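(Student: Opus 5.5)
We decompose $D_t \coloneqq \Delta_S(\vX{t})$ (Doob decomposition) into a predictable drift part and a martingale part, and control each separately. Write $D_t - D_0 = A_t + M_t$, where
\[
A_t \coloneqq \sum_{s=1}^{t} \E{D_s - D_{s-1} \mid \mathcal{F}(s-1)}
\]
is the drift and $M_t \coloneqq (D_t - D_0) - A_t$ is a martingale with respect to the natural filtration. The event $\{D_t \le k/2\}$ requires $A_t + M_t \le -k/2$, so it suffices to show $A_t \ge -k/4$ deterministically and $\pr{M_t \le -k/4} \le p$.

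\textbf{Drift.} By \autoref{lem:delta_mean_bound}, each conditional increment has absolute value at most $\abs{\mathbb{O}(S)}/(\abs{E}\abs{S})$, uniformly over configurations. Summing over $t$ steps gives $\abs{A_t} \le t\,\abs{\mathbb{O}(S)}/(\abs{E}\abs{S})$. Since $t \le k\abs{E}\abs{S}/(4\abs{\mathbb{O}(S)})$, this is at most $k/4$ deterministically.

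\textbf{Fluctuation.} In one step exactly one agent changes its opinion, and only by $\pm 1$. Hence $\mu_S$ changes by at most $1/\abs{S}$, or $\mu_{\overline S}$ changes by at most $1/(n-\abs{S}) \le 1/\abs{S}$, and not both. So $\abs{D_s - D_{s-1}} \le 1/\abs{S}$. The conditional mean of this increment is also bounded by $1/\abs{S}$ (it is an average of such changes), so the martingale increments satisfy $\abs{M_s - M_{s-1}} \le 2/\abs{S}$.

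With $c = 2/\abs{S}$, Azuma--Hoeffding gives
\[
\pr{\abs{M_t} \ge k/4} \le 2\exp\!\left(-\frac{(k/4)^2 \abs{S}^2}{8t}\right).
\]
Forcing this below $p$ requires $t \le k^2\abs{S}^2/(128\log(2/p))$, which is weaker than the stated bound $k^2\abs{S}^2/(8\log(2/p))$.

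\textbf{Main obstacle: matching the constant 8.} The difficulty is entirely in the constants of the fluctuation bound; the drift step is already sharp enough. To recover the stated constant, we would try the following refinements in order:
\begin{enumerate}
\item Replace the crude increment bound $2/\abs{S}$ by $1/\abs{S}$, using that the increment $D_s - D_{s-1}$ and its conditional mean are both supported in $[-1/\abs{S},\, 1/\abs{S}]$.
\item Use the one-sided Azuma inequality, which drops the factor $2$ in front of the exponential.
\item Use Hoeffding's lemma for increments in an interval of length $2/\abs{S}$.
\item Split the threshold asymmetrically between drift and noise (say $k/4$ to drift and the remaining $k/4$ to noise), and exploit any slack in the drift bound when the first term of the minimum is the binding one.
\end{enumerate}
If these still leave a constant gap, we would adjust the constants in the choice of $t$; the asymptotic lower bound in \autoref{thm:lowerbound} is unaffected.

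\textbf{Conclusion.} On the complement of $\{\abs{M_t} \ge k/4\}$ we have
\[
D_t \ge k - \abs{A_t} - \abs{M_t} > k - k/4 - k/4 = k/2,
\]
which occurs with probability at least $1-p$.
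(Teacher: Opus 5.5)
Your argument is the paper's argument. The drift is bounded via \autoref{lem:delta_mean_bound} together with the first term of the minimum. The fluctuation is bounded via Azuma--Hoeffding, using that one agent moves by one per step, and the budget $k/2$ is split as $k/4+k/4$. Your explicit decomposition $D_t-D_0=A_t+M_t$ is, if anything, cleaner than the paper's phrasing. The paper calls $\Delta_S(\vX{t})-t\abs{\mathbb{O}(S)}/(\abs{E}\abs{S})$ a submartingale, but it is a supermartingale; the lower tail needs the process with $+t\abs{\mathbb{O}(S)}/(\abs{E}\abs{S})$, or your split.

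The constant obstacle you identify is genuine, and the paper's proof does not overcome it either. The paper uses the range form of Azuma--Hoeffding: centred increments lie in an interval of length $2/\abs{S}$, which is your refinement (3). This gives $\pr{\abs{M_t}\ge\varepsilon}\le 2\exp(-\varepsilon^2\abs{S}^2/(2t))$, i.e.\ noise $\varepsilon=\sqrt{2t\log(2/p)}/\abs{S}$.
\begin{itemize}
\item Demanding $\varepsilon\le k/4$ yields $t\le k^2\abs{S}^2/(32\log(2/p))$.
\item The paper instead writes $\bigl(k\abs{S}/(2\sqrt{2\log(2/p)})\bigr)^2=k^2\abs{S}^2/(8\log(2/p))$, which is the condition $\varepsilon\le k/2$.
\item With $8$, drift plus noise can reach $3k/4$, so the argument as written only yields $\Delta_S(\vX{t})>k/4$.
\end{itemize}

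None of your refinements closes this factor of $4$.
\begin{itemize}
\item Refinement (1) is not literally valid. The centred increment $(D_s-D_{s-1})-\E{D_s-D_{s-1}\mid\mathcal{F}(s-1)}$ can exceed $1/\abs{S}$ in absolute value. Only its range is bounded by $2/\abs{S}$, and that is exactly what (3) exploits.
\item Refinement (2) only replaces $\log(2/p)$ by $\log(1/p)$.
\item Refinement (4) gains nothing when the first term of the minimum is binding, since then the drift bound is exactly $k/4$. The remaining budget $k/4$ gives a tail of $\exp(-k^2\abs{S}^2/(32t))$. At $t=k^2\abs{S}^2/(8\log(2/p))$ this equals $(p/2)^{1/4}$, which is much larger than $p$ for small $p$.
\end{itemize}

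So your fallback is the right resolution. With (3) you prove the lemma with $32$ in place of $8$ ($128$ with your crude increment bound). This is what the paper's argument actually establishes, and the constant is irrelevant for the asymptotic statement of \autoref{thm:lowerbound}.
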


\begin{proof}
By \autoref{lem:delta_mean_bound},
    $Z_t \coloneqq \Delta_S(\vX{t}) - t \cdot \frac{\abs{\mathbb{O}(S)}}{\abs{E} \cdot \abs{S}}$ is a submartingale.
Now $\abs{\Delta_S(\vX{t}) - \Delta_S(\vX{t-1})} \leq \frac{1}{\abs{S}}$ since at most one agent can change its opinion in one step (so only one of $\mu_S$ and $\mu_{\overline{S}}$ can change at a time), and as 
$\abs{S} \leq n/2$.
Combining this with \autoref{lem:delta_mean_bound} and $\frac{\abs{\mathbb{O}(S)}}{\abs{E}\cdot \abs{S}} \leq \frac{1}{\abs{S}}$,
    we can use a two-sided Azuma's inequality to see that for $\varepsilon = \sqrt{2t \log(2/p)} / \abs{S}$, we have
    \begin{align}
        \pr{\abs*{\Delta_S(\vX{t}) - \Delta_S(\vX{0})} \geq t \cdot \frac{\abs{\mathbb{O}(S)}}{\abs{E}\abs{S}} + \varepsilon} \leq 2 \exp\left(-\,\frac{2 \varepsilon^2}{t (2 / \abs{S})^2}\right)
        = 2 \exp\left(-\,\frac{\varepsilon^2 \cdot \abs{S}^2}{2 t}\right) = p.
    \end{align}
We assume that $\Delta_S(\vX{0}) = k$,
    so that we have $\Delta_S(\vX{t}) > k/2$ with probability at least $1 - p$ if
    \[t \cdot \frac{\abs*{\mathbb{O}(S)}}{\abs{E}\abs{S}} + \sqrt{t} \cdot \frac{\sqrt{2 \log(2/p)}}{\abs{S}} < k/2.\]
This is definitely the case when both terms are at most $k / 4$ individually,
    i.e., when \[t \leq \min\left\{\frac{k \abs{E} \abs{S}}{4 \abs*{\mathbb{O}(S)}}, \left(\frac{k \abs{S}}{2\sqrt{2\log(2/p)}}\right)^2\right\}
        = \min\left\{\frac{k \abs{E} \abs{S}}{4 \abs*{\mathbb{O}(S)}}, \frac{k^2 \abs{S}^2}{8 \log(2/p)}\right\},\]
    yielding the claim.
\end{proof}

\subsection{Proof of \autoref{thm:main}, bound~\eqref{boundonexpectation}}

Our proof of the convergence 
bound~\eqref{boundwhp} in \autoref{thm:main}
was split into two parts. 
The first part dealt with 
the discrepancy decreasing from the initial $K$ to $\alpha= (32 {\log n})/{\Phi}$, and 
the second part considered the time needed to
decrease the discrepancy from $\alpha$ to $0$. 
(We are simplifying notation here, using $\Phi$ and $\gamma$ for $\Phi(G)$ and $\gamma(G)$.) 
The analysis of the second part was based 
on the reduction to iterative application of the 2-Value Pull Voting. 
This reduction says that the expected time needed to eliminate one extreme opinion
is at most $\mathcal{T}^{2V}_{G}$, so all but one of the remaining $\alpha$ opinions are eliminated 
in $O\left(\alpha\mathcal{T}^{2V}_{G}\right)$ time, in expectation as well as with high probability. 
Now we refine the analysis of the second part
to obtain bound~\eqref{boundonexpectation} in \autoref{thm:main}.
For convenience, we repeat below this bound
as a separate theorem.

\begin{theorem}
\label{thm:main-expectedtime}
\label{thm:main:version2}
Consider the \ProtAV process on a graph $G$ with conductance $\Phi$ and the ratio of average to minimum 
degree $\gamma$. Assume 
$\vec{x}$
is an arbitrary configuration with discrepancy~$K$. 
Then
\begin{align}\label{bound:ExpectedTime1}
\E{T_G(\vec{x})}
\;=\;
O\!\left(
\frac{ K n \log(Kn)}{\Phi^2}
+ \frac{\gamma n^2}{\Phi^2}\right).
\end{align}
\end{theorem}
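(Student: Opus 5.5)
The first term already comes from \autoref{thm:technical}: the expected time to bring the discrepancy from $K$ down to $\alpha = 32\log n/\Phi$ is $O(Kn\log(Kn)/\Phi^2)$. It therefore suffices to show that, from any configuration with discrepancy at most $\alpha$, the expected remaining time is $O(\gamma n^2/\Phi^2)$. The naive bound $\alpha\,\mathcal{T}^{2V}_G = O(\log n\cdot \gamma n^2/\Phi^2)$ is off by a $\log n$ factor. I would remove this factor by refining the second phase via the $\eta$-based analysis of \autoref{lem:dropbyone}, which has a linear dependence on the volume of the minority extreme opinion. This should be cheaper than the worst-case $\mathcal{T}^{2V}_G$ once we can bound that volume.

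The plan is a multi-scale argument over discrepancies $j=\alpha,\alpha-1,\dots,1$. At discrepancy $j$, let $\eta_j$ be the smaller of the volumes of the extreme opinion classes $S_0$ and $S_j$. By \autoref{lem:dropbyone} (run in phases with a geometric number of repetitions), the expected time to eliminate an extreme opinion is $O(\E{\eta_j}\, n/(\Phi d_{\min}))$. Since $\eta_j\le 2m=\gamma n d_{\min}$, each phase individually costs at most $O(\gamma n^2/\Phi)$. To lose the $\log n$ factor we need the sum $\sum_j \E{\eta_j}$ to be $O(m/\Phi)$ rather than $O(m\log n/\Phi)$.

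The key structural input is the bad-index counting from \autoref{lemma:goodindexexists}. Fix a threshold $k$ that is bad for the side of smaller volume. Then $\vol{S_k} > (1+\tfrac34\Phi)\vol{S_{k+1}}$, so the volumes of the upper level sets grow geometrically as the threshold descends. At discrepancy $j$, either the extreme level set is small, say at most $2m(1+\tfrac34\Phi)^{-(j-1)}$ up to good-index exceptions, or some index near the top is good. When an index is good, \autoref{lemma:potentialdrop} and \autoref{lemma:goodindexdrops} (with $K-k=O(1)$) empty the extreme class within $O(n\log n/\Phi^2)$ steps, and over $\alpha$ levels this costs $O(n\log^2 n/\Phi^3)$. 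That is not obviously $O(\gamma n^2/\Phi^2)$, so I would instead compare it with the trivial bound and take the minimum per level.

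Summing the geometric bound gives
\[
\sum_j \frac{n}{\Phi d_{\min}}\cdot 2m(1+\tfrac34\Phi)^{-(j-1)} = O\!\left(\frac{\gamma n^2}{\Phi^2}\right),
\]
which is exactly the target. Concretely, I would split the levels into the top $O(1/\Phi)$ levels, each charged $O(\gamma n^2/\Phi)$ via \autoref{lem:dropbyone} with $\eta\le 2m$ (total $O(\gamma n^2/\Phi^2)$), and the remaining $O(\log n/\Phi)$ levels, where the extreme class is geometrically small.

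The main obstacle is that the level-set volumes are random and change over time, so the geometric-decay bound holds only in configurations where the relevant indices are bad. I would handle this with a potential-type argument: use $\Psi^{(+)}_k$ for $k$ near the top, whose supermartingale property (\autoref{lemma:potentialdrop}(1)) controls $\E{\vol{S_{k+1}}}$ through $\sum_i d_i f(x_i)$. If this fails, I would fall back on bounding the expected elimination time per level by $\min\{\gamma n^2/\Phi,\; O(n\log n/\Phi^2)\}$ and check that the resulting sum is still dominated by the two stated terms.
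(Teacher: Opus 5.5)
There is a genuine gap. Your dichotomy, ``either the extreme class is geometrically small or some index near the top is good,'' is a statement about a single configuration. But \autoref{lemma:goodindexdrops} needs one \emph{fixed} index to be good in a constant fraction of a whole window of steps, and nothing in your plan produces such an index. The only available tool is the pigeonhole of \autoref{lemma:goodindexexists}, which needs at most half of a range of indices to be bad in \emph{every} step. Via \eqref{eq:badindex}, the counting argument only guarantees fewer than $\log(1/\mu)/\log(1+\tfrac34\Phi)$ bad indices, where $\mu$ is the volume fraction of the extreme class. For a range of $O(1)$ indices near the top this requires $\mu\ge(1+\tfrac34\Phi)^{-O(1)}$. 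That threshold is a constant, and it exceeds the maximum possible $1/2$ once $\Phi$ is small. Below it you are back to 2-value voting from a non-small minority at cost $\Theta(\gamma n^2/\Phi)$ per level, and the $\log n$ factor returns. To exploit minorities as small as $e^{-\Theta(\Phi j)}$, the range must contain $\Theta(j)$ indices, so you must eliminate $\Theta(j)$ levels at once rather than level by level. Your fallbacks do not rescue this. The supermartingale property of $\Psi^{(+)}_k$ only gives $\E{\vol{S_{k+1}(t)}}\le\Psi^{(+)}_k(\vec{x}(0))$, with no decay in $j$. A per-level bound of $O(n\log n/\Phi^2)$ is false in general: at discrepancy $1$ the process is 2-value pull voting, which needs $\Theta(n^2)$ steps on regular expanders.

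The small-minority half has a separate gap. \autoref{lem:dropbyone} only gives success probability $1/2$ within $O(sn/(\Phi d_{\min}))$ steps. After a failed attempt the minority volume is uncontrolled (up to $m$), so restarting does not yield an expected bound $O(\E{\eta_j}\,n/(\Phi d_{\min}))$. On the complete graph the expected elimination time in fact carries an extra $\log(m/s)$ factor. You also need the bound to start from the moment the extreme class becomes small, not from the start of a level.

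The paper resolves both issues inside each $\tfrac34$-phase starting at discrepancy $K$. For every level $K-i$ it splits time at the first step with $\mu(t)\le\epsilon_{K-i}$, where $\epsilon_k=e^{-\Phi k/16}$. In steps with $\mu(t)>\epsilon_K$, at most $K/8$ indices of $[K/2,3K/4]$ are bad (\autoref{lemma:goodindexexists2}). Pigeonhole plus \autoref{lemma:goodindexdrops2} then bound the total such time in the phase by $O(T_K)$, and these terms are geometric over phases. After the drop, the reduction of~\cite{CRS23} together with the expected, minority-dependent bound $\mathcal{T}^{2V}_G(\epsilon)=O(\gamma n^2\sqrt{\epsilon}/\Phi)$ of \autoref{lem:CooperRivera} costs $O(\gamma n^2e^{-\Phi k/32}/\Phi)$ per level, which sums to $O(\gamma n^2/\Phi^2)$.

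Incidentally, the $O(n\log^2 n/\Phi^3)$ term that worried you is harmless. It is $O(Kn\log(Kn)/\Phi^2)$ whenever $K\ge\alpha$, and for $K<\alpha$ there are only $K$ levels. So you never needed to aim for $O(\gamma n^2/\Phi^2)$ from discrepancy $\alpha$ alone.
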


This theorem implies the bound of $O(\gamma n^2)$ 
on the expected consensus time of \ProtAV for
graphs with constant conductance and the initial discrepancy $K = O(n/\log n)$ . 
This bound of $O(\gamma n^2)$ is of the same asymptotic order as the best bound for 
the expected consensus time in the 2-value pull voting for graphs with constant conductance,
which is stated in \autoref{lem:CooperRivera} 
in the form as it was derived in~\cite{CooperR16}.
For regular graphs with constant conductance, bound~\eqref{bound:ExpectedTime1} becomes 
$O(n^2)$, matching the lower bound of $\Omega(n^2)$ 
on the expected consensus time of the 2-value pull voting
for this class of graphs.


Our analysis is based on the bound for 2-value pull-voting consensus time 
which takes into account the initial size of the minority vote: smaller the initial minority means faster
consensus time; translating in the incremental-voting context to: smaller size of the extreme opinion
means that it disappears faster.

For a subset of vertices $S\subseteq V(G)$, let $\mu(S) = \frac{\vol{S}}{2m}$ denote the volume of this set
as a fraction of the total graph volume.
For the 2-Value Pull Voting in graph $G$ with the initial minority support $S\subseteq V(G)$ 
(understood as $\mu(S) \le \mu(V(G)\setminus S)$),
let the random variable $T_{G}^{{2V}}(S)$ denote the time when the consensus is reached.
Further, extending the definition of $\mathcal{T}^{2V}_{G}$, 
define for $0<\epsilon\le 1/2$ 
the worst-case expected consensus time when the initial minority vote fraction is at most $\epsilon$:
\[\mathcal{T}^{2V}_{G}(\epsilon) 
= \max{}\left\{\E{T_{G}^{{2V}}(S)}:\: \text{$S\subseteq V$, $\mu(S)\le \epsilon$}\right\}.\]
The following bound on $\mathcal{T}^{2V}_{G}(\epsilon)$ can be extracted 
from the analysis of voting processes presented in~\cite{CooperR16}.

\begin{theorem}[\cite{CooperR16}]\label{lem:CooperRivera}
The worst-case expected time of completing the 2-Value Pull Voting process starting
with a minority support set $S$ such that $\mu(S) \le \epsilon$ has the following bound.
\[\mathcal{T}^{2V}_{G}(\epsilon) = O\left(\tfrac{\gamma n^2\sqrt{\epsilon}}{\Phi}\right).\]
%
\end{theorem}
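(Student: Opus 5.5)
The approach is a drift argument for the square root of the minority volume in 2-value pull voting, the same potential as in the proof of \autoref{lem:dropbyone}. I would combine it with a crude uniform lower bound on the drift and optional stopping, rather than the ``probability $\ge 1/2$'' argument used there.

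\emph{Setup.} Let $S(t)$ be the set holding the opinion of smaller volume at time $t$, and set $\eta(t)=\min\{\vol{S(t)},\vol{V\setminus S(t)}\}$. Consensus is reached exactly when $\eta(t)=0$, and initially $\eta(0)\le \epsilon\cdot 2m$.

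\emph{Step 1 (one-step drift, conditional on the current state).} I would reprove the first part of \autoref{claim:taylorclaim} in its pointwise form: for every state with $\eta(t)=\eta>0$,
\[
\E{\sqrt{\eta(t+1)}\mid \X{t}} \;\le\; \sqrt{\eta}-\frac{\Upsilon(S)}{8\eta^{3/2}} .
\]
The ingredients are the following.
\begin{itemize}
\item $\eta(t+1)\le \eta(t)+\rho_t Z(t)$.
\item The expected change of $\vol{S(t)}$ in pull voting is zero, since the degree-weighted sum is a martingale. Hence the linear term in the Taylor bounds \eqref{eq:taylor-upper} and \eqref{eq:taylor-lower-region} contributes nothing, and only the quadratic penalty on downward moves remains.
\item The bound $\Upsilon(S)\ge \frac{\dmin}{n}\abs{E(S,\ovl S)}\ge \frac{\dmin\,\Phi}{n}\,\eta$, which uses the definition of conductance and $\vol{S}\le m$.
\end{itemize}
Together these give a drift of at least $a/\sqrt{\eta}$ with $a\coloneqq \dmin\Phi/(8n)$. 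Here I deliberately avoid the Jensen and total-expectation manipulations in the second half of \autoref{claim:taylorclaim}; I only need the bound conditional on $\X{t}$.

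\emph{Step 2 (uniform drift and optional stopping).} Since $\eta(t)\le m$ always, the drift is at least $a/\sqrt{m}$ whenever $\eta(t)>0$. Let $T$ be the consensus time. Then
\[
Y(t)\coloneqq \sqrt{\eta(t\wedge T)}+\frac{a}{\sqrt m}\,(t\wedge T)
\]
is a nonnegative supermartingale. Applying \autoref{doob} to the bounded stopping time $T\wedge N$ gives $\frac{a}{\sqrt m}\E{T\wedge N}\le \sqrt{\eta(0)}$. Letting $N\to\infty$ by monotone convergence,
\[
\E{T}\le \frac{\sqrt{m\,\eta(0)}}{a}\le \frac{8n\sqrt{2\epsilon}\,m}{\dmin\Phi}.
\]
Substituting $m=\gamma n\dmin/2$ yields $O(\gamma n^2\sqrt{\epsilon}/\Phi)$, uniformly over all $S$ with $\mu(S)\le\epsilon$. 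This is the claimed bound on $\mathcal{T}^{2V}_{G}(\epsilon)$.

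\emph{Main obstacle.} The delicate point is Step 1 at moments when the minority and majority swap roles, and possibly when the two volumes are equal. I need $\eta(t+1)\le \eta(t)+\rho_tZ(t)$ and the zero-mean property of $\rho_tZ(t)$ to hold simultaneously. I would handle this by fixing, at time $t$, the set $S(t)$ as the one of smaller volume (ties broken arbitrarily). Then $\min\{\cdot,\cdot\}$ can only help, and $\E{Z(t)\mid \X{t}}=0$ holds exactly for the degree-weighted volume under asynchronous pull voting. I also need the Taylor inequality $\sqrt{1+x}\le 1+x-x^2/8$ on $[-1,0]$, which covers $x=-1$ (the minority vanishing). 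Everything else is bookkeeping.
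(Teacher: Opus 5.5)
Your argument is correct once one inequality is repaired. It also supplies a proof the paper does not give: the paper imports this statement from \cite{CooperR16}, and its only in-house use of the $\sqrt{\eta}$ potential is in the proof of \autoref{lem:dropbyone}.

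\textbf{The repair.} The inequality $\sqrt{1+x}\le 1+x-x^2/8$ on $[-1,0]$, taken from \eqref{eq:taylor-lower-region}, is false. On $[-1,0]$ one has $\sqrt{1+x}\ge 1+x$, and the counterexamples are immediate: at $x=-1$ the right-hand side is $-1/8$, and at $x=-1/2$ it is about $0.47$, below $\sqrt{1/2}\approx 0.71$. So your remark that it ``covers $x=-1$'' is exactly where it fails. Use the tangent line with the correct slope instead:
\[
\sqrt{1+x}\le 1+\tfrac{x}{2}-\tfrac{x^2}{8}\mathds{1}_{\{x<0\}}\qquad\text{for all } x\ge -1 .
\]
On $[-1,0]$ the difference is convex with value and derivative $0$ at $x=0$; for $x\ge0$ this is just concavity of the square root. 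The slope must be $1/2$ on both pieces. If you keep $1+x$ on $x\ge 0$, then after using $\E{x}=0$ a positive linear term $\tfrac12\E{x\,\mathds{1}_{\{x\ge 0\}}}$ survives. With slope $1/2$ throughout, $\E{Z\mid \X{t}}=0$ still removes the linear term, and you get exactly $\sqrt{\eta}-\Upsilon(S)/(8\eta^{3/2})$. Hence $a=\dmin\Phi/(8n)$, and Step 2 goes through unchanged, giving $\E{T}\le \sqrt{m\,\eta(0)}/a\le 4\sqrt{2}\,\gamma n^2\sqrt{\epsilon}/\Phi$.

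\textbf{Comparison with the paper's \autoref{lem:dropbyone}.} That proof uses the same one-step estimate. It then averages it via Jensen and the definition of $\tau_s$, obtaining a constant-probability bound $O(sn/(\Phi\dmin))$ that is linear in the initial minority volume $s$. You instead lower-bound the state-dependent drift $a/\sqrt{\eta}$ by the uniform $a/\sqrt{m}$ and apply optional stopping. This costs a factor $\sqrt{m/\eta(0)}$, which is precisely why the bound carries $\sqrt{\epsilon}$ rather than $\epsilon$. In return it bounds the expectation directly, which is what the statement asserts and what is summed in \eqref{xxz3z}. A constant-probability bound would not give an expectation bound with the right $\epsilon$-dependence without further work, since on failure $\eta$ may have grown towards $m$. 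Working conditionally on $\X{t}$ also lets you skip the Jensen and total-expectation steps of \autoref{claim:taylorclaim} altogether.
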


Our proof of \autoref{thm:main:version2} is based on the following lemma, 
which can be viewed as an extension of \autoref{lemma:phasescombined} to cover
also the case of smaller discrepancies,
and on the bound 
on the worst-case expected time $\mathcal{T}^{2V}_{G}(\epsilon)$ stated in \autoref{lem:CooperRivera}.

\begin{lemma}\label{lem:smallK}
For $k \ge 1$, let  
$\epsilon_k = e^{-\Phi k/16}$ and $T_k=80\tfrac{k \cdot n \cdot \log(kn)}{\Phi^2}$
(this value $T_k$ is set to align with $T$ in \autoref{lemma:phasescombined} for $c=1$).
Consider any configuration $\vec{x}$ with discrepancy at most $K$.
The worst-case expected time to reduce the discrepancy to $(\nicefrac{3}{4})K$ is 
$O\left(T_K+ \sum_{k=(\nicefrac{3}{4})K}^K \mathcal{T}^{2V}_{G}(\epsilon_k)\right)$.
\end{lemma}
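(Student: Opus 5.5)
We run the process for $T_K$ steps and then split according to whether the discrepancy has already dropped to $(\nicefrac{3}{4})K$. By \autoref{lemma:phasescombined} (with the constant $c$ chosen so that $T$ there equals $T_K$), this succeeds in expectation within $O(T_K)$ steps whenever $K\ge 32\log n/\Phi$. So the interesting regime is $K<32\log n/\Phi$, where \autoref{lemma:goodindexexists} no longer guarantees a frequently good index. The basic step is the same: a threshold $k$ is good as soon as $\abs{E(S_{k+1},\ovl S_k)}\ge \tfrac14\Phi\vol{S_{k+1}}$. If $k$ is bad, then \eqref{eq:badindex} gives $\vol{S_k}>(1+\tfrac34\Phi)\vol{S_{k+1}}$.

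\textbf{Step 1 (bad indices force small extreme sets).} Fix a step with $\vol{S_{K/2}}\le m$; otherwise mirror. Suppose every index $k\in[(\nicefrac12)K,(\nicefrac34)K]$ is bad. Iterating \eqref{eq:badindex} gives
\[
\vol{S_{3K/4}}\le (1+\tfrac34\Phi)^{-K/4}\vol{S_{K/2}}\le e^{-\Phi K/16}\cdot 2m .
\]
This uses $\log(1+x)\ge \tfrac23x$, which turns the exponent into $\tfrac{2}{3}\cdot\tfrac34\Phi\cdot \tfrac{K}{4}=\Phi K/8$, so the bound is comfortably below $\epsilon_K$. The same argument restricted to indices $\ge k$ shows that the set of nodes with opinion at least $k$ has volume fraction at most $\epsilon_k$ for the relevant $k\in[(\nicefrac34)K,K]$.

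\textbf{Step 2 (dichotomy over time).} Run the argument of \autoref{lemma:goodindexexists} and \autoref{lemma:goodindexdrops} over a window of $T_K$ steps. Either some index in $[(\nicefrac12)K,(\nicefrac34)K]$ is good in at least $T_K/4$ steps of the original or the mirrored process, or it is not.

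In the first case, \autoref{lemma:goodindexdrops} applies, and a union bound over $k$ shows that the corresponding potential vanishes with probability $1-o(1)$. The discrepancy is then at most $(\nicefrac34)K$.

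In the second case, by pigeonhole there is a step in the window at which all indices in the range are bad, in the original or the mirrored process. By Step 1, at that step the extreme opinion layers $K,K-1,\dots$ each have volume fraction at most $\epsilon_k$.

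\textbf{Step 3 (peeling extreme opinions by pull voting).} From such a configuration we eliminate the top opinions $K,K-1,\dots,(\nicefrac34)K$ one at a time. We use the reduction of \cite{CRS23}: eliminating the current extreme opinion $k$ is dominated by a 2-value pull voting instance whose minority is the set $S_k$ of nodes with opinion $\ge k$. This set has $\mu(S_k)\le\epsilon_k$, so the expected time is at most $\mathcal{T}^{2V}_G(\epsilon_k)$. The minority can only shrink, because $\vol{S_k}$ is a supermartingale. Hence $\mu(S_k)\le \epsilon_k$ continues to hold when opinion $k$ becomes the maximum, and summing the expected elimination times gives $\sum_{k=(\nicefrac34)K}^{K}\mathcal{T}^{2V}_G(\epsilon_k)$.

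A failed window costs $O(T_K)$ steps and happens with probability $o(1)$. Restarting gives a geometric number of windows, so the total expectation is $O\bigl(T_K+\sum_k\mathcal{T}^{2V}_G(\epsilon_k)\bigr)$.

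\textbf{Main obstacle.} The hard part is Step 3's monotonicity claim. The bound on $\vol{S_k}$ holds only in expectation at the moment all indices are bad; afterwards $S_k$ may grow by fluctuation. I would handle this with the worst-case-expected formulation. Via the coupling of \cite{CRS23}, the elimination time of the top opinion is bounded by the pull-voting time from the current minority $S_k$, whose volume is bounded at the stopping time of Step 2. That lets us use $\mathcal{T}^{2V}_G(\epsilon_k)$ for the first layer, and re-run the Step 1 bound for each subsequent layer when it becomes extreme.
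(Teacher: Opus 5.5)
Your proposal has a genuine gap, and it is the one you flag yourself in Step 3. At best, Steps 1--2 give you a single moment at which the set of nodes with opinion at least $(\nicefrac34)K$ has volume fraction at most $\epsilon_K$. Step 3, however, needs the smaller extreme class to have volume fraction at most $\epsilon_k$ each time the discrepancy has dropped and a new opinion $k$ becomes extreme. Nothing gives you this.

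The claim ``the minority can only shrink because $\vol{S_k}$ is a supermartingale'' is wrong twice. First, a supermartingale can increase. Second, for $k$ below the current maximum, $\vol{S_k}$ is not even a supermartingale. Its expected one-step change is $\frac1n$ times the number of edges between opinion $k-1$ and opinions $\ge k+1$, minus the number of edges between opinion $k$ and opinions $\le k-2$, and this can be positive. The reduction of \cite{CRS23} also only guarantees that one of the two extreme opinions disappears, so you are not peeling the top layers in order.

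Your fallback, re-running Step 1 when the next layer becomes extreme, requires the many-bad-indices condition to hold again at that later time, which is not implied. If you handle its failure by opening a fresh window of $T_K$ steps per phase, you pay $O(K\,T_K)$ instead of $O(T_K)$. A smaller point: pigeonhole does not yield a step at which all indices in $[(\nicefrac12)K,(\nicefrac34)K]$ are bad, only one with more than $K/8$ bad indices. This still gives volume fraction at most $\epsilon_K$, but with no slack.

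The missing idea is to make the dichotomy per phase and pool one side of it across all phases. Let $t_i$ be the first step with discrepancy $K-i$. Let $t'_i$ be the first step in $[t_i,t_{i+1})$ at which $\mu(t)$, the smaller volume fraction of the two current extreme opinions, is at most $\epsilon_{K-i}$. From $t'_i$ on, the \cite{CRS23} reduction and the strong Markov property give $\E{t_{i+1}-t'_i}\le\mathcal{T}^{2V}_G(\epsilon_{K-i})$, regardless of what happened before.

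Every step in $\bigcup_i[t_i,t'_i)$ has $\mu(t)>\epsilon_K$, hence fewer than $K/8$ bad indices (\autoref{lemma:goodindexexists2}). These steps are charged to a single budget $T=T_K$ shared by all $K/4$ phases. The pooling is legitimate because $\Psi^{(+)}_k$ is a supermartingale in every step (\autoref{lemma:potentialdrop}(1)). So good steps scattered over an arbitrarily long window $\tilde T$, interleaved with pull-voting stretches, still accumulate (\autoref{lemma:goodindexdrops2}).

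Take $\tilde T=T+2\sum_i\mathcal{T}^{2V}_G(\epsilon_{K-i})$. Markov's inequality on the pull-voting parts and a union bound over $k$ then show that the discrepancy reaches $(\nicefrac34)K$ within $\tilde T$ steps with constant probability, and restarting gives the bound. In this argument no extreme set ever has to stay small.
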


We use throughout this section the parameters $\epsilon_k$ and $T_k$ defined in this lemma. 
For initial discrepancies greater than $\tfrac{32\cdot \log n}{\Phi}$, 
this lemma is essentially 
a version of \autoref{lemma:phasescombined}, 
with the difference that it refers 
to the expectation rather than 
the high probability.
Thus, while the lemma and its proof below apply to any~$K$, they provide new insight 
(beyond what we showed in \autoref{lemma:phasescombined}) only for
initial discrepancies smaller than this threshold.

Let $\mu(t) = \min\{ \mu(V_{k'}(t)), \mu(V_{k''}(t))\}$, where
$V_{k'}(t)$ and $V_{k''}(t)$ are the support sets for the highest and lowest 
opinions $k'$ and $k''$ still present in the configuration in step $t$.
To prove \autoref{lem:smallK}, we will need two additional lemmas.
\autoref{lemma:goodindexexists2} extends \autoref{lemma:goodindexexists} to smaller discrepancies, ascertaining 
the existence of good indices also for small values of $K$, providing that the support for the extreme 
opinions is not two small. 
We prove \autoref{lemma:goodindexexists2} by showing how the proof of \autoref{lemma:goodindexexists}
should be adapted.
\autoref{lemma:goodindexdrops2} is a slight generalization of \autoref{lemma:goodindexdrops},
and we omit its proof as it is essentially the proof of \autoref{lemma:goodindexdrops} 
with obvious notational adjustments.

\begin{lemma}
\label{lemma:goodindexexists2}
Let $T \le \Tilde{T}$ and consider a sequence of $\Tilde{T}$ steps $0,1,\ldots, \tilde{T}-1$
in the \ProtAV process starting from an arbitrary configuration with discrepancy at most $K$.
If $\mu(t) > \epsilon_K$ in at least $T$ of these steps, then there exists an index
$k^{\star} \in 
\left[(\nicefrac{1}{2})K,(\nicefrac{3}{4})K\right]$
that is good in at least $T/4$ steps in the original process or at least $T/4$ steps in the mirrored process.
\end{lemma}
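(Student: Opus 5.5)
Our plan is to adapt the proof of \autoref{lemma:goodindexexists} step by step, using the hypothesis $\mu(t) > \epsilon_K$ in place of $K \ge \tfrac{32\log n}{\Phi}$. Let $T^\circ$ be the set of at least $T$ steps with $\mu(t)>\epsilon_K$. As before, in each such step at least one of $S_{K/2}(t)$ and $\overline S_{K/2}(t)$ has volume at most $m$. Pigeonholing over $T^\circ$ gives a subset $T'\subseteq T^\circ$ with $|T'|\ge T/2$ on which, say, $\vol{S_{K/2}(t)}\le m$; the other case is handled by the mirrored process. It then suffices to find an index $k^\star\in[K/2,3K/4]$ that is bad in at most $|T'|/2$ of the steps in $T'$.

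The key change is in bounding the number of bad indices per step. Fix $t\in T'$ and let $k_0>k_1>\dots>k_\ell$ be the bad indices in $[K/2,3K/4]$. Inequality~\eqref{eq:badindex} still holds for each of them, and iterating it exactly as before gives
\[
m \ \ge\ \vol{S_{k_\ell}(t)} \ >\ \left(1+\tfrac34\Phi\right)^{\ell}\vol{S_{k_0}(t)} .
\]
We now no longer use the crude bound $\vol{S_{k_0}(t)}\ge 1$. Instead, $S_{k_0}(t)$ contains the support of the highest opinion present, and we claim this has volume at least $2m\,\mu(t) > 2m\,\epsilon_K$. Strictly speaking, $\mu(t)$ is the minimum over the two extreme supports, but the top support contains the top one in either case. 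Hence $(1+\tfrac34\Phi)^\ell < 1/(2\epsilon_K) \le e^{\Phi K/16}$. Using $\log(1+x)\ge \tfrac23 x$, this yields $\ell < \tfrac{\Phi K/16}{\Phi/2} = K/8$.

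So every step $t\in T'$ has at most $K/8$ bad indices in $[K/2,3K/4]$. We then count pairs $(k,t)$ with $k\in B(t)$ as in \autoref{lemma:goodindexexists}: there are at most $(K/8)|T'|$ of them, spread over about $K/4$ indices. Hence some $k^\star$ is bad in at most $|T'|/2$ steps, so it is good in at least $|T'|/2\ge T/4$ steps.

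We expect the main obstacle to be the constants and rounding. The interval $[K/2,3K/4]$ contains about $K/4+1$ integers, which becomes delicate when $K$ is small. The endpoint behaviour also needs care, namely whether $k_0$ could be a top index where $S_{k_0}$ is exactly the top support. We also need $\epsilon_K\le 1/2$, or we handle the degenerate case separately. If the constant $16$ in $\epsilon_K$ turns out too tight, we would compute $\ell$ more carefully using the ratio between $\vol{S_{k_0}}$ and $m$ rather than $2m$. That gains a factor of $2$ and absorbs the slack.
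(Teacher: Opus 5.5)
Your proposal is correct and is essentially the paper's proof. You restrict to the steps with $\mu(t)>\epsilon_K$ and rerun the argument of \autoref{lemma:goodindexexists}, replacing $\vol{S_{k_0}(t)}\ge 1$ by $\vol{S_{k_0}(t)}\ge 2m\,\mu(t)>2m\,\epsilon_K$; this holds because a bad $k_0$ forces $S_{k_0+1}(t)\neq\emptyset$, so $S_{k_0}(t)$ contains the top support. That gives at most $K/8$ bad indices per step, and you finish with the same double counting. Bounding $\vol{S_{k_\ell}(t)}$ by $m$ instead of the paper's $2m$ is valid on the restricted steps where $\vol{S_{K/2}(t)}\le m$, and it supplies the slack for the off-by-one and rounding issues you flag.
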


\begin{proof}
Let $\Upsilon\subseteq \tilde{T}$ be a set of $T$ steps such that for each $t\in \Upsilon$, 
$\mu(t) > \epsilon_K$.
We take any step $t\in\Upsilon$ and follow exactly the proof of \autoref{lemma:goodindexexists}
up to the point where we establish that
\begin{equation}\label{eqn:jk213}
2m \ge \vol{S_{k_\ell}(t)}
>
\left(1+\tfrac{3}{4}\Phi\right)^\ell\cdot \vol{S_{k_0}(t)},
\end{equation}
where 
$k_0 > k_1 > \cdots > k_\ell$ are the bad indices in $\left[(\nicefrac{1}{2})K,(\nicefrac{3}{4})K\right]$
in this step.
At this point in the proof of \autoref{lemma:goodindexexists} we use the fact that $\vol{S_{k_0}(t)}\ge 1$.
Here we deviate and use instead the assumption that $\mu(t) > \epsilon_K$.
This implies that $\vol{S_{k_0}(t)}/(2m) \ge \mu(t) > \epsilon_K$, so~\eqref{eqn:jk213}
implies
\[
\left(1+\tfrac{3}{4}\Phi\right)^\ell\epsilon_K < 1,
\]
giving
\[
\ell < \frac{\Phi K/16}{\log (1+(\nicefrac{3}{4})\Phi}
\le \tfrac{K}{8}.
\]
Thus, as in the proof of \autoref{lemma:goodindexexists}, there are at most $K/8$ bad indices in step $t$, 
so,
as per the concluding part of the proof of \autoref{lemma:goodindexexists}, 
there is an index in $k^\star \in \left[(\nicefrac{1}{2})K,(\nicefrac{3}{4})K\right]$ 
in the original or mirrored process which is good in at least $T/4$ steps in $[T-1]$.
\end{proof}

\begin{lemma}
\label{lemma:goodindexdrops2}
Let $T = 20c \cdot K \tfrac{n\log(Kn)}{\Phi^2(G)}$ for some $c>2$,
let $\Tilde{T} \ge T$, and consider a sequence of $\,\Tilde{T}$ steps $0,1,\ldots, \tilde{T}-1$
in the \ProtAV process starting from an arbitrary configuration with discrepancy at most $K$.
Fix $k\in[K/2,K]$,
and define $\tau_k(t) \coloneqq |\{\tau\in[t]: \text{$k$ is good in step $\tau$}\}|$.
Then
\[
\pr{\Psi^{(+)}_k\left(\X{\tilde{T}}\right) > 0 \;\wedge\; \tau_k(\tilde{T}-1) \ge T/4 }
\le \frac{1}{(Kn)^{c-2}}.
\]
The same holds for the mirrored process.
\end{lemma}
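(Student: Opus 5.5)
The plan is to reuse the supermartingale argument from the proof of \autoref{lemma:goodindexdrops} essentially verbatim. The only change is that the horizon is now $\tilde T \ge T$, not $T$, and the drift rate is taken with $K$ in place of $K-k$.

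First, fix $k\in[K/2,K]$ and set $\delta \coloneqq \frac{\Phi(G)^2}{5Kn}$. Since $K-k\le K$, \autoref{lemma:potentialdrop}(2) gives, in every $k$-good step,
\[
\E{\Psi^{(+)}_k(\X{t+1})\mid \X{t}=\vec x}\le \left(1-\frac{\Phi(G)^2}{5(K-k)n}\right)\Psi^{(+)}_k(\vec x)\le (1-\delta)\Psi^{(+)}_k(\vec x).
\]
The case $k=K$ is trivial, because the potential is identically $0$. In every step, \autoref{lemma:potentialdrop}(1) gives monotonicity in expectation. Next, define
\[
\psi(t)=\Psi^{(+)}_k(\X{t})(1-\delta)^{-\tau_k(t-1)}.
\]
The same two-case computation as before (good step versus non-good step, pulling the $\mathcal F(t)$-measurable factor out of the conditional expectation) shows that $\psi$ is a nonnegative supermartingale on all of $0,\dots,\tilde T$. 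Nothing in that computation depended on the horizon.

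Second, define the bounded stopping time
\[
T'=\min\bigl\{\tilde T,\ \inf\{t:\tau_k(t-1)\ge T/4\}\bigr\}.
\]
By \autoref{doob}, $\E{\psi(T')}\le\psi(0)\le K^2\cdot 2m\le K^2n^2$. Let $B$ be the event in the statement. On $B$, the count $\tau_k$ reaches $T/4$ by time $\tilde T$, so $T'$ is the first time $t$ with $\tau_k(t-1)\ge T/4$, and hence $\tau_k(T'-1)\ge T/4$.

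The one point needing care is that on $B$ we also have $\Psi^{(+)}_k(\X{T'})\ge1$. Positivity is only assumed at the final time $\tilde T$, so I need the potential to be pathwise non-increasing, not just in expectation. This holds: $\Psi^{(+)}_k$ is integral, and if it ever hits $0$ then all opinions are at most $k$. A node can then only move toward a neighbour's opinion, which is itself at most $k$, so every opinion stays at most $k$ forever and the potential remains $0$. Hence positivity at time $\tilde T$ forces positivity at every earlier time, in particular at $T'$. This observation is the main (though mild) obstacle, since it is exactly what allows the stopping time to be taken before $\tilde T$.

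Combining, on $B$ we get $\psi(T')\ge(1-\delta)^{-T/4}$. Therefore
\[
\pr{B}\le K^2n^2(1-\delta)^{T/4}\le K^2n^2\exp\!\left(-\frac{\Phi^2}{5Kn}\cdot 5cK\frac{n\log(Kn)}{\Phi^2}\right)=(Kn)^{-(c-2)}.
\]
For the mirrored process, apply the identical argument to $\vec X'$, using the mirrored version of \autoref{lemma:potentialdrop}. The absorption property at $0$ holds there too, since the mirrored process is again an instance of \ProtAV with values in $[K]$.
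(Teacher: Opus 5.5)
Your proposal is correct and matches the paper's intended proof, which is omitted as ``the proof of \autoref{lemma:goodindexdrops} with obvious notational adjustments'': the same supermartingale $\psi$ and the same bounded stopping time with Doob's theorem, run over horizon $\tilde T$ with $K$ in place of $K-k$, exactly as you do. Your absorption-at-zero argument usefully makes explicit why $\Psi^{(+)}_k(\X{T'})\ge 1$ on the event, a step the paper leaves implicit. Note, though, that what you prove and need is that $0$ is absorbing, not that $\Psi^{(+)}_k$ is pathwise non-increasing, which is false in general: a node at or above $k$ moving up raises the potential.
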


\begin{proof}[Proof of \autoref{lem:smallK}]
Consider the \ProtAV process of eliminating extreme opinions one by one, from $K$ opinions to 
$(\nicefrac{3}{4})K$ opinions. 
To keep the notation simple, we assume that the initial discrepancy is $K$ rather than at most $K$.
For $0 \le i\le K/4$, define $t_i$ as the first step when the discrepancy is $K-i$.
Hence $t_0 = 0$ and $t_{i+1} > t_i$ (as in one step the discrepancy can reduce only by $1$).  
Furthermore, define $t'_i$ as the first step $t$ in $\{t_i, t_i+1, \ldots, t_{i+1}-1 \}$
when $\mu(t) \le \epsilon_{K-i}$, or $t'_i = t_{i+1}$, if 
$\mu(t)$ remains above $\epsilon_{K-i}$ in all these steps.
(If $\epsilon_{K-i}$ is particularly small, then $\mu(t)$
may remain above $\epsilon_{K-i}$ until, and including, step $t_{i+1}-1$, when 
one of the two extreme opinions disappears and the discrepancy in 
the next step $t_{i+1}$ is $K-i-1$ for the first time.)

With this notation, the steps $[t_i, t_{i+1})$ is the phase of the computation when the discrepancy reduces from $K-i$
to $K-i-1$. This phase is subdivided into two parts.
First, in steps $[t_i, t'_i)$, the parameter $\mu(t)$ -- the fractional volume of the smaller 
of the two extreme opinions -- is reduced to $\epsilon_{K-i}$, and then, 
in steps $[t'_i, t_{i+1})$, one extreme opinion is eliminated.

We bound the expectation of the length of the second part of the phase using 
the reduction from the process of
eliminating one extreme opinion 
in the context of the incremental voting to the process of reaching consensus in the 2-value pull voting, 
as described in~\cite{CRS23}.
Thus we have,
\begin{equation}\label{gdjs8s3r}
\E{t_{i+1} - t'_i} \le \mathcal{T}^{2V}_{G}(\epsilon_{K-i}).
\end{equation}

We bound the combined length of the first parts of the phases using 
\autoref{lemma:goodindexexists2} and \autoref{lemma:goodindexdrops2},
taking $T$ as defined in \autoref{lemma:goodindexdrops2} with $c=4$, so $T = T_K$
for $T_K$ defined in the statement of \autoref{lem:smallK},
and $\tilde{T} = T + 2\cdot\sum_{i=0}^{(K/4)-1}\mathcal{T}^{2V}_{G}(\epsilon_{K-i})$.
For $t_{K/4}$, the first step when the discrepancy drops to $(\nicefrac{3}{4})K$,
we have,
\begin{align}
&\pr{t_{K/4} > \tilde{T}}
= \pr{[t_{K/4}-1] \supseteq [\tilde{T}-1]}\nonumber\\
&\le \pr{\,\left\{\,\left|\,\bigcup_{i=0}^{K/4 - 1} [t_i,t'_i) \cap [\tilde{T}-1]\,\right| \ge T\right\}
\; \cap \; \left\{t_{K/4} > \tilde{T}\right\}}
+ \pr{\sum_{i=0}^{K/4 - 1} (t_{i+1} - t'_{i}) \ge 
2\cdot\!\!\sum_{i=0}^{(K/4)-1}\mathcal{T}^{2V}_{G}(\epsilon_{K-i})}
\label{kl4ds2a}\\
&\le \pr{\,\left\{\,|\{t\in[\tilde{T}-1]: \mu(t)>\epsilon_K \}| \ge T\,\right\} 
\; \cap \; \left\{t_{K/4} > \tilde{T}\right\}}
+ \frac{1}{2} \label{hfvbhfv7pw}
\end{align}
The bound of $1/2$ on the second term on line~\eqref{kl4ds2a}
follows from \eqref{gdjs8s3r}:
the probability that the value of a random variable is greater than twice its expectation is less than $1/2$.
The bound on the first term in~\eqref{kl4ds2a} comes from the definition of the steps $t_i$ and $t'_i$:
for each $t\in [t_i,t'_i)$, we have $\mu(t) > \epsilon_{K-i} \ge \epsilon_K$.
We continue from this bound, introducing notation $\Vec{X}_\mathcal{P}(t)$,
where $\mathcal{P}\in\{\text{original, mirrored}\}$, 
to refer to the configuration at step $t$ in process $\mathcal{P}$.
\begin{align}
& \textbf{Pr}\left[\,\left\{\,|\{t\in[\tilde{T}-1]: \mu(t)>\epsilon_K \}| \ge T\,\right\} \right. 
\; \cap \;\left.
 \left\{t_{K/4} > \tilde{T}\right\}\right]\nonumber\\
&\;\;\; \le \textbf{Pr}\left[\,\left\{\,|\{t\in[\tilde{T}-1]: \mu(t)>\epsilon_K \}| \ge T\, \right\} \right. \nonumber\\
&\;\;\;\;\;\;\;\;\;\;\;\;\;
\; \cap \;
\left. \left\{\, \forall k\in[(\nicefrac{1}{2})K, (\nicefrac{3}{4})K]\,
\forall\mathcal{P}\in\{\text{original, mirrored}\}:\,
\Psi^{(+)}_k\left(\Vec{X}_\mathcal{P}({\tilde{T}})\right) > 0 \, \right\}\right]\,\nonumber\\
&\;\;\; \le \textbf{Pr}\left[\, \left\{\, \exists k \in \left[(\nicefrac{1}{2})K,(\nicefrac{3}{4})K\right]\,
\exists \mathcal{P}\in\{\text{original, mirrored}\}:\,
\text{$k$ is good in $T/4$ steps in $[\tilde{T}-1]$ in process $\mathcal{P}$} \right\}\right. \label{kl3a}\\
&\;\;\;\;\;\;\;\;\;\;\;\;\;\; \cap \,
\left. \left\{\, \forall k\in[(\nicefrac{1}{2})K, (\nicefrac{3}{4})K],\,
\forall\mathcal{P}\in\{\text{original, mirrored}\}:\,
\Psi^{(+)}_k\left(\Vec{X}_\mathcal{P}({\tilde{T}})\right) > 0 \, \right\}\, \right]\nonumber\\
&\;\;\; \le \textbf{Pr}\left[\, \exists k \in \left[(\nicefrac{1}{2})K,(\nicefrac{3}{4})K\right]\,
\exists \mathcal{P}\in\{\text{original, mirrored}\}:\,
\text{$k$ is good in $T/4$ steps in $[\tilde{T}-1]$ in $\mathcal{P}$}\:\right.\nonumber\\
& \;\;\; \left.
\text{\hspace{23em} and}\; \Psi^{(+)}_k\left(\Vec{X}_\mathcal{P}({\tilde{T}})\right) > 0\, \right]\nonumber\\
&\;\;\; \le \frac{K}{4}\cdot \frac{1}{(Kn)^2} = O\left( \frac{1}{n^2}\right). \label{dfgg3s}
\end{align}
Inequality~\eqref{kl3a} follows from \autoref{lemma:goodindexexists2}.
Inequality~\eqref{dfgg3s} follows from the union bound and \autoref{lemma:goodindexdrops2}.
The bounds~\eqref{hfvbhfv7pw} and~\eqref{dfgg3s}
imply that with probability at least $1/3$, in $\tilde{T}$ steps 
the discrepancy is reduced to at most $(\nicefrac{3}{4})K$, so, using the argument
of restarting in case of failure, the expected time of reducing the discrepancy to $(\nicefrac{3}{4})K$
is $O(\tilde{T})$.
\end{proof}

\begin{proof}[Proof of \autoref{thm:main:version2}]
Using \autoref{lem:smallK},
the expectation of the completion time in the \ProtAV process
is at most of the following order.
\begin{align}
\sum_{i=0}^{\log_{4/3}K} \left(T_{(\nicefrac{3}{4})^i K} \,
+ \sum_{k=(\nicefrac{3}{4})^{i+1}K}^{(\nicefrac{3}{4})^{i}K} \mathcal{T}^{2V}_{G}(\epsilon_k)\right)
&\le \sum_{i=0}^{\log_{4/3}K} 80\frac{(\nicefrac{3}{4})^i K n \log(Kn)}{\Phi^2}
+ \sum_{k=0}^{K} \mathcal{T}^{2V}_{G}(\epsilon_k) \nonumber\\
&= O\left( \frac{K n \log(Kn)}{\Phi^2}\right)
+ \sum_{k=0}^{K} \mathcal{T}^{2V}_{G}(\epsilon_k). \label{yyyfdd}
\end{align}
For the final sum above, the bound stated in \autoref{lem:CooperRivera} gives
\begin{equation}\label{xxz3z}
\sum_{k=1}^{K} \mathcal{T}^{2V}_{G}(\epsilon_k) 
\; = \; O\left(\frac{\gamma n^2}{\Phi}\sum_{k=1}^{K} \sqrt{\epsilon_k})\right),
\end{equation}
and we have
\begin{equation}\label{gg5ff}
\sum_{k=1}^{K}\sqrt{\epsilon_k}
= \sum_{k=1}^{K} e^{-\Phi k/32}
\le \frac{1}{1-e^{-\Phi/32}}
= O\left(\frac{1}{\Phi}\right). 
\end{equation}
Putting together~\eqref{yyyfdd}, \eqref{xxz3z} and~\eqref{gg5ff},
we get the bound on the expectation of the completion time in the \ProtAV process stated in the theorem.
\end{proof}

\bibliography{bibliography}

\end{document}